%
%
%
%

\documentclass[envcountsame,runningheads]{llncs}
%
%

\usepackage{geometry}
\geometry{
  a4paper,         
  textwidth=13cm,  
  textheight=19.3cm, 
  heightrounded,   
  hratio=1:1,      
  vratio=2:3,      
}


\usepackage{url}

\usepackage{graphicx, 
multicol, 
amssymb, 
amsmath, 
bussproofs,
xcolor, 
comment, 
color, 
enumerate,
enumitem,
proof,
upgreek,
}
\usepackage{cleveref}
\usepackage[all]{xy}

\usepackage{tikz}
\usetikzlibrary{trees}
\usetikzlibrary{fit}
\usepackage{tikz}
\usetikzlibrary{positioning,arrows,calc,decorations.markings}
\tikzset{
modal/.style={>=stealth',shorten >=1pt,shorten <=1pt,auto,node distance=1.5cm,semithick},
world/.style={circle,draw,minimum size=0.5cm,fill=gray!15},
point/.style={circle,draw,inner sep=0.5mm,fill=black},
reflexive above/.style={->,loop,looseness=7,in=120,out=60},
reflexive below/.style={->,loop,looseness=7,in=240,out=300},
reflexive left/.style={->,loop,looseness=7,in=150,out=210},
reflexive right/.style={->,loop,looseness=7,in=30,out=330}
}

\definecolor{tim}{RGB}{0, 0, 250}

\newcommand{\ifandonlyif}{\textit{iff} }
\newcommand{\iffi}{\textit{iff} }
\newcommand{\dfn}{Definition}
\newcommand{\fig}{Figure}
\newcommand{\figs}{Figures}
\newcommand{\sect}{Section}
\newcommand{\sects}{Sections}
\newcommand{\prp}{Proposition}
\newcommand{\thm}{Theorem}
\newcommand{\rmk}{Remark}
\newcommand{\eg}{Example}

\newcommand{\ih}{IH }
\newcommand{\lem}{Lemma}

\def\phi{\varphi}


\def\imp{\rightarrow}

\def\g{[\mathsf{G}]} 
\def\h{[\mathsf{H}]} 
\def\f{\langle\mathsf{F}\rangle} 
\def\p{\langle\mathsf{P}\rangle} 



\newcommand{\dneg}{{\ast}}
\newcommand{\bull}{{\bullet}}
\newcommand{\empdata}{\emptyset}

\newcommand{\coni}{\mathsf{(P_{1})}}
\newcommand{\conii}{\mathsf{(P_{2})}}
\newcommand{\coniii}{\mathsf{(P_{3})}}
\newcommand{\coniv}{\mathsf{(P_{4})}}
\newcommand{\conv}{\mathsf{(P_{5})}}
\newcommand{\convi}{\mathsf{(P_{6})}}
\newcommand{\convii}{\mathsf{(P_{7})}}

\newcommand{\grph}[1]{G(#1)}
\newcommand{\rest}{\restriction}


\newcommand{\lang}{\mathcal{L}}

\newcommand{\langatoms}{\mathcal{L}_{\land,\f,\p}}

\newcommand{\kt}{\mathsf{K_{t}}}
\newcommand{\ktp}{\mathsf{K_{t}P}}

\newcommand{\id}{(id)}

\def\wk{(w)}
\def\swk{(\bar{w})}
\def\slsub{(\bar{ls})}

\def\ctrr{(c_{r})}

\def\ctrl{(c_{l})}

\def\cut{(cut)}

\newcommand{\refl}{(ref)}

\def\negl{(\neg_{l})}
\def\negr{(\neg_{r})}

\def\botl{(\bot_{l})}
\def\topr{(\top_{r})}
\def\botr{(\bot_{r})}

\def\topl{(\top_{l})}

\def\negr{(\neg_{r})}
\def\negl{(\neg_{l})}
\newcommand{\disr}{(\vee_{r})}
\newcommand{\conr}{(\wedge_{r})}
\newcommand{\disl}{(\vee_{l})}
\newcommand{\conl}{(\wedge_{l})}
\newcommand{\impr}{(\rightarrow_{r})}
\newcommand{\impl}{(\rightarrow_{l})}
\def\gr{(\g_{r})}

\def\gld{(\g_{l})}
\def\hld{(\h_{l})}
\def\frd{(\f_{r})}
\def\prd{(\p_{r})}

\def\gl{(\g_{l})}

\def\pr{(\p_{r})}
\def\prii{(\p_{r2})}
\def\pl{(\p_{l})}
\def\hr{(\h_{r})}
\def\hl{(\h_{l})}

\def\fr{(\f_{r})}

\def\fl{(\f_{l})}
\newcommand{\ptsrd}{(pt_{\dseq})}
\newcommand{\ptsrl}{(pt_{\lseq})}

\def\sar{\Rightarrow}
\def\dar{\vdash}
\newcommand{\srus}{\mathsf{SR}}
\newcommand{\axs}{\mathsf{Ax}}

\newcommand{\dri}{(\delta_{1})}
\newcommand{\drii}{(\delta_{2})}
\newcommand{\driii}{(\delta_{3})}
\newcommand{\driv}{(\delta_{4})}
\newcommand{\drv}{(\delta_{5})}
\newcommand{\drvi}{(\delta_{6})}
\newcommand{\drvii}{(\delta_{7})}
\newcommand{\drviii}{(\delta_{8})}
\newcommand{\drix}{(\delta_{9})}

\newcommand{\deri}{(\rho_{1})}
\newcommand{\derii}{(\rho_{2})}
\newcommand{\deriii}{(\rho_{3})}
\newcommand{\deriv}{(\rho_{4})}
\newcommand{\derv}{(\rho_{5})}

\newcommand{\lsub}{(ls)}

\newcommand{\il}{(I_{l})}
\newcommand{\ir}{(I_{r})}
\newcommand{\ql}{(q_{l})}
\newcommand{\qr}{(q_{r})}
\newcommand{\dwl}{(w_{l})}
\newcommand{\dwr}{(w_{r})}
\newcommand{\asl}{(a_{l})}
\newcommand{\asr}{(a_{r})}
\newcommand{\pml}{(p_{l})}
\newcommand{\pmr}{(p_{r})}
\newcommand{\dcl}{(c_{l})}
\newcommand{\dcr}{(c_{r})}
\newcommand{\ml}{(m_{l})}
\newcommand{\mr}{(m_{r})}

\newcommand{\vA}{\dot{A}}
\newcommand{\vB}{\dot{B}}
\newcommand{\vC}{\dot{C}}
\newcommand{\vX}{\dot{X}}
\newcommand{\vP}{\dot{p}}
\newcommand{\vQ}{\dot{q}}
\newcommand{\vR}{\dot{r}}
\newcommand{\vY}{\dot{Y}}
\newcommand{\vZ}{\dot{Z}}
\newcommand{\vW}{\dot{W}}
\newcommand{\vL}{\dot{\lseq}}
\newcommand{\sls}{\dot{\Lambda}}
\newcommand{\ssX}{\mathcal{X}}
\newcommand{\ssY}{\mathcal{Y}}
\newcommand{\ssZ}{\mathcal{Z}}
\newcommand{\ssF}{\mathcal{F}}

\newcommand{\lw}{\dot{w}}
\newcommand{\lu}{\dot{u}}
\newcommand{\lv}{\dot{v}}
\newcommand{\lz}{\dot{z}}
\newcommand{\lx}{\dot{x}}
\newcommand{\ly}{\dot{y}}
\newcommand{\lr}{\dot{e}}
\newcommand{\lo}{\dot{o}}
\newcommand{\vs}{\dot{s}}

\newcommand{\ux}{\dot{X}}
\newcommand{\uy}{\dot{Y}}
\newcommand{\uz}{\dot{Z}}

\newcommand{\rx}{X}
\newcommand{\ry}{Y}
\newcommand{\rz}{Z}

\newcommand{\lseq}{\lambda} 
\newcommand{\dseq}{\delta} 
\newcommand{\ulseq}{\dot{\lseq}}
\newcommand{\varlseq}{\dot{\lseq}}


\newcommand{\np}{\mathrm{NP}}

\newcommand{\ptime}{\mathrm{PTIME}}


\def\dll{\mathfrak{L}_{1}}
\def\dlr{\mathfrak{L}_{2}}
\def\dl{\mathfrak{L}}
\def\ld{\mathfrak{D}}
\def\ldl{\mathfrak{D}_{1}}
\def\ldr{\mathfrak{D}_{2}}
\newcommand{\sub}{\sigma}
\newcommand{\subdl}{\sub^{\dl}}
\newcommand{\subld}{\sub^{\ld}}
\def\lt{\tau_{1}}
\def\rt{\tau_{2}}
\newcommand{\ftd}{\tau}

\newcommand{\seqcomp}{\otimes}
\newcommand{\ptcomp}[1]{\oplus_{#1}}

\newcommand{\wdth}[1]{w(#1)}
\newcommand{\lgth}[1]{\ell(#1)}
\newcommand{\qty}[1]{q(#1)}
\newcommand{\flgth}[1]{\ell(#1)}
\newcommand{\size}[1]{s(#1)}
\newcommand{\subseq}[2]{|^{#1}_{#2}}
\newcommand{\lgthdseq}[1]{\ell(#1)}

\newcommand{\dequiv}{\equiv_{\delta}}
\newcommand{\sst}[1]{\mathcal{S}(#1)}


\newcommand{\propset}{\mathrm{Prop}}
\newcommand{\lab}{\mathrm{Lab}}

\newcommand{\paxs}{\mathsf{P}}

\newcommand{\dkt}{\mathsf{DK_{t}}}
\newcommand{\dktp}{\mathsf{DK_{t}P}}

\newcommand{\gtkt}{\mathsf{G3K_{t}}}
\newcommand{\gtktii}{\mathsf{LK_{t}}}
\newcommand{\gtktp}{\mathsf{G3K_{t}P}}

\newcommand{\bigocirc}{\scalebox{1.5}{$\bigcirc$}}
\newcommand{\iso}{\cong}

\def\rel{\mathcal{R}}

\def\ru{(r)}


\newcommand{\scomp}{\otimes}
\newcommand{\hil}{\mathsf{H}}

\newcommand{\lproof}{\Uppi}
\newcommand{\dproof}{\Uppi}
\newcommand{\proves}[3]{#1, #2 \Vdash #3}

\begin{document}

\title{Realizing the Maximal Analytic Display Fragment of Labeled Sequent Calculi for Tense Logics\thanks{This research was supported by the European Research Council European Research Council (Grant Agreement no. 771779, DeciGUT).}}

\titlerunning{Realizing the Maximal Analytic Display Fragment of Labeled Sequent Calculi}

\author{Tim S. Lyon\orcidID{0000-0003-3214-0828}}

\authorrunning{Tim S. Lyon}

\institute{Computational Logic Group, Institute of Artificial Intelligence, Technische Universit\"at Dresden, Germany  \\ \email{timothy\_stephen.lyon@tu-dresden.de}}

\maketitle              

\begin{abstract}

We define and study translations between the maximal class of analytic display calculi for tense logics and labeled sequent calculi, thus solving an open problem about the translatability of proofs between the two formalisms. In particular, we provide $\ptime$ translations that map cut-free display proofs to and from special cut-free labeled proofs, which we dub `strict' labeled proofs. This identifies the space of cut-free display proofs with a polynomially equivalent subspace of labeled proofs, showing how calculi within the two formalisms polynomially simulate one another. We analyze the relative sizes of proofs under this translation, finding that display proofs become polynomially shorter when translated to strict labeled proofs, though with a potential increase in the length of sequents; in the reverse translation, strict labeled proofs may become polynomially larger when translated into display proofs. In order to achieve our results, we formulate labeled sequent calculi in a new way that views rules as `templates', which are instantiated with substitutions to obtain rule applications; we also provide the first definition of primitive tense structural rules within the labeled sequent formalism. Therefore, our formulation of labeled calculi more closely resembles how display calculi are defined for tense logics, which permits a more fine-grained analysis of rules, substitutions, and translations. This work establishes that every analytic display calculus for a tense logic can be viewed as a labeled sequent calculus, showing conclusively that the labeled formalism subsumes and extends the display formalism in the setting of primitive tense logics.

\keywords{Analyticity \and Display calculus \and Labeled calculus \and Primitive tense axioms \and Proof complexity \and Proof theory \and Sequent \and Tense logic.}
\end{abstract}

\section{Introduction}

Logical methods are at the forefront of research in computer science, artificial intelligence, and formal philosophy, giving rise to a diverse number of logics capable of executing reasoning fine-tuned for specific application scenarios. Logics have found extensive, meaningful applications in various domains such as the verification of software~\cite{McM18}, in ontology mediated querying~\cite{BaaHorLutSat17}, and in explainable AI~\cite{BerStr22}. Analytic proof systems are of chief importance to logics, being used to facilitate reasoning with logics, to establish non-trivial properties, and to design automated reasoning methods.

An analytic calculus is a collection of inference rules that generates proofs satisfying the \emph{subformula property}, i.e. every formula occurring in a proof is a subformula of the conclusion of the proof. This is a powerful constraint on the shape of proofs that has been exploited in various ways, e.g. in writing decision procedures~\cite{Dyc92,Gen35a,Gen35b} and in computing interpolants~\cite{LyoTiuGorClo20,Mae60}. One of the dominant formalisms (alongside tableaux~\cite{Fit72,Smu68}) for constructing analytic calculi is Gentzen's sequent formalism. Typically, sequent systems include a \emph{cut} rule, which, although helpful for establishing completeness (among other uses), deletes formulae from the premises to the conclusion when applied. This has the effect that proofs in a sequent calculus with cut do not exhibit the desirable subformula property, 
though, this property can be reclaimed by establishing cut-elimination. That is, one shows that the cut rule can be eliminated from any derivation without affecting the conclusion derived, and therefore, is superfluous in the underlying sequent system. 

Calculi built within the sequent formalism consist of rules that operate over sequents, i.e. formulae of the form $X \vdash Y$, where $X$ and $Y$ are sequences or (multi)sets of formulae. Despite the success of sequent systems in discovering new logics~\cite{Gir87}, in automated reasoning~\cite{Sla97,Dyc92}, and in confirming non-trivial logical properties~\cite{Mae60}, it was discovered that the structure of the sequent is insufficient to capture more expressive logics in an analytic or cut-free manner (e.g. the modal logic $\mathsf{S5}$ and bi-intuitionistic logic~\cite{BuiGor07}). In an attempt to recapture analyticity for more expressive logics of interest, various generalization of Gentzen's sequent formalism have been proposed; e.g. hypersequents~\cite{Avr96,Pot83}, linear nested sequents~\cite{Lel15}, nested sequents~\cite{Bul92,Kas94}, display sequents~\cite{Bel82,Wan94}, and labeled sequents~\cite{Sim94,Vig00}. We will focus on the relationship between the latter two formalisms in this paper in the context of tense logics.

A prominent generalization of Gentzen's sequent formalism is the display framework, originally defined by Belnap in 1982 under the name \emph{display logic}~\cite{Bel82}. Belnap's display calculi generalize Gentzen's calculi by expanding sequents with a host of new structural connectives---corresponding to pairs of dual connectives---along with rules for manipulating them. Incorporating structural connectives for pairs of dual connectives has proven fruitful in the design of analytic systems for large classes of logics~\cite{Bel82,Bro12,Gor98,Kra96,Wan94,Wol98}. This is in part due to a general cut-elimination theorem proved by Belnap, which states that a display calculus admits cut-elimination if eight (checkable) syntactic conditions are satisfied~\cite{Bel82}. We remark that Belnap's display formalism is among one of the most expressive generalizations of Gentzen's sequent formalism, providing a uniform and modular proof theory for substantial classes of logics. 

Another rich extension of Gentzen's sequent formalism is the labeled formalism. Whereas display calculi are obtained via algebraic considerations, labeled sequent calculi are normally extracted from a logic's relational semantics. The labeled sequent formalism was initiated by Kanger in the 1950's~\cite{Kan57}, though it was arguably the work of Simpson~\cite{Sim94} that provided the contemporary, general form of labeled sequents and their encompassing systems. Labeled sequents extend Gentzen sequents by prefixing formulae with labels (e.g. $w : A$) and incorporating relational atoms (e.g. $Rwu$) into the syntax of sequents. A labeled formulae $w : A$ encodes that a formula $A$ holds at a world $w$ in a Kripke model and a relational atom $Rwu$ encodes the accessibility relation. Like the display formalism, the labeled formalism has been used to provide analytic calculi for extensive classes of logics. Moreover, fundamental properties hold generally for such systems, e.g. labeled sequent systems admit the height-preserving admissibility of important structural rules (e.g. weakenings and contractions), have height-preserving invertible rules, and cut-elimination holds~\cite{Bor08,Hei05,Sim94,Vig00}.

There are a number of key differences between the display and labeled formalisms, e.g. display sequents are composed of structural connectives whereas labeled sequents encode graphs, display calculi are based on an algebraic semantics whereas labeled calculi are based on a relational semantics, display calculi employ structural rules and display rules while such rules are admissible or redundant in labeled calculi, etc. Despite these notional, motivational, and operational distinctions, both formalisms permit the definition of analytic calculi for wide classes of overlapping logics, implying the existence of an underlying relationship between the two formalisms. Normally, relationships between calculi are studied via translations, i.e. functions that stepwise translate a proof from one calculus into a proof in another, and it is informative to provide the computational complexity of the translations as well as compare the relative sizes of the input and output proofs.

 Preliminary studies relating display and labeled calculi for some normal modal logics were undertaken by Mints~\cite{Min97} and Restall~\cite{Res06}. A more intensive investigation studying the relationship between shallow-nested calculi (which are close relatives of display calculi) and labeled calculi for extensions of 
 the tense logic $\kt$ with path axioms (encoding Horn properties on Kripke frames) was given more recently in~\cite{CiaLyoRamTiu21}. Nevertheless, the renowned `Display Theorem I' by Kracht tells us that a substantially larger class of logics (subsuming the aforementioned logics), namely, \emph{primitive tense logics} admit analytic display calculi~\cite{Kra96}. Moreover, this result also tells us that if a tense logic has an analytic display calculus, then it is a primitive tense logic, meaning the set of display calculi for primitive tense logics forms the \emph{maximal analytic} set of display calculi for tense logics. Therefore, the aforementioned studies fall short of characterizing the relationship between all analytic display calculi for tense (and modal) logics and their labeled counterparts. Indeed, this problem was explicitly left open in Restall's 2006 paper~\cite{Res06} as well as in~\cite{CiaLyoRamTiu21}, and all aforementioned studies left the complexity of such translations unaddressed. In this paper, we develop and apply new techniques to solve this open problem and provide complexity bounds for our translations. In particular, we accomplish the following:

\begin{itemize}[leftmargin=*]

\item We provide the first characterization of primitive tense rules in the setting of labeled sequent calculi.

\item We provide a new formulation of labeled sequent calculi that matches the standard formulation of display calculi, namely, we take inference rules to be templates with variables that are instantiated to produce rule applications. This new formulation is crucial in defining translations between display and labeled proofs.

\item We establish a correspondence between the space of all analytic display proofs and a special subspace of labeled proofs, which we dub \emph{strict} labeled proofs. We show that $\ptime$ translations exist between every analytic display proof and each corresponding strict labeled proof. Furthermore, our translations are \emph{direct} and do not take any detours through other systems as in other approaches (cf.~\cite{CiaLyoRamTiu21}). 

\item We find that translating a display proof into a strict labeled proof yields a proof with the same number of sequents or less, but may quadratically increase the length of sequents appearing in the output proof; therefore, there is a trade-off between the number of sequents and their lengths when translating from display to labeled, which is a previously unrecognized finding. Conversely, translating a strict labeled proof into a display proof may polynomially increase the number of sequents in the output proof.


\item We observe that display structures (which only form part of a display seqeuent) \emph{are} labelled sequents, and that all display equivalent sequents translate to the same labeled sequent. This demonstrates a higher degree of syntactic bureaucracy in the display formalism and also shows that labeled sequents are canonical representations of display sequents since they do away with bureaucracy that obfuscates identities. 


\end{itemize}

This paper is organized as follows: In \sect~\ref{section-1}, we define and give the preliminaries for primitive tense logics. We then present Kracht's display calculi for primitive tense logics in \sect~\ref{section-2}. In \sect~\ref{section-3}, we reformulate and extend Boretti's labeled calculi for tense logics~\cite{Bor08} to capture primitive tense logics in a manner suitable for translations with display systems. In \sect~\ref{section-3-2}, we define labeled polytree sequents and strict proofs, which will be invaluable in translating labeled and display proofs. We then show how to translate between display and labeled notation in \sect~\ref{section-4}, and put our translations to use in \sects~\ref{section-5} and~\ref{section-6}, showing how to translate display proofs into strict labeled proofs and vice versa, respectively, as well as analyze the complexity of our translations. In \sect~\ref{conclusion}, we re-emphasize our findings and discuss future work.

\section{Tense Logics and Primitive Axioms}\label{section-1}


In this section, we introduce and define the class of logics our proof systems consider, namely, \emph{tense logics}. Tense logics were invented by Prior in the 1950's~\cite{Pri03} and extend classical propositional logic with modalities making reference to both future and past states of affairs. More concretely, tense logics employ the $\g$ and $\f$ modalities, which apply to a proposition that holds at every next moment, or at some next moment, respectively, and the $\h$ and $\p$ modalities, which apply to a proposition that holds at all prior moments, or at some prior moment, respectively. In other words, $\g$ is read as `in all future moments', $\f$ is read as `in some future moment', $\h$ is read as `in all past moments', and $\p$ is read as `in some past moment'. The remaining logical operators employed in our language 
 are the familiar classical operators $\top$, $\bot$, $\neg$, $\lor$, $\land$, and $\rightarrow$, that is, the language we use is an extension of the language of classical propositional logic.

\begin{definition}[The Language $\lang$]\label{def:language} We define our \emph{language} $\lang$ to consist of the formulae generated via the following grammar in BNF:
$$
A ::= p \ | \ \top \ | \ \bot \ | \ \neg A \ | \ (A \lor A) \ | \ (A \land A) \ | \ (A \rightarrow A) \ | \ \g A \ | \ \f A \ | \ \h A \ | \ \p A
$$
where $p$ ranges over a denumerable set $\propset$ of \emph{propositional atoms}, which we call \emph{atoms} for short. We use $p$, $q$, $r$, $\ldots$ (occasionally annotated) to denote atoms and $A$, $B$, $C$, $\ldots$ (occasionally annotated) to denote formulae from $\lang$. The bi-conditional operator is defined in the usual way as $A \leftrightarrow B := (A \rightarrow B) \land (B \rightarrow A)$.
\end{definition}

We define the \emph{length} of a formula $A$, denoted $\flgth{A}$, recursively as  $\flgth{p} = \flgth{\bot} = \flgth{\top} = 1$, $\flgth{A \odot B} = \flgth{A} + \flgth{B} + 1$ for $\odot \in \{\lor, \land, \rightarrow\}$, and $\flgth{\triangledown A} = \flgth{A} + 1$ for $\triangledown \in \{\g, \f, \h, \p\}$. Our language is interpreted over standard relational (i.e. Kripke) models for normal modal logics~\cite{Kri63}.

\begin{definition}[Relational Frame/Model] We define a \emph{relational frame} $F$ to be a tuple $(W,R)$ such that $W$ is a non-empty set of \emph{worlds} and $R \subseteq W \times W$ is the \emph{accessibility relation}. A \emph{relational model} $M$ is defined to be a tuple $(F,V)$ such that $F$ is a relational frame and $V : \propset \mapsto 2^{W}$ is a \emph{valuation function} mapping atoms to sets of worlds.
\end{definition}

We may think of the worlds in a relational model as possible moments in time with the accessibility relation holding between a moment and a (possible) future moment. Formulae are then interpreted at a specific moment in time (i.e. a world) and express a proposition concerning a present, past, or future state of affairs. The following definition makes the interpretation of formulae precise.

\begin{definition}[Semantic Clauses] Let $M = (W,R,V)$ be a relational model with $w \in W$. We define the \emph{satisfaction} of a formula $A \in \lang$ in $M$ at $w$, written $M,w \Vdash A$, recursively on the structure of $A$ as follows:
\begin{itemize}

\item $M, w \Vdash p$ \ifandonlyif $w \in V(p)$;

\item $M, w \Vdash \top$;

\item $M, w \not\Vdash \bot$;

\item $M, w \Vdash \neg A$ \ifandonlyif $M, w \not\Vdash A$;

\item $M, w \Vdash A \lor B$ \ifandonlyif $M, w \Vdash A$ or $M, w \Vdash B$;

\item $M, w \Vdash A \land B$ \ifandonlyif $M, w \Vdash A$ and $M, w \Vdash B$;

\item $M, w \Vdash A \rightarrow B$ \ifandonlyif $M, w \not\Vdash A$ or $M, w \Vdash B$;

\item $M, w \Vdash \g A$ \ifandonlyif for all $u \in W$, if $(w,u) \in R$, then $M, u \Vdash A$;

\item $M, w \Vdash \f A$ \ifandonlyif for some $u \in W$, $(w,u) \in R$ and $M, u \Vdash A$;

\item $M, w \Vdash \h A$ \ifandonlyif for all $u \in W$, if $(u,w) \in R$, then $M, u \Vdash A$;

\item $M, w \Vdash \p A$ \ifandonlyif for some $u \in W$, $(u,w) \in R$ and $M, u \Vdash A$.

\end{itemize}
A formula $A$ is \emph{globally true} in a relational model $M$ \ifandonlyif $M,w \Vdash A$ for all $w \in W$, and a formula $A$ is \emph{$\kt$-valid} \ifandonlyif it is globally true on all relational models.
\end{definition}


We note that the set of all $\kt$-valid formulae is axiomatizable (see~\cite{BlaRijVen01}), i.e. the following axiomatization $\hil\kt$ is sound and complete relative to our relational semantics.

\begin{definition}[$\hil\kt$, $\kt$] The axiomatization $\hil\kt$ consists of all classical propositional tautologies along with the following axioms and inference rules:
\begin{center}
\begin{multicols}{2}
\begin{itemize}

\item[A1] $\g (A \rightarrow B) \rightarrow (\g A \rightarrow \g B)$

\item[A2] $\h (A \rightarrow B) \rightarrow (\h A \rightarrow \h B)$

\item[A3] $A \rightarrow \g \p A$

\item[A4] $A \rightarrow \h \f A$

\item[R0] \AxiomC{$A$}
\AxiomC{$A \rightarrow B$}
\BinaryInfC{$B$}
\DisplayProof

\item[A5] $\g A \leftrightarrow \neg \f \neg A$

\item[A6] $\h A \leftrightarrow \neg \p \neg A$

\item[R1] \AxiomC{$A$}
\UnaryInfC{$\g A$}
\DisplayProof

\item[R2] \AxiomC{$A$}
\UnaryInfC{$\h A$}
\DisplayProof

\end{itemize}
\end{multicols}
\end{center}
The logic $\kt$ is defined to be the smallest set of formulae closed under substitutions of the axioms A1--A6 and applications of the inference rules R0--R2.
\end{definition}

The logic $\kt$ serves as the base logic in the class of tense logics we consider. All other logics considered can be characterized as extensions of $\kt$ with \emph{primitive tense axioms}~\cite{Kra96}.

\begin{definition}[Primitive Tense Axiom~\cite{Kra96}]\label{def:primitive-tense-axiom} A \emph{primitive tense axiom} is a formula of the form
$A \imp B$, where $A$ and $B$ are generated via the following grammar in BNF:
$$
A ::= p \ | \ \top \ | \ (A \land A) \ | \ (A \lor A) \ | \ \f A \ | \ \p A
$$
and $A$ contains each atom at most once.
\end{definition}

We will use $\ktp$ to denote the \emph{primitive tense logic} axiomatized by $\kt$ extended with a set of primitive tense axioms $\mathsf{P}$, and will use $\hil\ktp$ to denote the logic's axiomatization. The set of primitive tense axioms covers a wide array of well-known axioms such as the axiom $\mathsf{T}$ ($A \rightarrow \f A$), the axiom $\mathsf{4}$ ($\f \f A \rightarrow \f A$), and the axiom $\mathsf{5}$ ($\p \f A \rightarrow \f A$). Beyond capturing a diverse class of logics, the proof-theoretic significance of primitive tense logics was first identified by Kracht~\cite{Kra96}, who was able to show that the largest set of display calculi for tense logics satisfying a certain set of desirable properties (discussed in the subsequent section) is characterizable as those calculi that are sound and complete relative to primitive tense logics.

As remarked by Kracht~\cite{Kra96}, by making use of associativity, commutativity, and idempotency properties of $\lor$ and $\land$, along with standard equivalences, 
one can transform any primitive tense axiom into an equivalent normal form, defined below.

\begin{proposition}[Primitive Tense Normal Form]\label{prop:primitive-tense-normal-form} We define the language  $\langatoms$ to be the smallest set of formulae generated by the following grammar:
$$
A ::= p \ | \ \top \ | \ (A \land A) \ | \ \f A \ | \ \p A
$$
Let $A$, each $A_{i}$, and each $B_{i,j}$ be in $\langatoms$ with $A$ and each $A_{i}$ containing at most one occurrence of each atom. Every primitive tense axiom can be put into a normal form, shown below left. We define a \emph{simplified primitive tense axiom} to be a formula of the form shown below right.
$$
\bigwedge_{1 \leq i \leq n} (A_{i} \rightarrow \bigvee_{1 \leq j \leq m_{i}} B_{i,j})
\qquad\qquad
A \rightarrow \bigvee_{1 \leq j \leq m} B_{j}
$$
\end{proposition}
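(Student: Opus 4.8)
The plan is to put a primitive tense axiom $A \rightarrow B$ into normal form in two stages: first rewrite the antecedent $A$ and the consequent $B$ each as a disjunction of $\lor$-free formulae (i.e.\ formulae of $\langatoms$), and then distribute the implication over the disjunction appearing in the antecedent. Since, by \dfn~\ref{def:primitive-tense-axiom}, both $A$ and $B$ are built only from $\top$, $\land$, $\lor$, $\f$, and $\p$, this rewriting is a disjunctive-normal-form computation carried out entirely inside that positive fragment, using the semantic equivalences $\f(C \lor D) \equiv \f C \lor \f D$ and $\p(C \lor D) \equiv \p C \lor \p D$ (both immediate from the existential readings of $\f$ and $\p$ in the semantic clauses), the distributivity of $\land$ over $\lor$, and the idempotency/commutativity/associativity laws for $\lor$.

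First I would define, by recursion on formula structure, a map $\mathrm{dnf}(\cdot)$ sending a formula of the positive fragment to a finite nonempty set of $\lor$-free formulae: $\mathrm{dnf}(p) = \{p\}$, $\mathrm{dnf}(\top) = \{\top\}$, $\mathrm{dnf}(C \lor D) = \mathrm{dnf}(C) \cup \mathrm{dnf}(D)$, $\mathrm{dnf}(C \land D) = \{C' \land D' : C' \in \mathrm{dnf}(C),\ D' \in \mathrm{dnf}(D)\}$, $\mathrm{dnf}(\f C) = \{\f C' : C' \in \mathrm{dnf}(C)\}$, and $\mathrm{dnf}(\p C) = \{\p C' : C' \in \mathrm{dnf}(C)\}$. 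A routine induction on $\flgth{C}$, invoking the four equivalences above at the corresponding clauses, shows $C \equiv \bigvee_{C' \in \mathrm{dnf}(C)} C'$ and that every $C' \in \mathrm{dnf}(C)$ lies in $\langatoms$. The additional bookkeeping needed for the antecedent---that each resulting disjunct contains every atom at most once---is threaded through this same induction: if $C$ contains each atom at most once, then so does each $C' \in \mathrm{dnf}(C)$, the only nontrivial case being $C = C_{1} \land C_{2}$, where the hypothesis forces $C_{1}$ and $C_{2}$ to have disjoint sets of atoms, so that combining a $C_{1}' \in \mathrm{dnf}(C_{1})$ with a $C_{2}' \in \mathrm{dnf}(C_{2})$ cannot produce a repeated atom (the atoms of $C_{1}'$ and $C_{2}'$ being contained in those of $C_{1}$ and $C_{2}$ respectively).

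Finally I would assemble the pieces. Writing $\mathrm{dnf}(A) = \{A_{1}, \dots, A_{n}\}$ and $\mathrm{dnf}(B) = \{B_{1}, \dots, B_{m}\}$, the previous step yields $A \equiv \bigvee_{1 \leq i \leq n} A_{i}$ and $B \equiv \bigvee_{1 \leq j \leq m} B_{j}$ with all $A_{i}, B_{j} \in \langatoms$ and each $A_{i}$ containing every atom at most once. Using the classical equivalence $(\bigvee_{i} A_{i}) \rightarrow C \equiv \bigwedge_{i} (A_{i} \rightarrow C)$ we obtain
$$
A \rightarrow B \ \equiv\ \bigwedge_{1 \leq i \leq n}\Big( A_{i} \rightarrow \bigvee_{1 \leq j \leq m} B_{j} \Big),
$$
which is the claimed normal form (with $m_{i} = m$ for every $i$), each of its conjuncts being a simplified primitive tense axiom; in particular, when $n = 1$ the axiom is already in simplified form.

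I expect the only genuinely delicate point to be the at-most-once invariant for the antecedent: distributing $\land$ over $\lor$ superficially looks as though it could duplicate atoms, and the argument must spell out why the hypothesis on $A$ excludes this; everything else---nonemptiness of the sets $\mathrm{dnf}(\cdot)$, so that no empty (hence $\bot$) disjunction ever arises, and the soundness of the modal distributivity laws from the semantic clauses---is a direct verification.
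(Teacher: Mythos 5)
Your proof is correct and follows exactly the route the paper gestures at: the paper itself offers no proof of this proposition beyond a citation to Kracht and the preceding remark that associativity, commutativity, idempotency and ``standard equivalences'' suffice, and your $\mathrm{dnf}$ construction together with the distribution of $\f$ and $\p$ over $\lor$ and of $\rightarrow$ over antecedent disjunctions is precisely the standard elaboration of that remark. The one point you flag as delicate --- preservation of the at-most-once atom condition when distributing $\land$ over $\lor$ --- is handled correctly by observing that the hypothesis on the antecedent forces the two conjuncts to have disjoint atom sets.
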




Since every primitive tense axiom is equivalent to a conjunction of simplified primitive tense axioms (by \cref{prop:primitive-tense-normal-form} above), each primitive tense logic $\ktp$ admits an axiomatization $\hil\ktp$ where $\mathsf{P}$ contains simplified primitive tense axioms only. We therefore assume w.l.o.g. that any set $\mathsf{P}$ of primitive tense axioms contains only simplified primitive tense axioms, unless specified otherwise.

\section{Display Calculi for Tense Logics}\label{section-2}


 In this section, we describe the display calculi for primitive tense logics introduced by Wansing~\cite{Wan94} and Kracht~\cite{Kra96}. One of the central ideas behind the construction of display systems is the use of \emph{structural connectives} (also called \emph{Gentzen toggles}~\cite{Kra96}) which stand for different logical connectives depending on their position within a sequent. The use of structural connectives dates back to the inception of sequents, where Gentzen employed the comma connective representing a conjunction on the left and a disjunction on the right of a sequent~\cite{Gen35a,Gen35b}; for example, $A, B \dar C$ is interpretd as $A \land B \imp C$ whereas $A \dar B, C$ is interpreted as $A \imp B \lor C$. 

Although Belnap discusses display systems for modal logics in his seminal paper introducing display calculi~\cite{Bel82}, their formulation was improved upon and simplified in the work of Wansing~\cite{Wan94}. While Belnap had formulated the structural connectives standing for modalities as \emph{binary connectives}, Wansing formulated them as \emph{unary connectives}. In particular, the bullet $\bull$ was used to represent a $\p$ in the antecedent of a sequent and to represent a $\g$ in the consequent; e.g. $\bull A \dar B$ is interpreted as $\p A \imp B$ and $A \dar \bull B$ is interpreted as $A \imp \g B$~\cite{Wan94}. 

The use of structural connectives in building sequents has proven beneficial. In the first place, the inclusion or exclusion of rules governing the behavior of structural connectives permits the supplementation of display calculi for diverse families of logics; e.g. bunched-implication logics~\cite{Bro12}, substructural logics~\cite{Gor98}, bi-intuitionistic logic~\cite{Wol98}, tense/temporal logics~\cite{Kra96,Wan94}, and relevance logics~\cite{Bel82}.  In the second place, the use of structural connectives supports the shifting of data within a sequent while at the same time preserving the subformula property. For instance, in the context of tense logics the sequents $\bull A \dar B$ and $A \dar \bull B$ are mutually derivable from one another; observe that the $\bull$ may be shifted to the antecedent or consequent, either \emph{displaying} $B$ as the entire consequent or $A$ as the entire antecedent, respectively, without changing the logical formulae that occur. The capacity to \emph{display} data within a display sequent (i.e. to shuffle data around) is what gives rise to the name of Belnap's formalism.
 
Another significant feature of the display formalism is the existence of a \emph{general cut-elimination theorem}. As shown by Belnap in~\cite{Bel82}, so long as a display calculus satisfies eight syntactic conditions (C1)--(C8), the calculus admits cut-elimination (see \dfn~\ref{def:C1-C8} and \thm~\ref{thm:general-cut-elim} below). Thus, the display formalism is well-suited for supplying a variety of logics with \emph{cut-free} calculi in a uniform and modular fashion. 

The combination of data by means of structural connectives gives rise to \emph{structures}, which serve as the entire antecedent or consequent of a display sequent. In the context of tense logics, we define a structure $X$ to be an object generated from the following grammar in BNF:
$$
X ::= A \ | \ I \ | \ {\dneg}X \ | \ {\bull} X \ | \ (X \circ X)
$$
where $A$ ranges over the formulae in $\lang$. We use $X$, $Y$, $Z$, $\ldots$ (occasionally annotated) to denote structures. A \emph{display sequent} is defined to be a formula of the form $X \vdash Y$ where $X$ and $Y$ are structures. We use $\dseq_{1}$, $\dseq_{2}$, $\dseq_{3}$, $\ldots$ (or, just $\dseq$) to denote display sequents. We recursively define the \emph{length} of a structure $X$ as follows: $\lgthdseq{A} = \lgthdseq{I} = 1$, $\lgthdseq{\dneg X} = \lgthdseq{X} + 1$, $\lgthdseq{\bull X} = \lgthdseq{X} + 1$, $\lgthdseq{X \circ Y} = \lgthdseq{X} + \lgthdseq{Y} + 1$. The \emph{length} of a display sequent is defined to be $\lgthdseq{X \vdash Y} = \lgthdseq{X} + \lgthdseq{Y}$.

As mentioned above, structural connectives act as a proxy for certain logical connectives dependent upon where they occur within a display sequent. This gives rise to structures and display sequents being translatable into logical formulae.

\begin{definition}[Formula Translation $\tau$~\cite{Kra96}]\label{def:formula-translation-display} We define the translations $\lt$ and $\rt$ from structures to formulae as follows:
\begin{center}
\begin{multicols}{2}
\begin{itemize}

\item $\lt(A) := A$

\item $\lt(I) := \top$

\item $\lt(\dneg X) := \neg \rt(X)$

\item $\lt(\bull X) := \p \lt(X)$

\item $\lt(X \circ Y) := \lt(X) \land \lt(Y)$

\end{itemize}

\begin{itemize}

\item $\rt(A) := A$

\item $\rt(I) := \bot$

\item $\rt(\dneg X) := \neg \lt(X)$

\item $\rt(\bull X) := \g \rt(X)$

\item $\rt(X \circ Y) := \rt(X) \lor \rt(Y)$

\end{itemize}
\end{multicols}
\end{center}
We define the translation $\tau$ of a display sequent as $\tau(X \vdash Y) := \lt(X) \rightarrow \rt(Y)$.
\end{definition}

We note that the structure $I$ is taken to stand for the \emph{empty structure}, which functions as the identity element with respect to the $\circ$ operator, that is, $I \circ X$, $X \circ I$, and $X$ are taken to be equivalent. This property of $I$ is formalized as rules in our display calculus (viz. $\il$ and $\ir$; see \fig~\ref{fig:structural-rules-DKt}) and corresponds to the fact that in the antecedent (or, consequent) a structure of the form $I \circ X$ or $X \circ I$ is equivalent to $\top \land \lt(X)$ and $\lt(X) \land \top$ (or, $\bot \lor \rt(X)$ and $\rt(X) \lor \bot$, respectively), which is equivalent to $\lt(X)$ (or, $\rt(X)$, respectively) by the formula translation above. We now define the notion of a \emph{substructure}, which will be used in the sequel. 

 
\begin{definition}[Substructure]\label{def:substructure} We define $W$ to be a \emph{substructure} of $X$ \ifandonlyif $W \in \sst{X}$ where $\sst{X}$ is defined recursively:

\begin{itemize}

\item $\sst{A} := \{A\}$

\item $\sst{I} := \{I\}$

\item $\sst{\dneg Y} := \{\dneg Y \} \cup \sst{Y}$

\item $\sst{\bull Y} := \{\bull Y\} \cup \sst{Y}$

\item $\sst{Y \circ Z} := \{Y \circ Z\} \cup \sst{Y} \cup \sst{Z}$

\end{itemize}
\end{definition}

We present Kracht's display calculus $\dkt$ for the minimal tense logic $\kt$, consisting of three sets of rules: \emph{logical rules} (see \fig~\ref{fig:logical-rules-DKt}), \emph{display rules} (see \fig~\ref{fig:display-rules-DKt}), and \emph{structural rules} (see \fig~\ref{fig:structural-rules-DKt}), which we introduce in turn. After introducing $\dkt$, we define extensions of $\dkt$ with rules corresponding to primitive tense axioms (cf.~\cite{Kra96}). Before we proceed however, let us comment on the types of symbols that occur within rules and their instances.

We let the symbols $\vP$, $\vR$, $\vQ$, $\ldots$ (occasionally annotated) be \emph{atomic variables}, which are instantiated with atoms in rule applications, the symbols $\vA$, $\vB$, $\vC$, $\ldots$ (occasionally annotated) be \emph{formula variables}, which are instantiated with formulae from $\lang$ in rule applications, and the symbols $\vX$, $\vY$, $\vZ$, $\ldots$ (occasionally annotated) be \emph{structure variables}, which are instantiated with structures in rule applications. (NB. Observe that atoms are formulae and formulae are structures, meaning formula variables can be instantiated with atoms and structure variables can be instantiated with formulae.) A \emph{schematic structure} is defined to be a combination of atomic variables, formula variables, and structure variables with logical and structural connectives, i.e. it is a formula generated via the following grammar in BNF:
$$
\ssX ::= \vP \ | \ \top \ | \ \bot \ | \ I \ | \ \vA \ | \ \triangledown \vA \ | \ \vA \odot \vB \ | \ \vX \ | \ \dneg \ssX \ | \ \bull \ssX \ | \ \ssX \circ \ssX
$$
where $\triangledown \in \{\neg, \g, \f, \h, \p\}$ and $\odot \in \{\lor, \land, \imp\}$. We use $\ssX$, $\ssY$, $\ssZ$, $\ldots$ (occasionally annotated) to denote schematic structures and define a \emph{schematic display sequent} to be a formula of the form $\ssX \dar \ssY$.


\begin{definition}[Substitution $\sub$] A \emph{substitution} $\sub$ is defined to be a function that maps atomic variables to atoms, formula variables to formulae, and structure variables to structures. We use postfix notation when applying substitutions, viz. we let $\vP\sub$ be the atom obtained by applying $\sub$ to $\vP$, $\vA\sub$ be the formula obtained from applying $\sub$ to $\vA$, and $\vX\sub$ be the structure obtained from applying $\sub$ to $\vX$. We extend substitutions to schematic structures and schematic display sequents in the expected way by applying them to each atomic variable, formula variable, and structure variable therein (cf.~\cite{Bel82,Kra96}). It immediately follows from the definition that for any substitution $\sub$, $I\sub = I$.

We always assume that if a substitution $\sub$ is applied to a set of schematic structures or schematic display sequents, then all atomic variables, formula variables, and structure variables contained therein are within the domain of the substitution. If we want to indicate what variables are mapped to which atoms, formulae, or structures, then we write a substitution $\sub$ as
$$
[X_{1} / \vX_{1}, \ldots, X_{n} / \vX_{n}, A_{1} / \vA_{1}, \ldots, A_{k} / \vA_{k}, p_{1} / \vP_{1}, \ldots, p_{m} / \vP_{m}]
$$
 indicating that $\vX_{i}\sub = X_{i}$, $\vA_{j}\sub = A_{j}$, and $\vP_{t}\sub = p_{t}$ for $i \in \{1, \ldots, n\}$, $j \in \{1, \ldots, k\}$, and $t \in \{1, \ldots, m\}$.
\end{definition}


A \emph{rule} or \emph{inference rule} (as shown below left) is a schema utilizing schematic display sequents that through the application of a substitution $\sub$ produces a \emph{rule instance} or \emph{rule application} (as shown below right). We always assume that if a substitution is applied to a rule that all structure, formula, and atomic variables are within the domain of the substitution.
\begin{center}
\begin{tabular}{c @{\hskip 1em} c}
\AxiomC{$\ssX_{1} \dar \ssY_{1}$}
\AxiomC{$\ldots$}
\AxiomC{$\ssX_{n} \dar \ssY_{n}$}
\TrinaryInfC{$\ssX \dar \ssY$}
\DisplayProof

&

\AxiomC{$(\ssX_{1} \dar \ssY_{1})\sub$}
\AxiomC{$\ldots$}
\AxiomC{$(\ssX_{n} \dar \ssY_{n})\sub$}
\TrinaryInfC{$(\ssX \dar \ssY)\sub$}
\DisplayProof
\end{tabular}
\end{center}

The display calculus $\dkt$ for the minimal tense logic $\kt$ consists of the logical rules given in \fig~\ref{fig:logical-rules-DKt}, the display rules given in \fig~\ref{fig:display-rules-DKt}, and the structural rules given in \fig~\ref{fig:structural-rules-DKt}. Certain logical rules, viz. $\id$, $\topr$, and $\botl$, serve as axioms and are called \emph{initial rules}; we refer to a display sequent generated by an initial rule as an \emph{initial sequent}. The remaining logical rules either introduce a logical connective to a formula, or between formulae, which occur in the premise(s). As usual, we refer to the explicitly presented formula(e) in the premise(s) of a rule as \emph{auxiliary} and refer to the explicitly presented formula in the conclusion of a rule as \emph{principal}. A notable feature of the display formalism is that in each logical rule the principal formula is either the entire antecedent or consequent of the concluding display sequent.\footnote{This fact is utilized in the general cut-elimination theorem~\cite{Bel82}, stated as \thm~\ref{thm:general-cut-elim} below.} \emph{Proofs} in display calculi are defined in the usual inductive way, where each instance of an initial rule is a proof, and further proofs may be obtained by successively applying inference rules~\cite{Kra96}. 
 We use $\dproof$ and annotated versions thereof to denote display proofs. There are various metrics that may be used to quantify aspects of proofs; we define a few notions below that will be helpful in relating display proofs to labeled proofs later on.

\begin{definition}[Quantity, Width, Size]\label{def:hght-wdth-size} We define the \emph{quantity} of a proof $\dproof$ to be the number of sequents it contains, i.e.
\begin{itemize}

\item if $\dproof$ is an initial rule of the form \AxiomC{}\RightLabel{$\ru$}
\UnaryInfC{$\dseq$}
\DisplayProof, then $\qty{\dproof} = 1$;

\medskip

\item if $\dproof =$ \AxiomC{$\dproof_{1} \ldots \dproof_{n}$}\RightLabel{$\ru$}\UnaryInfC{$\dseq$}\DisplayProof, then $\qty{\dproof} = \sum_{i=1}^{n} \qty{\dproof_{i}} + 1$.

\end{itemize}
We define the \emph{width} of a proof $\dproof$ to be equal to the maximum length among all display sequents occurring in the proof, i.e. $
\wdth{\lproof} = max\{\lgth{\dseq} \ | \ \dseq \in \dproof\}.$ Last, we define the \emph{size} of a proof $\dproof$ to be $\size{\dproof} = \qty{\dproof} \times \wdth{\dproof}$.
\end{definition}

\begin{figure}[t]
\noindent\hrule

\begin{center}
\begin{tabular}{c c c c c}

\AxiomC{}
\RightLabel{$\id$}
\UnaryInfC{$\vP \vdash \vP$}
\DisplayProof

&

\AxiomC{}
\RightLabel{$\topr$}
\UnaryInfC{$I \vdash \top$}
\DisplayProof

&

\AxiomC{}
\RightLabel{$\botl$}
\UnaryInfC{$\bot \vdash I$}
\DisplayProof

&

\AxiomC{$I \vdash \vY$}
\RightLabel{$\topl$}
\UnaryInfC{$\top \vdash \vY$}
\DisplayProof

&

\AxiomC{$\vX \vdash I$}
\RightLabel{$\botr$}
\UnaryInfC{$\vX \vdash \bot$}
\DisplayProof
\end{tabular}
\end{center}

\begin{center}
\begin{tabular}{c c c}
\AxiomC{$\dneg\vA \vdash \vY$}
\RightLabel{$\negl$}
\UnaryInfC{$\neg \vA \vdash \vY$}
\DisplayProof

&

\AxiomC{$\vX \vdash \dneg\vA$}
\RightLabel{$\negr$}
\UnaryInfC{$\vX \vdash \neg \vA$}
\DisplayProof

&

\AxiomC{$\vX,\vA \vdash \vB$}
\RightLabel{$\impr$}
\UnaryInfC{$\vX \vdash \vA \rightarrow \vB$}
\DisplayProof
\end{tabular}
\end{center}

\begin{center}
\begin{tabular}{c c c}
\AxiomC{$\ux \vdash \vA$}
\AxiomC{$\vB \vdash \vY$}
\RightLabel{$\impl$}
\BinaryInfC{$\vA \rightarrow \vB \vdash \dneg\vX \circ \vY$}
\DisplayProof

&

\AxiomC{$\vX \vdash \vA \circ \vB$}
\RightLabel{$\disr$}
\UnaryInfC{$\vX \vdash \vA \lor \vB$}
\DisplayProof

&

\AxiomC{$\vA \vdash \vY$}
\AxiomC{$\vB \vdash \vY$}
\RightLabel{$\disl$}
\BinaryInfC{$\vA \lor \vB \vdash \vY$}
\DisplayProof
\end{tabular}
\end{center}

\begin{center}
\begin{tabular}{c c c}
\AxiomC{$\vA \circ \vB \vdash \vY$}
\RightLabel{$\conl$}
\UnaryInfC{$\vA \land \vB \vdash \vY$}
\DisplayProof

&

\AxiomC{$\vX \vdash \vA$}
\AxiomC{$\vX \vdash \vB$}
\RightLabel{$\conr$}
\BinaryInfC{$\vX \vdash \vA \land \vB$}
\DisplayProof

&

\AxiomC{$\vA \vdash \vY$}
\RightLabel{$\gld$}
\UnaryInfC{$\g \vA \vdash \bull \vY$}
\DisplayProof
\end{tabular}
\end{center}

\begin{center}
\begin{tabular}{c c c}
\AxiomC{$\bull \vX \vdash \vA$}
\RightLabel{$\gr$}
\UnaryInfC{$\vX \vdash \g \vA$}
\DisplayProof

&

\AxiomC{$\vA \vdash \dneg \bull \dneg \vY$}
\RightLabel{$\fl$}
\UnaryInfC{$\f \vA \vdash \vY$}
\DisplayProof

&

\AxiomC{$\vX \vdash \vA$}
\RightLabel{$\frd$}
\UnaryInfC{$\dneg \bull \dneg \vX \vdash \f \vA$}
\DisplayProof
\end{tabular}
\end{center}

\begin{center}
\begin{tabular}{c @{\hskip .5em} c @{\hskip .5em} c @{\hskip .5em} c}
\AxiomC{$\dneg \bull \dneg \vX \vdash \vA$}
\RightLabel{$\hr$}
\UnaryInfC{$\vX \vdash \h \vA$}
\DisplayProof

&

\AxiomC{$\vA \vdash \vY$}
\RightLabel{$\hld$}
\UnaryInfC{$\h \vA \vdash \dneg \bull \dneg \vY$}
\DisplayProof

&

\AxiomC{$ \vA \vdash \bull \vY$}
\RightLabel{$\pl$}
\UnaryInfC{$\p \vA \vdash \vY$}
\DisplayProof

&

\AxiomC{$\vX \vdash \vA$}
\RightLabel{$\prd$}
\UnaryInfC{$\bull \vX \vdash \p \vA$}
\DisplayProof

\end{tabular}
\end{center}

\noindent\hrule
\caption{The logical rules for the display calculus $\dkt$~\cite{Kra96}.}\label{fig:logical-rules-DKt}
\end{figure}

The display rules (\fig~\ref{fig:display-rules-DKt}) describe what display sequents may be mutually derived from one another (indicated by the use of a double line); e.g. one may derive $\dneg Y \dar X$ from $\dneg X \dar Y$, or vice-versa. We refer to rules that can be applied top-down and bottom-up as \emph{reversible rules} and indicate them by using a double line. Display rules are the defining characteristic of display calculi and permit data to be shuffled within a display sequent, letting structures be displayed as the entire antecedent or consequent. This gives rise to an equivalence relation defined on sequents called \emph{display equivalence} and a property of display systems called the \emph{display property}. The former relation holds between two display sequents if they can be derived from one another by means of the display rules, and the meaning of the latter property is that one can always pick a substructure $Z$ within a display sequent $X \dar Y$ and transform the sequent into a display equivalent version of the form $Z \dar W$ or $W \dar Z$. Both of these notions are formally defined below. 

\begin{definition}[Display Equivalence, Display Property~\cite{Bel82}]\label{def:dis-equiv-property} Two display sequents $X \dar Y$ and $Z \dar W$ are \emph{display equivalent}, written $X \dar Y \dequiv Z \dar W$, \ifandonlyif they are mutually derivable from one another by means of 
 the display rules. We also define two display sequents $X \dar Y$ and $Z \dar W$ to be \emph{deductively equivalent}, written $X \dar Y \equiv Z \dar W$, \ifandonlyif they are mutually derivable from one another.

A display calculus has the \emph{display property} \ifandonlyif the calculus includes a set of rules (called \emph{display rules}) such that for any display sequent $X \vdash Y$, if $Z$ is a substructure of $X$ or $Y$, then either $Z \vdash W$ or $W \vdash Z$ are derivable for some structure $W$ using only the display rules. We note that in our setting the rules of \fig~\ref{fig:display-rules-DKt} are taken to be the display rules.
\end{definition}

The capacity to display a substructure as the entire antecedent or consequent of a sequent gives rise to two separate types of substructures, namely, \emph{antecedent parts (a-parts)} and \emph{consequent parts (c-parts)}. An a-part is a substructure that can be displayed as the entire antecedent, and a c-part is a substructure that can be displayed as the entire consequent. Formally, they are defined as follows.

\begin{definition}[A-part, C-part~\cite{Bel82}] 
 A substructure $Z$ of $X$ or $Y$ is an \emph{antecedent part (a-part)} \ifandonlyif for some structure $W$, $Z \vdash W$ is derivable from $X \vdash Y$ using only the display rules. A substructure $Z$ is defined to be a \emph{consequent part (c-part)} \ifandonlyif for some structure $W$, $W \vdash Z$ is derivable from $X \vdash Y$ using only the display rules.
\end{definition}

\begin{figure}[t]
\noindent\hrule

\begin{center}
\begin{tabular}{c c c}
\AxiomC{$\vX \circ \vY \vdash \vZ$}
\doubleLine
\RightLabel{$\dri$}
\UnaryInfC{$\vX \vdash \vZ  \circ \dneg\vY$}
\DisplayProof

&

\AxiomC{$\vX \circ \vY \vdash \vZ$}
\doubleLine
\RightLabel{$\drii$}
\UnaryInfC{$\vY \vdash \dneg\vX \circ \vZ$}
\DisplayProof

&

\AxiomC{$\vX \vdash \vY \circ \vZ$}
\doubleLine
\RightLabel{$\driii$}
\UnaryInfC{$\vX \circ \dneg\vZ \vdash \vY$}
\DisplayProof
\end{tabular}
\end{center}

\begin{center}
\begin{tabular}{c c c}
\AxiomC{$\vX \vdash \vY \circ \vZ$}
\doubleLine
\RightLabel{$\driv$}
\UnaryInfC{$\dneg\vY \circ \vX \vdash \vZ$}
\DisplayProof

&

\AxiomC{$\dneg\vX \vdash \vY$}
\doubleLine
\RightLabel{$\drv$}
\UnaryInfC{$\dneg\vY \vdash \vX$}
\DisplayProof

&

\AxiomC{$\vX \vdash \dneg\vY$}
\doubleLine
\RightLabel{$\drvi$}
\UnaryInfC{$\vY \vdash \dneg\vX$}
\DisplayProof
\end{tabular}
\end{center}

\begin{center}
\begin{tabular}{c c c}
\AxiomC{$\dneg\dneg\vX \vdash \vY$}
\doubleLine
\RightLabel{$\drvii$}
\UnaryInfC{$\vX \vdash \vY$}
\DisplayProof

&

\AxiomC{$\vX \vdash \dneg\dneg\vY$}
\doubleLine
\RightLabel{$\drviii$}
\UnaryInfC{$\vX \vdash \vY$}
\DisplayProof

&

\AxiomC{$\vX \vdash \bull \vY$}
\doubleLine
\RightLabel{$\drix$}
\UnaryInfC{$\bull \vX \vdash \vY$}
\DisplayProof

\end{tabular}
\end{center}

\noindent\hrule
\caption{The display rules for the display calculus $\dkt$~\cite{Kra96}.}\label{fig:display-rules-DKt}
\end{figure}

\begin{example} We give an example of the substructures of the structures $\bull (\dneg A \circ B)$ and $\dneg I \circ (C \circ D)$, as well as consider an example of a-parts and c-parts in a given display sequent.

\begin{itemize}

\item $\sst{\bull (\dneg A \circ B)} = \{\bull (\dneg A \circ B), \dneg A \circ B, \dneg A, A, B\}$ 

\item $\sst{\dneg I \circ (C \circ D)} = \{\dneg I \circ (C \circ D), \dneg I, I, C \circ D, C,D\}$

\end{itemize}

All display sequents below are display equivalent to one another since all are mutually derivable by means of the rules in \fig~\ref{fig:display-rules-DKt}. Moreover, we can see in the left derivation that the structures $\bull (\dneg A \circ B)$, $\dneg A \circ B$, and $B$ are a-parts, and in the right derivation $\dneg I \circ (C \circ D)$, $C \circ D$, and $C$ are c-parts.

\begin{center}
\begin{tabular}{c @{\hskip 1em} c}
\AxiomC{$\bull (\dneg A \circ B) \dar \dneg I \circ (C \circ D)$}
\doubleLine
\UnaryInfC{$\dneg A \circ B \dar \bull (\dneg I \circ (C \circ D))$}
\doubleLine
\UnaryInfC{$B \dar A \circ \bull (\dneg I \circ (C \circ D))$}
\DisplayProof

&

\AxiomC{$\bull (\dneg A \circ B) \dar \dneg I \circ (C \circ D)$}
\doubleLine
\UnaryInfC{$\bull (\dneg A \circ B) \circ I \dar C \circ D$}
\doubleLine
\UnaryInfC{$(\bull (\dneg A \circ B) \circ I) \circ \dneg D \dar C$}
\DisplayProof
\end{tabular}
\end{center}
\end{example}

The structural rules for $\dkt$ (\fig~\ref{fig:structural-rules-DKt}) ensure the proper behavior of our logical connectives; e.g. $\asl$ encodes the fact that $\land$ is associative (since by \dfn~\ref{def:formula-translation-display} the $\circ$ structural connective represents $\land$ in the antecedent) and $\pmr$ encodes the fact that $\lor$ is commutative (since by \dfn~\ref{def:formula-translation-display} the $\circ$ structural connective represents $\lor$ in the consequent). Additionally, by the work of Belnap~\cite{Bel82}, we know that the $\cut$ rule is eliminable in $\dkt$ since the calculus satisfies eight sufficient conditions necessitating the elimination of $\cut$. 

\begin{definition}[Conditions (C1)--(C8)]\label{def:C1-C8} Belnap's eight sufficient conditions ensuring the elimination of $\cut$ are as follows: 

\begin{itemize}

\item[(C1)] Each formula occurring in the premise of a rule instance is a subformula of some formula in the conclusion of the inference.

\item[(C2)] We say that a structure variable in the premise of an inference is \emph{congruent} to a structure variable in the conclusion iff the two structure variables are identical.

\item[(C3)] Each structure variable in the premise of an inference is congruent to at most one structure variable in the conclusion.

\item[(C4)] Congruent structure variables in an inference are either both a-parts or both c-parts.

\item[(C5)] For each rule, if a formula variable is in the conclusion of the rule, then it is either the entire antecedent or consequent.

\item[(C6)] Each rule is closed under the uniform substitution of arbitrary structures in c-parts for congruent structure variables.

\item[(C7)] Each rule is closed under the uniform substitution of arbitrary structures in a-parts for congruent structure variables.

\item[(C8)] Suppose there is an inference ending with $X \vdash A$ and an inference ending with $A \vdash Y$ with $A$ principal in both inferences. Then, (1) $X \vdash Y$ is identical to $X \vdash A$ or $A \vdash Y$, or (2) there exists a proof of $X \vdash Y$ from the premises of each inference where $\cut$ is only used on proper subformulae of $A$.
\end{itemize}
\end{definition}

\begin{theorem}[General Cut-Elimination~\cite{Bel82}]\label{thm:general-cut-elim} Any display calculus satisfying conditions (C1)--(C8) has the subformula property and admits $\cut$ elimination.
\end{theorem}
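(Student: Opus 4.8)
The \emph{subformula property} is the easy half: condition (C1) says that every formula occurring in the premise of a non-$\cut$ rule instance is a subformula of a formula in the conclusion, so in any proof that uses no $\cut$ every formula is a subformula of the end sequent. Hence it suffices to show that $\cut$ is admissible, and the plan is the usual Gentzen-style elimination argument adapted to the display setting. First I would isolate the key lemma: every proof whose last rule is a $\cut$ and whose two immediate subproofs $\dproof_{1}$ (of $X \vdash A$) and $\dproof_{2}$ (of $A \vdash Y$) are $\cut$-free can be transformed into a $\cut$-free proof of $X \vdash Y$. Given this, full $\cut$-elimination follows by repeatedly selecting a topmost $\cut$ and removing it. The lemma itself I would prove by induction on the pair $(\flgth{A},\ \hght{\dproof_{1}}+\hght{\dproof_{2}})$ ordered lexicographically, where $A$ is the $\cut$ formula.

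Before the case analysis I would record that, by the display property together with the fact (noted above) that the principal formula of a logical rule is always the entire antecedent or consequent, we may assume the $\cut$ formula is displayed as the whole consequent of the left premise and the whole antecedent of the right premise --- which is exactly the shape in which $\cut$ and condition (C8) are phrased. The induction then splits on whether $A$ is \emph{principal} in the last rule of both $\dproof_{1}$ and $\dproof_{2}$. If it is, then either one of those last rules is an initial rule, and case (1) of (C8) tells us $X \vdash Y$ is literally $X \vdash A$ or $A \vdash Y$, so one of $\dproof_{1},\dproof_{2}$ already proves it; or both last rules are non-initial logical rules, and case (2) of (C8) supplies a derivation of $X \vdash Y$ from the premises of those two rules in which $\cut$ is applied only to proper subformulae of $A$. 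Each such $\cut$ has strictly smaller first component, and its subproofs are $\cut$-free (being built from the $\cut$-free premises of the two rules), so eliminating these residual $\cut$s one at a time (topmost first) by the induction hypothesis yields a $\cut$-free proof of $X \vdash Y$.

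The remaining case --- $A$ \emph{parametric} in the last rule $r$ of $\dproof_{1}$, the situation for $\dproof_{2}$ being symmetric --- is the crux, and the one I expect to cost the most care. Since $A$ is the whole consequent of $r$'s conclusion but is not principal, $A$ must be $\ssZ\sub$ for a structure variable $\ssZ$ in the conclusion schema of $r$, and by conditions (C2)--(C4) its congruent occurrences in the premises of $r$ are again consequent parts. The idea is to permute the $\cut$ above $r$: in each premise of $r$, display the parametric ancestor of $A$ as that premise's entire consequent (legitimate because it is a c-part), $\cut$ it there against $\dproof_{2}$ so that $A$ is replaced by $Y$, and then re-apply $r$. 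Soundness of the re-application is precisely the content of condition (C6): $r$ is closed under substituting arbitrary structures for congruent c-part structure variables, so replacing $A$ by $Y$ in those positions leaves $r$ applicable; condition (C7) plays the dual role when $A$ is parametric in the last rule of the right premise, where it occurs as an antecedent part. The new $\cut$s are still on the formula $A$ but are now paired with strictly shorter subproofs of the premises of $r$ (against the $\cut$-free $\dproof_{2}$), so the second component of the measure strictly drops and the induction hypothesis closes the case.

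The delicate points inside this last case --- and the reason it is the main obstacle --- are bookkeeping rather than conceptual: a single structure variable $\ssZ$ may occur in several premises of $r$, and possibly several times, so one must insert a family of nested $\cut$s and be sure that each is correctly aligned with a congruent occurrence via (C2)--(C4); when tracing the parametric ancestors of $A$ upward one may reach an initial sequent whose consequent is $A$ itself, which must be folded back into the case-(1)-of-(C8) analysis; and one must check that the auxiliary display-rule steps needed to bring $A$ into displayed position do not destroy the height decrease, which holds because $\cut$ permutes freely past display rules (the $\cut$ formula sits only in a structure-variable position there) --- equivalently, one sidesteps the issue by phrasing the upward permutation through $r$ directly via the substitutions licensed by (C6)/(C7) rather than by literally re-cutting. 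Once the lemma is established, iterating it removes every $\cut$, and the resulting $\cut$-free proof has the subformula property by (C1); since the whole argument refers only to (C1)--(C8), it applies uniformly to $\dkt$ and to each of its extensions by primitive tense structural rules that has been checked to satisfy these conditions.
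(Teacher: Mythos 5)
The paper does not prove this theorem at all: it is imported verbatim from Belnap~\cite{Bel82}, so there is no in-paper argument to compare yours against. Judged on its own terms, your reconstruction is essentially the standard Belnap elimination argument and is sound in outline: (C1) gives the subformula property for $\cut$-free proofs, (C8) handles the principal--principal case by reducing to $\cut$s on proper subformulae (dropping the first component $\flgth{A}$ of your measure), and (C2)--(C7) handle the parametric case. The one place where your sketch, read literally, wobbles is the one you flag yourself: if you permute the $\cut$ above the last rule $r$ of $\dproof_{1}$ by first applying display rules inside each premise subproof to re-display the parametric ancestor of $A$, the resulting subproof is taller than the original premise subproof, so the second component $\hght{\dproof_{1}}+\hght{\dproof_{2}}$ of your lexicographic measure need not strictly decrease. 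Your proposed repair --- replacing the local re-cut by a single global transformation that traces all congruent parametric ancestors of $A$ through $\dproof_{1}$ (resp.\ $\dproof_{2}$) and substitutes $Y$ (resp.\ $X$) for them wholesale, justified by (C6)/(C7), inserting copies of the other subproof only at the positions where the ancestor becomes principal --- is precisely Belnap's actual ``parametric stage,'' and with that reorganization the induction is carried by $\flgth{A}$ alone, so the measure issue evaporates. In short: your decomposition into a principal stage governed by (C8) and a parametric stage governed by (C2)--(C7) is the right one, and the only genuinely delicate point is that the parametric stage must be done as a global congruence-tracing substitution rather than a height-based local permutation; since you identify and resolve exactly that, the sketch is an accurate account of why the theorem holds for $\dkt$ and all its extensions satisfying (C1)--(C8).
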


Although we have been discussing Kracht's display calculus $\dkt$ for the minimal tense logic $\kt$, the above cut-elimination theorem applies equally to extensions of $\dkt$ with rules corresponding to (simplified) primitive tense axioms. In order to define the rules capturing these axioms (referred to as \emph{primitive tense structural rules}), we first define a translation from logical formulae to schematic structures. The reason being, we obtain primitive tense structural rules by translating simplified primitive tense axioms into such rules, and the translation of logical formulae into schematic structures is a crucial component of this process.

\begin{figure}[t]
\noindent\hrule

\begin{center}
\begin{tabular}{c c c c}
\AxiomC{$\vX \vdash \vY$}
\doubleLine
\RightLabel{$\il$}
\UnaryInfC{$I \circ \vX \vdash \vY$}
\DisplayProof

&

\AxiomC{$\vX \vdash \vY$}
\doubleLine
\RightLabel{$\ir$}
\UnaryInfC{$\vX \vdash I \circ \vY$}
\DisplayProof

&

\AxiomC{$I \vdash \vY$}
\doubleLine
\RightLabel{$\ql$}
\UnaryInfC{$\dneg I \vdash \vY$}
\DisplayProof

&

\AxiomC{$\vX \vdash I$}
\doubleLine
\RightLabel{$\qr$}
\UnaryInfC{$\vX \vdash \dneg I$}
\DisplayProof
\end{tabular}
\end{center}

\begin{center}
\begin{tabular}{c c c}
\AxiomC{$\vX \vdash \vY$}
\RightLabel{$\dwl$}
\UnaryInfC{$\uz \circ \vX \vdash \vY$}
\DisplayProof

&

\AxiomC{$\vX \vdash \vY$}
\RightLabel{$\dwr$}
\UnaryInfC{$\vX \vdash \vY \circ \vZ$}
\DisplayProof

&

\AxiomC{$\vX \circ (\vY \circ \vZ) \vdash \vW$}
\doubleLine
\RightLabel{$\asl$}
\UnaryInfC{$(\vX \circ \vY) \circ \vZ \vdash \vW$}
\DisplayProof
\end{tabular}
\end{center}

\begin{center}
\begin{tabular}{c c c c}
\AxiomC{$\vX \vdash \vY \circ (\vZ \circ \vW)$}
\doubleLine
\RightLabel{$\asr$}
\UnaryInfC{$\vX \vdash (\vY \circ \vZ) \circ \vW$}
\DisplayProof

&

\AxiomC{$\vX \circ \vY \vdash \vZ$}
\RightLabel{$\pml$}
\UnaryInfC{$\vY \circ \vX \vdash \vZ$}
\DisplayProof

&

\AxiomC{$\vX \vdash \vY \circ \vZ$}
\RightLabel{$\pmr$}
\UnaryInfC{$\vX \vdash \vZ \circ \vY$}
\DisplayProof

&

\AxiomC{$\vX \circ \vX \vdash \vY$}
\RightLabel{$\dcl$}
\UnaryInfC{$\vX \vdash \vY$}
\DisplayProof
\end{tabular}
\end{center}

\begin{center}
\begin{tabular}{c c c c}
\AxiomC{$\vX \vdash \vY \circ \vY$}
\RightLabel{$\dcr$}
\UnaryInfC{$\vX \vdash \vY$}
\DisplayProof

&

\AxiomC{$I \vdash \vY$}
\RightLabel{$\ml$}
\UnaryInfC{$\bull I \vdash \vY$}
\DisplayProof

&

\AxiomC{$\vX \vdash I$}
\RightLabel{$\mr$}
\UnaryInfC{$\vX \vdash \bull I$}
\DisplayProof

&

\AxiomC{$\vX \vdash A$}
\AxiomC{$A \vdash \vY$}
\RightLabel{$\cut$}
\BinaryInfC{$\vX \vdash \vY$}
\DisplayProof
\end{tabular}
\end{center}

\noindent\hrule
\caption{The structural rules for the display calculus $\dkt$~\cite{Kra96}.}\label{fig:structural-rules-DKt}
\end{figure}

\begin{definition}[Structure Translation $\psi$~\cite{Kra96}] We define the translation function $\psi$ from formulae to schematic structures as follows:
\begin{itemize}

\item $\psi(\top) := I$

\item $\psi(p) := \vX_{p}$

\item $\psi(A \land B) := \psi(A) \circ \psi(B)$ 



\item $\psi(\f A) := \dneg \bull \dneg \psi(A)$

\item $\psi(\p A) := \bull \psi(A)$

\end{itemize}

\end{definition}

In the above definition, each atom $p$ is transformed into a unique structure variable $\vX_{p}$. Following Kracht~\cite{Kra96}, we make use of the translation $\psi$ to define primitive tense structural rules.

\begin{definition}[$\dktp$] Each simplified primitive tense axiom of the form shown below left corresponds to a primitive tense structural rule of the form shown below right.
\begin{center}
\begin{tabular}{c @{\hskip 2em} c}
$
\displaystyle{A \rightarrow \bigvee_{1 \leq j \leq m} B_{j}}
$

&

\AxiomC{$\psi(B_{1}) \dar \vX$}
\AxiomC{$\ldots$}
\AxiomC{$\psi(B_{m}) \dar \vX$}
\RightLabel{$\ptsrd$}
\TrinaryInfC{$\psi(A) \dar \vX$}
\DisplayProof
\end{tabular}
\end{center}
If $\paxs$ is a set of simplified primitive tense axioms, then we define $\dktp$ to be the extension of $\dkt$ with all primitive tense structural rules corresponding to the axioms in $\paxs$. We indicate that a display sequent $\dseq$ is provable in $\dktp$ with a display proof $\dproof$ by writing $\proves{\dktp}{\dproof}{\dseq}$. Note that $\dkt$ is equal to $\dktp$ where $\paxs = \emptyset$.
\end{definition}


\begin{theorem}[Display Theorem~\cite{Kra96}]\label{thm:display-theorem}
Each display calculus $\dktp$ possesses the display property.
\end{theorem}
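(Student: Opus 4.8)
The plan is to prove the display property by a routine induction that strips away, one at a time, the structural connectives surrounding the designated substructure, using only the display rules $\dri$--$\drix$ of \fig~\ref{fig:display-rules-DKt}. A preliminary observation is that $\dktp$ extends $\dkt$ solely by primitive tense structural rules, which are not among the display rules; hence for every set $\paxs$ the display rules of $\dktp$ coincide with those of $\dkt$, and it suffices to run the argument once, for an arbitrary fixed $\dktp$.

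The heart of the proof is the following claim: for every display sequent $X \vdash Y$ and every occurrence of a substructure $Z$ in $X$ (and symmetrically in $Y$), some sequent $Z \vdash W$ (or $W \vdash Z$) is derivable from $X \vdash Y$ by display rules alone. I would prove this by induction on the number of structural connectives in the context surrounding the chosen occurrence of $Z$ --- that is, writing the side of the sequent that contains $Z$ as $X = X[Z]$ (or $Y = Y[Z]$), on the number of structural connectives in the context $X[\cdot]$ (resp.\ $Y[\cdot]$). The base case, context of size $0$, is immediate: then $Z$ already \emph{is} the whole antecedent (resp.\ consequent), so $W := Y$ (resp.\ $W := X$) works with the empty derivation. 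For the inductive step I would case-split on the outermost connective of the context. When $Z$ lies in the antecedent, the four subcases --- $Z$ under a $\dneg$, under a $\bull$, in the left argument of a $\circ$, and in the right argument of a $\circ$ --- are each handled by a single application of $\drv$, $\drix$ (read bottom-up), $\dri$, and $\drii$ respectively; when $Z$ lies in the consequent, the four dual subcases are handled by $\drvi$, $\drix$ (read top-down), $\driii$, and $\driv$. In every case the resulting display sequent still exhibits the same occurrence of $Z$, now inside a context with one fewer structural connective, so the induction hypothesis applies and yields the desired $W$; concatenating the display-rule application just performed with the derivation supplied by the hypothesis completes the step. The theorem then follows at once, since this claim is precisely the display property and all rules used belong to every $\dktp$.

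The only genuine work lies in the bookkeeping for the inductive step: checking, in each of the eight subcases, that the stated display rule is applicable with the correct matching of structure variables to the pieces of $X[Z] \vdash Y$ (or $X \vdash Y[Z]$), that it relocates $Z$ --- together with its now-smaller context --- onto whichever side the induction hypothesis is to be applied, and that the context-size measure strictly decreases so that the induction is well founded. This exhaustive but entirely mechanical case analysis is the main obstacle; there is no conceptual difficulty, which is exactly the point of organizing the calculus around a dedicated set of display rules.
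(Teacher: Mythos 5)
Your proof is correct and is essentially the standard argument for the display property: the paper itself does not reprove \thm~\ref{thm:display-theorem} but imports it from Kracht~\cite{Kra96}, whose proof (following Belnap) is exactly this induction on the context of the chosen occurrence of $Z$, peeling off the outermost connective of the relevant side with the appropriate rule among $\dri$--$\drix$. Your case assignments (including the polarity flip under $\dneg$, which is why the statement must allow either $Z \vdash W$ or $W \vdash Z$) and your observation that the display rules of $\dktp$ are independent of $\paxs$ are both accurate, so nothing further is needed.
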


The proof-theoretic significance of primitive tense axioms was identified by Kracht in~\cite{Kra96}. As discussed and proven there, primitive tense extensions of $\kt$ form the largest class of tense logics whose corresponding display calculus satisfies conditions (C1)--(C8), and---conversely---if a display calculus satisfies Belnap's conditions, then it is sound and complete relative to a primitive tense extension of $\kt$. 
 This result is stated in \thm~\ref{thm:display-theorem-I} below, and relies on the notion of a tense logic being \emph{properly displayed}.

\begin{definition}[Properly Displayed~\cite{Kra96}] Let $\srus$ be a set of structural rules such that $\dkt + \srus$ satisfies conditions (C1)---(C8) and let $\kt + \axs$ be the logic obtained from extending the Hilbert calculus $\hil\kt$ with the axioms in $\axs$. Then, we say that a display calculus $\dkt + \srus$ \emph{properly displays} $\kt + \axs$ \ifandonlyif it satisfies conditions (C1)--(C8) along with conditions (1) and (2) below.

(1) For every rule instance of the form shown below left in $\dkt + \srus$ the rule instance of the form shown below right is derivable in $\kt + \axs$: 
\begin{flushleft}
\AxiomC{$(\ssX_{1} \dar \ssY_{1})\sub$}
\AxiomC{$\ldots$}
\AxiomC{$(\ssX_{n} \dar \ssY_{n})\sub$}
\TrinaryInfC{$(\ssX \dar \ssY)\sub$}
\DisplayProof
\end{flushleft}
\begin{flushright}
\AxiomC{$\ftd((\ssX_{1} \dar \ssY_{1})\sub)$}
\AxiomC{$\ldots$}
\AxiomC{$\ftd((\ssX_{n} \dar \ssY_{n})\sub)$}
\TrinaryInfC{$\ftd((\ssX \dar \ssY)\sub)$}
\DisplayProof
\end{flushright}

(2) For every rule instance of the form shown below left (where we assume that an axiom is a rule instance with zero premises) in $\kt + \axs$, the rule instance shown below right is derivable in $\dkt + \srus$.
\begin{center}
\begin{tabular}{c @{\hskip 1em} c}
\AxiomC{$\vA_{1}\sub$}
\AxiomC{$\ldots$}
\AxiomC{$\vA_{n}\sub$}
\TrinaryInfC{$\vA\sub$}
\DisplayProof

&

\AxiomC{$I \dar \vA_{1}\sub$}
\AxiomC{$\ldots$}
\AxiomC{$I \dar \vA_{n}\sub$}
\TrinaryInfC{$I \dar \vA\sub$}
\DisplayProof
\end{tabular}
\end{center}
\end{definition}

As a consequence of the above definition, if a display calculus properly displays a logic, then it is sound and complete relative to that logic.

\begin{theorem}[Display Theorem I~\cite{Kra96}]\label{thm:display-theorem-I} Let $\srus$ be a set of structural rules such that $\dkt + \srus$ satisfies conditions (C1)---(C8) and let $\kt + \axs$ be the logic obtained from extending the Hilbert calculus $\hil\kt$ with the axioms in $\axs$. The display calculus $\dkt + \srus$ properly displays $\kt + \axs$ \ifandonlyif $\kt + \axs$ is axiomatizable with primitive tense axioms.
\end{theorem}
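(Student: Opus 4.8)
The plan is to prove the two directions of the biconditional separately; the direction from right to left is a construction-and-verification essentially already set up by our definition of $\dktp$, while the direction from left to right requires a syntactic analysis of arbitrary structural rules. In both cases I would first invoke the fact (noted just after the definition of ``properly displayed'') that if $\dkt + \srus$ properly displays $\kt + \axs$ then $\dkt + \srus$ is sound and complete for it, so that the theorems of $\kt + \axs$ are exactly the formulae $A$ with $I \dar A$ provable in $\dkt + \srus$.

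For the right-to-left direction, suppose $\kt + \axs$ is axiomatizable with primitive tense axioms. By \cref{prop:primitive-tense-normal-form} and the remark following it, $\kt + \axs = \ktp$ for a set $\paxs$ of \emph{simplified} primitive tense axioms, and I would take as the witnessing rule set $\srus$ the primitive tense structural rules $\ptsrd$ obtained from $\paxs$ via $\psi$, so that $\dkt + \srus = \dktp$. It then remains to check that $\dktp$ (i) satisfies (C1)--(C8) and (ii) properly displays $\ktp$. For (i): (C1), (C5) and (C8) are straightforward since each $\ptsrd$ is purely structural --- it introduces no new logical connective and no principal formula, so these conditions reduce essentially to their verification for $\dkt$~\cite{Bel82} --- while (C2)--(C4), (C6) and (C7) follow from the shape of $\psi$: one checks by a routine induction that each structure variable $\vX_p$ occurs at most once in $\psi(A)$ (because $A$ mentions each atom at most once) and only in an a-part position of $\psi(A) \dar \vX$, whereas the context variable $\vX$ is a c-part, and the rule, being a schema, is closed under substitution of structures for its structure variables. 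For (ii) I would verify the two clauses of ``properly displays'' directly: the $\tau$-image of an instance of $\ptsrd$ is the inference ``from $B_j[\vec C/\vec p] \to D$ ($1 \le j \le m$) infer $A[\vec C/\vec p] \to D$'', which is derivable in $\hil\ktp$ from the axiom instance $A[\vec C/\vec p] \to \bigvee_{j} B_j[\vec C/\vec p]$ by classical reasoning (using that $\tau(\psi(E)\sigma)$ is $\kt$-provably equivalent to $E[\vec C/\vec p]$ for $E \in \langatoms$, an easy induction via A5--A6); conversely the ``$I \dar (\cdot)$'' image of an axiom $A \to \bigvee_j B_j$ is derived in $\dktp$ by applying $\ptsrd$ to the premises $\psi(B_j) \dar \bigvee_j B_j$, using the lemma that each $E \in \langatoms$ is interderivable in $\dkt$ with its structural translation $\psi(E)[\vec p/\vec{\vX}_p]$ (induction on $E$), together with $\il$, $\impr$ and $\cut$. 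The clauses for the $\dkt$-rules are the known soundness and completeness of $\dkt$ over $\kt$~\cite{Kra96,Wan94}.

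For the left-to-right direction, suppose $\dkt + \srus$ satisfies (C1)--(C8) and properly displays $\kt + \axs$. The crux is to show that each structural rule $r \in \srus$ is, modulo the display postulates of \cref{fig:display-rules-DKt}, interderivable over $\dkt$ with a rule of the form $\ptsrd$ obtained from some simplified primitive tense axiom via $\psi$. Following Kracht~\cite{Kra96}, the congruence conditions (C2)--(C4) force each structure variable of $r$ to occur consistently --- in every premise and the conclusion --- either always as an a-part or always as a c-part; using the display rules one then brings $r$ into a normal form in which a single context variable is displayed as the entire consequent throughout, and the remaining antecedent structures, being built only from $\circ$, $\bull$, $\dneg$ and positively-occurring structure variables, are exactly of the form $\psi(A), \psi(B_1), \ldots, \psi(B_m)$ for $A, B_j \in \langatoms$ with $A$ mentioning each atom at most once. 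Taking $\paxs$ to be the set of simplified primitive tense axioms $A \to \bigvee_j B_j$ so extracted, clause (1) of ``properly displays'' gives via $\tau$ that each axiom of $\paxs$ is a theorem of $\kt + \axs$, so $\kt + \paxs \subseteq \kt + \axs$, and clause (2) together with the interderivability arguments of the previous paragraph, run now inside $\dkt + \srus$, gives $\kt + \axs \subseteq \kt + \paxs$; hence $\kt + \axs = \kt + \paxs$ is axiomatizable with primitive tense axioms.

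I expect the main obstacle to be precisely this structural-rule normalization in the left-to-right direction: squeezing, out of the combinatorial conditions (C2)--(C4) together with the substitution-closure conditions (C6)--(C7), the fact that every admissible structural rule can be massaged via the display postulates into the very restricted $\psi$-image shape $\ptsrd$. The delicate part is the bookkeeping of how the structural connectives $I$, $\dneg$, $\bull$ and $\circ$ propagate the a-part/c-part polarity of structure variables through a whole rule, which is the technical core of Kracht's original argument; by contrast the verification of (C1)--(C8) for $\ptsrd$ and the interderivability of $\dkt$-derivations with $\tau$-translations are routine inductions on structure, formula, and proof.
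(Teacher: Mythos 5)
The paper does not prove this statement: Theorem~\ref{thm:display-theorem-I} is imported verbatim from Kracht~\cite{Kra96} as background, so there is no in-paper argument to compare yours against. Judged on its own terms, your proposal is a faithful high-level reconstruction of Kracht's original proof, and you have correctly located where the weight lies. Two remarks. First, you are right to read the right-to-left direction existentially (constructing the witnessing $\srus$ as the $\ptsrd$ rules of a simplified primitive axiomatization); read with $\srus$ fixed and arbitrary, as the statement literally quantifies it, that direction would be false (take $\srus = \emptyset$ and $\axs = \{A \to \f A\}$), so your charitable reading is the intended one. Second, your verification sketch for that direction is essentially complete modulo routine inductions: the a-part polarity of each $\vX_p$ under $\psi$ (the $\dneg\bull\dneg$ pattern preserves polarity), the linearity of $\psi(A)$ in the conclusion, and the $\tau$/$\psi$ interderivability lemmas are all as you describe.

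The genuinely unfinished part is the one you flag yourself: the normalization, in the left-to-right direction, of an arbitrary $\srus$-rule satisfying (C1)--(C8) into the shape $\psi(B_1) \dar \vX, \ldots, \psi(B_m) \dar \vX \,/\, \psi(A) \dar \vX$. Your sentence ``the remaining antecedent structures, being built only from $\circ$, $\bull$, $\dneg$ and positively-occurring structure variables, are exactly of the form $\psi(A), \psi(B_j)$'' hides the two nontrivial steps: (i) an a-parts-only antecedent structure over $\{I, \circ, \bull, \dneg\}$ is not syntactically in the image of $\psi$ (e.g.\ $\dneg(\dneg\vX \circ \dneg\vY)$ has both variables as a-parts), so one must prove a normal-form lemma pushing $\dneg$ inward until only the $\dneg\bull\dneg$ and bare-variable patterns remain; and (ii) the ``at most one occurrence of each atom in $A$'' constraint must be extracted from (C3) read at the level of \emph{occurrences} (a premise occurrence congruent to two conclusion occurrences would violate it), which is how Kracht obtains linearity of the antecedent. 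Since you explicitly identify this as the technical core rather than claiming it is done, the proposal is an acceptable plan for reproving Kracht's theorem, but it is a plan and not yet a proof of its hardest direction.
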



\fig~\ref{fig:derivable-rules-display} shows some derivable rules in $\dktp$, which will be of use in our translation work later on. All rules are derivable by means of display and/or structural rules.

\begin{figure}[t]

\noindent\hrule

\begin{center}
\begin{tabular}{c @{\hskip 1em} c @{\hskip 1em} c}
\AxiomC{$\dneg \bull \dneg \vX \dar \vY$}
\RightLabel{$\deri$}
\doubleLine
\UnaryInfC{$\vX \dar \dneg \bull \dneg \vY$}
\DisplayProof

&

\AxiomC{$\vX \dar \bull \vY \circ \bull \vZ$}
\RightLabel{$\derii$}
\UnaryInfC{$\vX \dar \bull (\vY \circ \vZ)$}
\DisplayProof

&

\AxiomC{$\vX \dar \dneg \bull \dneg \vY \circ \dneg \bull \dneg \vZ$}
\RightLabel{$\deriii$}
\UnaryInfC{$\vX \dar \dneg \bull \dneg (\vY \circ \vZ)$}
\DisplayProof
\end{tabular}
\end{center}

\begin{center}
\begin{tabular}{c @{\hskip 1em} c}
\AxiomC{$\bull \vX \circ \bull \vY \dar \vZ$}
\RightLabel{$\deriv$}
\UnaryInfC{$\bull (\vX \circ \vY) \dar \vZ$}
\DisplayProof

&

\AxiomC{$\dneg \bull \dneg \vX \circ \dneg \bull \dneg \vY \dar \vZ$}
\RightLabel{$\derv$}
\UnaryInfC{$\dneg \bull \dneg (\vX \circ \vY) \dar \vZ$}
\DisplayProof
\end{tabular}
\end{center}

\noindent\hrule

\caption{Derivable rules in $\dktp$.\label{fig:derivable-rules-display}}
\end{figure}

\begin{proposition}\label{prop:dis-derivable-rules}
The rules in \fig~\ref{fig:derivable-rules-display} are derivable in $\dktp$.
\end{proposition}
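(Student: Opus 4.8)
The plan is to prove the proposition by exhibiting, for each of the five rules in \fig~\ref{fig:derivable-rules-display}, an explicit derivation; since every such derivation can already be carried out in $\dkt$, this yields derivability in every $\dktp$. All five rules are single-premise, so in each case it suffices to transform the premise sequent into the conclusion sequent by a sequence of applications of the display rules of \fig~\ref{fig:display-rules-DKt} and the structural rules of \fig~\ref{fig:structural-rules-DKt}. I would present $\deri$ first, then treat $\derii$--$\derv$ by a common recipe.

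\textbf{The rule $\deri$.} This one is purely a matter of the display postulates and is moreover reversible. Starting from $\dneg\bull\dneg\vX\vdash\vY$, one application of $\drv$ gives $\dneg\vY\vdash\bull\dneg\vX$, then $\drix$ gives $\bull\dneg\vY\vdash\dneg\vX$, and finally $\drvi$ gives $\vX\vdash\dneg\bull\dneg\vY$; since $\drv$, $\drix$, $\drvi$ are each reversible, the same three steps read bottom-up give the converse, so $\deri$ is a derivable reversible rule. (Conceptually it is nothing more than the $\bull$-residuation $\drix$ conjugated by the two flanking negations, i.e. the structural counterpart of the adjunction between $\f$ and $\h$.)

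\textbf{The rules $\derii$--$\derv$.} Each of these expresses a one-directional distribution of a modal structural connective---$\bull$ in $\derii$ and $\deriv$, and $\dneg\bull\dneg$ in $\deriii$ and $\derv$---over $\circ$, either on the consequent side ($\derii$, $\deriii$) or on the antecedent side ($\deriv$, $\derv$). For each I would follow the same recipe: first use the display rules to bring the premise into a shape in which the structure to be distributed over is displayed as the entire antecedent or consequent; then use the structural rules of $\dkt$ (associativity, permutation, weakening, contraction, and the $I$- and $\bull I$-rules) to reassemble the target structure; and finally use the display rules once more to put the sequent into the required form. The antecedent-side rules $\deriv$ and $\derv$ are the left/right mirror images of $\derii$ and $\deriii$ and are handled symmetrically, and $\deriii$, $\derv$ differ from $\derii$, $\deriv$ only by extra pairs of negations, which the display rules (together with $\deri$) absorb and restore; so the crux is to obtain $\derii$ and, dually, $\deriv$.

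\textbf{Main obstacle.} I expect the delicate step to be precisely the reassembly for $\derii$ (equivalently $\deriv$): unlike $\deri$, this distribution is \emph{not} obtained merely by shuffling data inside the display-equivalence class, so the derivation must be assembled out of several display and structural steps, and the care lies in verifying that such a chain really does connect $\vX\vdash\bull\vY\circ\bull\vZ$ to $\vX\vdash\bull(\vY\circ\vZ)$ (respectively $\bull\vX\circ\bull\vY\vdash\vZ$ to $\bull(\vX\circ\vY)\vdash\vZ$) uniformly in the structure variables $\vX$, $\vY$, $\vZ$. Once $\derii$ and $\deriv$ are in hand, the remaining two rules follow with little extra work, completing the proof.
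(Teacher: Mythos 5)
Your proposal is correct in approach, and it is worth noting that the paper itself gives no proof of this proposition at all: it merely remarks before \fig~\ref{fig:derivable-rules-display} that the rules are ``derivable by means of display and/or structural rules,'' so there is no worked-out argument to compare against. Your derivation of $\deri$ is complete and exactly right: $\drv$, then $\drix$, then $\drvi$, each reversible, so the rule is a derived reversible rule. For $\derii$--$\derv$ you correctly name the toolbox (display rules plus weakening, permutation, contraction) but stop at precisely the step you flag as delicate, so let me confirm it goes through and make explicit the one idea your recipe leaves implicit: weaken \emph{underneath} a displayed $\bull$, then contract the resulting duplicated boxed structure. Concretely for $\derii$: from $\vX \dar \bull\vY \circ \bull\vZ$, apply $\driii$ and $\drix$ to reach $\bull(\vX \circ \dneg\bull\vZ) \dar \vY$, weaken with $\dwr$ to $\vY \circ \vZ$, and reverse the two display steps to obtain $\vX \dar \bull(\vY\circ\vZ) \circ \bull\vZ$; repeating the same maneuver on the residual $\bull\vZ$ (via $\driv$, $\drix$, $\dwr$, $\pmr$) gives $\vX \dar \bull(\vY\circ\vZ) \circ \bull(\vY\circ\vZ)$, and $\dcr$ finishes. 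The rule $\deriv$ is the mirror image using $\dwl$, $\pml$, $\dcl$ (with a small detour through $\drvii$ to cancel a $\dneg\dneg$ that arises), and $\deriii$, $\derv$ then follow by conjugating with $\deri$ as you say. Uniformity in the structure variables is automatic since only display and structural rules are used and these are closed under substitution of structures for congruent structure variables (conditions (C6) and (C7)). So there is no error in your proposal, only an unexecuted step, and its execution is as you predicted.
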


Last, we note that throughout the remainder of the paper we use distinct types of inference lines to indicate distinct types of inferences. In particular, a solid line `\begin{tikzpicture} \draw (-.25,0) -- (.25,0); \end{tikzpicture}' (as shown first below) is used to indicate a rule application, a dashed line `\begin{tikzpicture} \draw[dashed] (-.25,0) -- (.25,0); \end{tikzpicture}' (as shown second below) is used to indicate the application of a derivable or admissible rule, a double line `\begin{tikzpicture} \draw[double] (-.25,0) -- (.25,0); \end{tikzpicture}' (as shown third below) is used to indicate a reversible rule application, and a dotted line `\begin{tikzpicture} \draw[loosely dotted, thick] (-.275,0) -- (.275,0); \end{tikzpicture}' (as shown fourth below) is used to indicate that the premise and conclusion are identical or isomorphic. (NB. We define isomorphisms between sequents in the next section.) In the notation below, each $P$ and $P_{i}$ denote a premise and each $C$ is the conclusion; these may be display or labeled sequents (defined in the next section).
\begin{center}
\begin{tabular}{c @{\hskip 2em} c @{\hskip 2em} c @{\hskip 2em} c}
\AxiomC{$P_{1}$}
\AxiomC{$\ldots$}
\AxiomC{$P_{n}$}
\TrinaryInfC{$C$}
\DisplayProof

&

\AxiomC{$P_{1}$}
\AxiomC{$\ldots$}
\AxiomC{$P_{n}$}
\dashedLine
\TrinaryInfC{$C$}
\DisplayProof

&

\AxiomC{$P$}
\doubleLine
\UnaryInfC{$C$}
\DisplayProof

&

\AxiomC{$P$}
\dottedLine
\UnaryInfC{$C$}
\DisplayProof
\end{tabular}
\end{center}

\section{Labeled Calculi for Tense Logics}\label{section-3}


Labeled sequents generalize the syntax of Gentzen-style sequents through the incorporation of labels and semantic elements. This idea is rooted in the work of Kanger~\cite{Kan57}, who employed \emph{spotted formulae} in the construction of sequent systems for modal logics. Since then, large classes of modal and constructive logics have been supplemented with labeled sequent systems~\cite{CasSma02,Gab96,Min97,Sim94,Vig00}. The labeled formalism has been successful in generating modular systems that cover extensive classes of logics in a uniform manner, i.e. through the inclusion or exclusion of structural rules, one labeled system for a logic may be transformed into a labeled system for another logic~\cite{Sim94,Vig00}. Moreover, general results exist (e.g.~\cite{Bor08,Hei05}) which show that labeled systems commonly possess favorable properties such as admissible structural rules, invertible logical rules, and cut-admissibility (we formally define these properties below). In this section, we introduce labeled calculi with primitive tense structural rules. The base calculus for the logic $\kt$ is a notational variant of Boretti's~\cite{Bor08} labeled calculus $\gtkt$, though the primitive tense structural rules we define are entirely new.\footnote{We employ a schematic representation of labeled calculi whereby rules are instantiated by means of substitutions to better match the display formalism.}

We let $\lab = \{w, u, v, \ldots \}$ be a denumerable set of \emph{labels}. A \emph{labeled sequent} is defined to be a formula of the form $\rel, \Gamma \sar \Delta$, where $\rel$ is a (potentially empty) set of \emph{relational atoms} of the form $Rwu$, and $\Gamma$ and $\Delta$ are (potentially empty) multisets of \emph{labeled formulae} of the form $w : A$, where $w$ and $u$ range over $\lab$, and $A$ ranges over $\lang$.\footnote{Usually labeled calculi employ \emph{multisets} of relational atoms. However, since weakenings and contractions are admissible over relational atoms in our setting, it is well-known that sets can be used instead of multisets.} We use $\rel$ to denote sets of relational atoms, $\Gamma$ and $\Delta$ to denote multisets of labeled formulae, and $\lseq$ to denote labeled sequents, as well as use annotated versions of these symbols. We let $\lab(\rel)$, $\lab(\Gamma)$, and $\lab(\lseq)$ indicate the set of all labels in a set $\rel$, a multiset $\Gamma$, or a labeled sequent $\lseq$, respectively. We define the \emph{length} of a labeled sequent $\rel, \Gamma \sar \Delta$ to be $\lgth{\rel, \Gamma \sar \Delta} = |\rel| + |\Gamma| + |\Delta|$, where $|\rel|$, $|\Gamma|$, and $|\Delta|$ denote the cardinality of each (multi)set.

 Given two labeled sequents $\lseq_{1} := \rel_{1}, \Gamma_{1} \sar \Delta_{1}$ and $\lseq_{2} := \rel_{2}, \Gamma_{2} \sar \Delta_{2}$, we define the \emph{sequent composition} $\lseq_{1} \seqcomp \lseq_{2} := \rel_{1}, \rel_{2},\Gamma_{1}, \Gamma_{2} \sar \Delta_{1}, \Delta_{2}$. It follows by definition that sequent compositions are (1) associative, i.e. $(\lseq_{1} \seqcomp \lseq_{2}) \seqcomp \lseq_{3} = \lseq_{1} \seqcomp (\lseq_{2} \seqcomp \lseq_{3})$, and (2) commutative, i.e. $\lseq_{1} \seqcomp \lseq_{2} = \lseq_{2} \seqcomp \lseq_{1}$. Due to the associative property, we will often omit parentheses when writing successive sequent compositions. If $\{A_{1}, \ldots, A_{n}\} \subset \lang$, we define $w : \{A_{1}, \ldots, A_{n}\} := w : A_{1}, \ldots, w : A_{n}$.

As in the display setting, we utilize variables in the formulation of our inference rules and calculi. In particular, we let $\lw, \lu, \lv, \ldots$ denote \emph{label variables}, which will be instantiated with labels, we define \emph{schematic relational atoms} to be formulae of the form $R\lw\lu$ with $\lw$ and $\lu$ label variables, and we define \emph{schematic labeled formulae} to be formulae of the form $\lw : \ssF$ where $\lw$ is a label variable and $\ssF$ is generated by the following grammar in BNF:
$$
\ssF ::= \vP \ | \ \top \ | \ \bot \ | \ \vA \ | \ {\triangledown} \vA \ | \ \vA \odot \vB
$$
 with $\triangledown \in \{\neg, \g, \f, \h, \p\}$, $\odot \in \{\land, \lor, \imp\}$, $\vP$ an atomic variable, and $\vA, \vB$ formula variables. We use $\vL$ and annotated versions thereof to denote \emph{labeled sequent variables}, which will be instantiated with labeled sequents. Last, we define \emph{schematic labeled sequents} to be formulae of the form
$$
(\dot{\rel}_{1},\dot{\Gamma}_{1} \sar \dot{\Delta}_{1}) \seqcomp \cdots \seqcomp (\dot{\rel}_{n},\dot{\Gamma}_{n} \sar \dot{\Delta}_{n}) \seqcomp \vL_{1} \seqcomp \cdots \seqcomp \vL_{k}
$$
 where, for each $1 \leq i \leq n$ and $1 \leq j \leq k$, $\dot{\rel}_{i}$ is a (potentially empty) set of schematic relational atoms, $\dot{\Gamma}_{i}$ and $\dot{\Delta}_{i}$ are (potentially empty) multisets of schematic labeled formulae, and $\vL_{j}$ is a labeled sequent variable. We sometimes denote schematic labeled sequents by $\sls$ and annotated versions thereof. As can be seen in \figs~\ref{fig:labeled-logical-rules}--\ref{fig:labeled-structural-rules}, schematic labeled sequents are used to define inference rules, instances of which, are obtained via applications of substitutions.

\begin{definition}[Substitution $\sub$] A \emph{substitution} $\sub$ is defined to be a function (written in postfix notation) that satisfies the following: $\lw\sub \in \lab$, $\vP\sub \in \propset$, $\vA\sub \in \lang$, and $\vL\sub$ is a labeled sequent. We extend a substitution $\sub$ to sets of schematic relational atoms, multisets of schematic labeled formulae, and schematic labeled sequents in the expected way by applying $\sub$ to each label variable, atomic variable, formula variable, and labeled sequent variable occurring therein.

If we want to indicate what variables are mapped to which labeled sequents, formulae, atoms, or labels then we write a substitution $\sub$ as
$$
[\lseq_{1} / \vL_{1}, \ldots, \lseq_{r} / \vL_{r}, A_{1} / \vA_{1}, \ldots, A_{n} / \vA_{n}, p_{1} / \vP_{1}, \ldots, p_{k} / \vP_{k}, w_{1} / \lw_{1}, \ldots, w_{m} / \lw_{m}]
$$
 indicating that $\vL_{s}\sub = \lseq_{s}$, $\vA_{i}\sub = A_{i}$, $\vP_{j}\sub = p_{j}$, and $\lw_{t}\sub = w_{t}$ for $s \in \{1, \ldots, r\}$, $i \in \{1, \ldots, n\}$, $j \in \{1, \ldots, k\}$, and $t \in \{1, \ldots, m\}$.
\end{definition}

\begin{figure}[t]
\noindent\hrule

\begin{center}
\begin{tabular}{c c c} 
\AxiomC{}
\RightLabel{$\id$}
\UnaryInfC{$(\lw : \vP \sar \lw : \vP) \seqcomp \vL$}
\DisplayProof

&

\AxiomC{}
\RightLabel{$\botl$}
\UnaryInfC{$(\lw :\bot \sar ) \seqcomp \vL$}
\DisplayProof

&

\AxiomC{}
\RightLabel{$\topr$}
\UnaryInfC{$( \sar \lw : \top) \seqcomp \vL$}
\DisplayProof
\end{tabular}
\end{center}

\begin{center}
\begin{tabular}{c c}

\AxiomC{$( \sar \lw : \vA) \seqcomp \vL$}
\RightLabel{$\negl$}
\UnaryInfC{$( \lw : \neg \vA \sar ) \seqcomp \vL$}
\DisplayProof

&

\AxiomC{$(\lw : \vA \sar ) \seqcomp \vL$}
\RightLabel{$\negr$}
\UnaryInfC{$( \sar \lw : \neg \vA) \seqcomp \vL$}
\DisplayProof

\end{tabular}
\end{center}

\begin{center}
\begin{tabular}{c c}

\AxiomC{$(\lw : \vA, \lw : \vB \sar ) \seqcomp \vL$}
\RightLabel{$\conl$}
\UnaryInfC{$(\lw : \vA \wedge \vB  \sar ) \seqcomp \vL$}
\DisplayProof

&

\AxiomC{$( \sar \lw : \vA) \seqcomp \vL$}
\AxiomC{$( \sar \lw : \vB) \seqcomp \vL$}
\RightLabel{$\conr$}
\BinaryInfC{$( \sar \lw : \vA \wedge \vB) \seqcomp \vL$}
\DisplayProof

\end{tabular}
\end{center}

\begin{center}
\begin{tabular}{c @{\hskip 1em} c}
\AxiomC{$(\lw : \vA \sar ) \seqcomp \vL$}
\AxiomC{$(\lw : \vB \sar ) \seqcomp \vL$}
\RightLabel{$\disl$}
\BinaryInfC{$(\lw : \vA \vee \vB \sar ) \seqcomp \vL$}
\DisplayProof

&

\AxiomC{$( \sar \lw : \vA, \lw : \vB) \seqcomp \vL$}
\RightLabel{$\disr$}
\UnaryInfC{$( \sar \lw : \vA \vee \vB) \seqcomp \vL$}
\DisplayProof
\end{tabular}
\end{center}

\begin{center}
\begin{tabular}{c c}
\AxiomC{$( \sar \lw : \vA) \seqcomp \vL$}
\AxiomC{$( \lw : \vB \sar ) \seqcomp \vL$}
\RightLabel{$\impl$}
\BinaryInfC{$(\lw : \vA \imp \vB \sar ) \seqcomp \vL$}
\DisplayProof 

&

\AxiomC{$( \lw : \vA \sar \lw : \vB) \seqcomp \vL$}
\RightLabel{$\impr$}
\UnaryInfC{$( \sar \lw : \vA \imp \vB) \seqcomp \vL$}
\DisplayProof
\end{tabular}
\end{center}

\hrule
\caption{Initial and logical rules of $\gtkt$.}
\label{fig:labeled-logical-rules}
\end{figure}

 It follows from the above definition that applying a substitution to a schematic labeled sequent yields a labeled sequent as the result. We define a \emph{rule} or \emph{inference rule} to be a schema utilizing schematic labeled sequents, which produces a \emph{rule instance} or \emph{rule application} when a substitution is applied to each schematic labeled sequent occurring in the rule (see \eg~\ref{eg:labeled-rule-application} below). When a substitution $\sub$ is applied to a rule, we assume that every label variable, atomic variable, formula variable, and labeled sequent variable occurring in the rule is within the domain of $\sub$.

The labeled calculus $\gtkt$ is defined to be set of the inference rules presented in \figs~\ref{fig:labeled-logical-rules} and \ref{fig:labeled-modal-rules}. $\gtkt$ contains the initial rules $\id$, $\botl$, and $\topr$, as well as logical rules for $\neg$, $\lor$, $\land$, and $\imp$, all of which are shown in \fig~\ref{fig:labeled-logical-rules}. We refer to each rule $\id$, $\botl$, and $\topr$ as an \emph{initial rule} and refer to any labeled sequent generated by an initial rule as an \emph{initial sequent}. The rules governing the introduction of modal formulae are provided in \fig~\ref{fig:labeled-modal-rules}, and we note that $\gr$, $\hr$, $\fl$, and $\pl$ have a side condition, namely, for any application of the rule with a substitution $\sub$, $\lu\sub$ must be \emph{fresh}, i.e must not occur in the conclusion of the rule application. As in the display setting, the explicitly presented formula(e) in the premise(s) of a rule is (are) referred to as \emph{auxiliary} and the explicitly presented formula(e) in the conclusion is (are) called \emph{principal}. We also note that the labeled sequent variable $\vL$ in each inference rule provides the \emph{context} of the rule when instantiated. This formulation of labeled sequent rules is new and will be helpful in studying and defining translations with display calculi later on since substitutions may be explicitly taken into account in translations. 

\begin{example}\label{eg:labeled-rule-application} To provide the reader with intuition concerning the application of rules, we give two examples. First, let $\lseq = (\rel, \Gamma \sar \Delta)$ and $\sub = [\lseq / \vL, q / \vP, u / \lw]$. Applying $\sub$ to the rule $\id$ (shown below left) yields the rule instance below right.
\begin{center}
\begin{tabular}{c c c}
\AxiomC{}
\RightLabel{$\id$}
\UnaryInfC{$((\lw : \vP \sar \lw : \vP) \seqcomp \vL)\sub$}
\DisplayProof

&

$=$

&

\AxiomC{}
\RightLabel{$\id$}
\UnaryInfC{$(u : q \sar u : q) \seqcomp (\rel, \Gamma \sar \Delta)$}
\RightLabel{$=$}
\dottedLine
\UnaryInfC{$\rel, \Gamma, u : q \sar u : q, \Delta$}
\DisplayProof
\end{tabular}
\end{center}
Second, let $\lseq = (\rel, \Gamma \sar \Delta)$ and $\sub = [\lseq / \vL, \p p / \vA, w / \lw, v / \lu]$. Applying $\sub'$ to the rule $\gr$ (shown below left) yields the rule instance below right.
\begin{center}
\begin{tabular}{c c c}
\AxiomC{$((R\lw\lu \sar \lu : \vA) \seqcomp \vL)\sub'$}
\RightLabel{$\gr$}
\UnaryInfC{$(( \sar \lw : \g \vA) \seqcomp \vL)\sub'$}
\DisplayProof

&

$=$

&

\AxiomC{}
\RightLabel{$\id$}
\UnaryInfC{$(Rwv \sar v : \p p) \seqcomp (\rel, \Gamma \sar \Delta)$}
\RightLabel{$=$}
\dottedLine
\UnaryInfC{$\rel, Rwv, \Gamma, v : \p q \sar \Delta$}
\RightLabel{$\gr$}
\UnaryInfC{$\rel, \Gamma, w : \g \p q \sar \Delta$}
\RightLabel{$=$}
\dottedLine
\UnaryInfC{$( \sar w : \g \p p) \seqcomp (\rel, \Gamma \sar \Delta)$}
\DisplayProof
\end{tabular}
\end{center}
\end{example}


\emph{Proofs} are defined in the usual, inductive way: each instance of an initial rule is a proof, and applying a rule to the conclusion of a proof, or between conclusions of proofs, yields a proof. We use $\lproof$ and annotated versions thereof to denote labeled proofs. A labeled sequent is \emph{derivable} \iffi it is the conclusion of a proof. As with display proofs, we use various metrics to quantify the size of, or certain aspects of, proofs. We define the \emph{quantity}, \emph{width}, and \emph{size} of a proof just as in \dfn~\ref{def:hght-wdth-size} (though relative to labeled proofs), i.e. the quantity of a proof is equal to the number of sequents it contains, the width is defined to be the maximal length among all labeled sequents occurring in the proof, and the size is defined to be the product of the quantity and the width of the proof. For a labeled proof $\lproof$, we let $\qty{\lproof}$ denote the quantity, $\wdth{\lproof}$ denote the width, and $\size{\lproof}$ denote the size.

\begin{figure}[t]
\noindent\hrule

\begin{center}
\begin{tabular}{c c}
\AxiomC{$(R\lw\lu, \lu : \vA \sar ) \seqcomp \vL$}
\RightLabel{$\fl^{\dag}$}
\UnaryInfC{$(\lw : \f \vA \sar ) \seqcomp \vL$}
\DisplayProof

&

\AxiomC{$(R\lu\lw, \lu : \vA \sar ) \seqcomp \vL$}
\RightLabel{$\pl^{\dag}$}
\UnaryInfC{$(\lw : \p \vA \sar ) \seqcomp \vL$}
\DisplayProof
\end{tabular}
\end{center}

\begin{center}
\begin{tabular}{c c}
\AxiomC{$(R\lw\lu \sar \lu : \vA) \seqcomp \vL$}
\RightLabel{$\gr^{\dag}$}
\UnaryInfC{$( \sar \lw : \g \vA) \seqcomp \vL$}
\DisplayProof

&

\AxiomC{$(R\lu\lw \sar \lu : \vA) \seqcomp \vL$}
\RightLabel{$\hr^{\dag}$}
\UnaryInfC{$( \sar \lw : \h \vA) \seqcomp \vL$}
\DisplayProof
\end{tabular}
\end{center}

\begin{center}
\begin{tabular}{c c}
\AxiomC{$(R\lw\lu, \lw : \g \vA, \lu : \vA \sar ) \seqcomp \vL$}
\RightLabel{$\gl$}
\UnaryInfC{$(R\lw\lu, \lw : \g \vA \sar ) \seqcomp \vL$}
\DisplayProof

&

\AxiomC{$(R\lu\lw, \lw : \h \vA, \lu : \vA \sar ) \seqcomp \vL$}
\RightLabel{$\hl$}
\UnaryInfC{$(R\lu\lw, \lw : \h \vA \sar ) \seqcomp \vL$}
\DisplayProof
\end{tabular}
\end{center}

\begin{center}
\begin{tabular}{c c}
\AxiomC{$(R\lw\lu \sar \lu : \vA, \lw : \f \vA) \seqcomp \vL$}
\RightLabel{$\fr$}
\UnaryInfC{$(R\lw\lu \sar \lw : \f \vA) \seqcomp \vL$}
\DisplayProof

&

\AxiomC{$(R\lu\lw \sar \lw : \p \vA, \lu : \vA) \seqcomp \vL$}
\RightLabel{$\pr$}
\UnaryInfC{$(R\lu\lw \sar \lw : \p \vA) \seqcomp \vL$}
\DisplayProof
\end{tabular}
\end{center}

\hrule
\caption{Modal rules. The $\dag$ side condition stipulates that in a rule application under a substitution $\sub$, the label $\lu\sub$ must be fresh, i.e. it cannot occur in the conclusion of the rule application.}
\label{fig:labeled-modal-rules}
\end{figure}


 A rule is (\emph{quantity-preserving}) \emph{admissible} in a calculus \ifandonlyif for any instance of the rule, if each premise $\lseq_{i}$ with $1 \leq i \leq n$ is derivable with a proof $\lproof_{i}$, then the conclusion is derivable with a proof $\lproof$ (such that $\qty{\lproof} \leq \max\{\qty{\lproof_{i}} \ | \ 1 \leq i \leq n \}$). To simplify terminology, we write \emph{qp-admissible} rather than \emph{quantity-preserving admissible}. The (qp-)admissible structural rules for $\gtkt$ are shown in \fig~\ref{fig:labeled-structural-rules}. More specifically, the rules $\lsub$, $\ctrl$, $\ctrr$, and $\wk$ rules are qp-admissible in $\gtkt$ and $\cut$ is admissible~\cite{Bor08}. The $\lsub$ rule is called the \emph{labeled substitution rule} because it applies a \emph{label substitution} $(w/u)$ to any instance of the premise to obtain the conclusion. As usual, a \emph{label substitution} $(w/u)$ replaces each label $u$ occurring in a labeled sequent by a label $w$ (see~\cite{Vig00}). For example, $(Ruv, w : A \sar u : B)(w/u) = (Rwv, w : A \sar w : B)$.

\begin{example}\label{eg:struc-rules-example} To provide the reader with intuition, we give two examples of applications of structural rules. First, let $\lseq = (Ruv, w : A \sar u : B)$ and $\sub = [\lseq / \vL, w / \lw, u / \lv]$. Applying $\sub$ to the rule $\lsub$ (shown below left) yields the rule instance below right.
\begin{center}
\begin{tabular}{c c c}
\AxiomC{$\vL\sub$}
\RightLabel{$\lsub$}
\UnaryInfC{$\vL(\lw/\lv)\sub$}
\DisplayProof

&

$=$

&

\AxiomC{$Ruv, w : A \sar u : B$}
\RightLabel{$\lsub$}
\UnaryInfC{$(Ruv, w : A \sar u : B)(w/u)$}
\RightLabel{$=$}
\dottedLine
\UnaryInfC{$Rwv, w : A \sar w : B$}
\DisplayProof
\end{tabular}
\end{center}
Second, let $\lseq = (\rel, \Gamma \sar \Delta)$, $\lseq' = (Rwu \sar w : A)$, and $\sub = [\lseq / \vL, \lseq' / \vL']$. Applying $\sub'$ to the rule $\wk$ (shown below left) yields the rule instance below right.
\begin{center}
\begin{tabular}{c c c}
\AxiomC{$\vL\sub'$}
\RightLabel{$\wk$}
\UnaryInfC{$(\vL \seqcomp \vL')\sub'$}
\DisplayProof

&

$=$

&

\AxiomC{$\rel, \Gamma \sar \Delta$}
\RightLabel{$\wk$}
\UnaryInfC{$\rel, Rwu, \Gamma \sar w : A, \Delta$}
\RightLabel{$=$}
\dottedLine
\UnaryInfC{$(\rel, \Gamma \sar \Delta) \seqcomp (Rwu \sar w : A)$}
\DisplayProof
\end{tabular}
\end{center}
\end{example}

\begin{figure}[t]
\noindent\hrule
\begin{center}
\begin{tabular}{c c c}
\AxiomC{$\vL$}
\RightLabel{$\lsub$}
\UnaryInfC{$\vL(\lw/\lv)$}
\DisplayProof

&

\AxiomC{$\vL$}
\RightLabel{$\wk$}
\UnaryInfC{$\vL \seqcomp \vL'$}
\DisplayProof

&

\AxiomC{$(\lw : \vA, \lw : \vA \sar ) \seqcomp \vL$}
\RightLabel{$\ctrl$}
\UnaryInfC{$(\lw : \vA \sar ) \seqcomp \vL$}
\DisplayProof
\end{tabular}
\end{center}

\begin{center}
\begin{tabular}{c c c}
\AxiomC{$( \sar \lw : \vA, \lw : \vA) \seqcomp \vL$}
\RightLabel{$\ctrr$}
\UnaryInfC{$( \sar \lw : \vA) \seqcomp \vL$}
\DisplayProof

&

\AxiomC{$(\sar \lw : \vA) \seqcomp \vL$}
\AxiomC{$(\lw : \vA \sar ) \seqcomp \vL$}
\RightLabel{$\cut$}
\BinaryInfC{$\vL$}
\DisplayProof
\end{tabular}
\end{center}

\hrule
\caption{Admissible structural rules for $\gtkt$.}
\label{fig:labeled-structural-rules}
\end{figure}

As in the display setting, we not only want to provide a calculus for the minimal tense logic $\kt$, but also for extensions of $\kt$ with (simplified) primitive tense axioms, i.e. for any tense logic $\ktp$. To accomplish this goal, we define labeled versions of primitive tense structural rules, which have hitherto been undefined for labeled sequent calculi. Such rules are obtained in a similar fashion as in the display setting by means of transforming a simplified primitive tense axiom into a rule. This transformation is carried out via a translation function $\phi_{\lw}$.

\begin{definition}[Translation $\phi_{\lw}$]\label{def:phi-translation} We recursively define the translation function $\phi_{\lw}$ that maps formulae to schematic labeled sequents.
\begin{itemize}

\item $\phi_{\lw}(\top) := (\empdata \sar \empdata)$

\item $\phi_{\lw}(p) := \vL_{p}^{\lw}$

\item $\phi_{\lw}(A \land B) := \phi_{\lw}(A) \seqcomp \phi_{\lw}(B)$

\item $\phi_{\lw}(\f A) := (R\lw\lu \sar \empdata) \seqcomp \phi_{\lu}(A)$ with $\lu$ a new label variable

\item $\phi_{w}(\p A) := (R\lu\lw \sar \empdata) \seqcomp \phi_{\lu}(A)$ with $\lu$ a new label variable

\end{itemize}
 We note that each occurrence of $\vL_{p}^{\lw}$ is taken to be a unique labeled sequent variable; e.g. both occurrences of $\vL_{p}^{\lw}$ in $\phi_{\lw}(p \land p) = \vL_{p}^{\lw} \seqcomp \vL_{p}^{\lw}$ are taken to be unique labeled sequent variables.
\end{definition}

Labeled sequent variables obtained from the translation $\phi_{\lw}$ are annotated with label variables and atoms. This information is crucial for defining primitive tense structural rules in the labeled setting. First, the label variables tell us which labels must be fresh in a rule instance. Second, the atoms annotating labeled sequent variables tell us what substitutions are applicable, viz. labeled sequent variables annotated with the same atom must be instantiated with \emph{isomorphic} labeled sequents (see \dfn~\ref{def:iso} below). These are restrictions on what substitutions may be applied to a primitive tense structural rule and they ensure the soundness of rule applications.
 
\begin{definition}[Isomorphic]\label{def:iso} Let $\lseq_{1} = \rel_{1}, \Gamma_{1} \sar \Delta_{1}$ and $\lseq_{2} = \rel_{2}, \Gamma_{2} \sar \Delta_{2}$ be two labeled sequent. They are \emph{isomorphic}, written $\lseq_{1} \iso \lseq_{2}$, \iffi there exists a function $f : \lab(\lseq_{1}) \to \lab(\lseq_{2})$ such that 
\begin{multicols}{2}
\begin{enumerate}
\item[(1)] $f$ is bijective, 

\item[(2)] $Rwu \in \rel_{1}$ \iffi $Rf(w)f(u) \in \rel_{2}$,

\item[(3)] $w : A \in \Gamma_{1}$ \iffi $f(w) : A \in \Gamma_{2}$,

\item[(4)] $w : A \in \Delta_{1}$ \iffi $f(w) : A \in \Delta_{2}$.
\end{enumerate}
\end{multicols}
\end{definition}


 Let us now define primitive tense structural rules in the labeled setting. Every simplified primitive tense axiom can be transformed into a primitive tense structural rule, dubbed $\ptsrl$. By extending $\gtkt$ with the set of $\ptsrl$ rules corresponding to a set $\mathsf{P}$ of (simplified) primitive tense axioms, we obtain a labeled calculus for a primitive tense logic $\ktp$.


\begin{definition}[Primitive Tense Structural Rule]\label{def:primitive-tense-rule-labeled} Each simplified primitive tense axiom of the form shown below left corresponds to a primitive tense structural rule of the following form shown below right:
\begin{center}
\begin{tabular}{c c}
$\displaystyle{A \rightarrow \bigvee_{1 \leq j \leq m} B_{j}}$

&

\AxiomC{$ \phi_{\lw}(A) \seqcomp \phi_{\lw}(B_{1}) \seqcomp \vL$}
\AxiomC{$\ldots$}
\AxiomC{$ \phi_{\lw}(A) \seqcomp \phi_{\lw}(B_{m}) \seqcomp \vL$}
\RightLabel{$\ptsrl^{\dag}$}
\TrinaryInfC{$ \phi_{\lw}(A) \seqcomp \vL$}
\DisplayProof
\end{tabular}
\end{center}
 The side condition $\dag$ stipulates that for any instance of the rule under a substitution $\sub$, the following conditions must be satisfied:
 \begin{description}

\item[$\coni$] for each $1 \leq i \leq m$, if the label variable $\lu$ occurs in $\phi_{\lw}(B_{i})$, but does \emph{not} occur in some labeled sequent variable, then $\lu\sub$ must be fresh.

\item[$\conii$] for any two labeled sequent variables of the form $\vL_{p}^{\lu}$ and $\vL_{p}^{\lv}$ (i.e. for any two labeled sequent variables annotated with the same atom) occurring in the rule, $\vL_{p}^{\lu}\sub \iso \vL_{p}^{\lv}\sub$.

\item[$\coniii$] for any labeled sequent variable $\vL_{p}^{\lv}$, if $\vL_{p}^{\lv}\sub \neq (\emptyset \sar \emptyset)$, then $\lv\sub \in \lab(\vL_{p}^{\lv}\sub)$.

\end{description}
\end{definition}

 Contrary to the display setting, each $\ptsrl$ rule retains a copy of $\phi_{\lw}(A)$ in its premises, that is, the principal formulae are bottom-up preserved in rule applications. Formulating rules in this manner permits the qp-admissibility of contractions, so long as our calculus abides by the so-called `closure condition' (cf.~\cite[p.~29]{Bor08})\label{closure-cond}. To define the closure condition, let us take a $\ptsrl$ rule as in \dfn~\ref{def:primitive-tense-rule-labeled} above and suppose that we substitute certain label variables occurring in $\phi_{\lw}(A)$ by other label variables occurring in $\phi_{\lw}(A)$. If, after this substitution, all premises and the conclusion \emph{mutually contain} (1) at least two copies of a schematic relational atom of the form $(R\lw\lu \sar \empdata)$ or (2) at least two copies of a labeled sequent variable of the form $\lseq_{p}^{\lw}$, then we define a \emph{contraction} of $\ptsrl$ to be the rule obtained by deleting one of the duplicate copies, respectively. We then say that an extension of $\gtkt$ with a set of primitive tense structural rules and contractions thereof satisfies the \emph{closure condition} \iffi the calculus is closed under the contraction of every primitive tense structural rule and every contraction thereof. 
 
 We note that if we close the primitive tense structural rules of an extension of $\gtkt$ under contractions, then the resulting calculus is still finite. This follows from the fact that (1) we only consider extensions of $\gtkt$ with \emph{finitely} many primitive tense structural rules, and (2) because each primitive tense structural rule has only finitely many contractions as only a finite number of label variable substitutions (as described above) are possible. 
 
 To make the closure condition clearer, we provide an example of a primitive tense structural rule and a contraction of the rule. 
 
\begin{example} Let us consider the Euclidean axiom $\p \f p \rightarrow \f p$ (which is a simplified primitive tense axiom). The axiom's corresponding primitive tense structural rule is as follows:
\begin{center}
\AxiomC{$(R\lu\lw \sar \empdata) \seqcomp (R\lu\lv \sar \empdata) \seqcomp \vL_{p}^{\lv} \seqcomp (R\lw\lz \sar ) \seqcomp \vL_{p}^{\lz} \seqcomp \vL$}
\RightLabel{$(euc)$}
\UnaryInfC{$(R\lu\lw \sar \empdata) \seqcomp (R\lu\lv \sar \empdata) \seqcomp \vL_{p}^{\lv} \seqcomp \vL$}
\DisplayProof
\end{center}
 If we substitute the label $\lw$ for $\lu$ and $\lv$, then we obtain the following:
 \begin{center}
\AxiomC{$(R\lw\lw \sar \empdata) \seqcomp (R\lw\lw \sar \empdata) \seqcomp \vL_{p}^{\lw} \seqcomp (R\lw\lz \sar ) \seqcomp \vL_{p}^{\lz} \seqcomp \vL$}
\UnaryInfC{$(R\lw\lw \sar \empdata) \seqcomp (R\lw\lw \sar \empdata) \seqcomp \vL_{p}^{\lw} \seqcomp \vL$}
\DisplayProof
\end{center}
 Since the schematic labeled sequent $(R\lw\lw \sar \empdata)$ occurs twice, a contraction of $(euc)$ can be obtained by deleting the additional copy, yielding the following:
\begin{center} 
\AxiomC{$(R\lw\lw \sar \empdata) \seqcomp \vL_{p}^{\lw} \seqcomp (R\lw\lz \sar ) \seqcomp \vL_{p}^{\lz} \seqcomp \vL$}
\RightLabel{$(euc)'$}
\UnaryInfC{$(R\lw\lw \sar \empdata) \seqcomp \vL_{p}^{\lw} \seqcomp \vL$}
\DisplayProof
\end{center}
 We note that no contractions exist of the $(euc)'$ rule as no substitution of label variables can produce duplications. As specified above, contractions are only possible when all premises and the conclusion share duplicate formulae of the form $(R\lw\lu \sar \empdata)$ or $\vL_{p}^{\lw}$, yet, the conclusion of $(euc)'$ is of a form that precludes this possibility.
\end{example}

\begin{definition}[$\gtktp$]\label{def:gtktp} Let $\ktp$ be a primitive tense logic. We define $\gtktp$ to be $\gtkt$ extended with a $\ptsrl$ rule and all contractions thereof, for each simplified primitive tense axiom in $\paxs$. We indicate that a labeled sequent $\lseq$ is provable in $\gtktp$ with a labeled proof $\lproof$ by writing $\proves{\gtktp}{\lproof}{\lseq}$.
\end{definition}

\begin{theorem}\label{lem:admiss-rules-labeled} The $\lsub$, $\wk$, $\ctrl$, and $\ctrr$ rules are qp-admissible in $\gtktp$ and $\cut$ is admissible.
\end{theorem}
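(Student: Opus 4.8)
The plan is to follow the standard strategy for labeled sequent calculi (cf.~\cite{Bor08,Vig00}), establishing the four claims in the order $\lsub$, $\wk$, $\ctrl$/$\ctrr$, $\cut$, since each later claim draws on the earlier ones; the first three are proved by induction on the quantity $\qty{\lproof}$ of the premise derivation, and $\cut$ by the usual double induction. Throughout, the logical and modal rule cases are exactly as in Boretti's treatment of $\gtkt$, so the real content lies in handling the primitive tense structural rules together with their freshness and isomorphism side conditions.

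For $\lsub$, given a proof $\lproof$ of $\vL\sub$, I would show $\vL(\lw/\lv)\sub$ has a proof of the same quantity, by induction on $\qty{\lproof}$. Applying a label substitution to an initial sequent again yields an initial sequent, and for rules without side conditions one simply pushes $(\lw/\lv)$ into the premises and reapplies the rule. The delicate cases are the rules with a fresh eigenlabel, namely $\gr$, $\hr$, $\fl$, $\pl$, and any $\ptsrl$ rule via condition $\coni$: if $(\lw/\lv)$ would identify an eigenlabel with another label occurring in the sequent, I would first invoke the induction hypothesis to rename that eigenlabel to a genuinely new one and only then apply $(\lw/\lv)$. Conditions $\conii$ and $\coniii$ are untouched since a label substitution does not alter which labeled sequent variables occur. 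The argument for $\wk$ is analogous: to add a context $\vL'\sub$ to every sequent of $\lproof$ by $\seqcomp$, the only obstruction is that a label of $\vL'\sub$ might clash with an eigenlabel inside $\lproof$, which I would remove beforehand using the now-available admissibility of $\lsub$; conditions $\coni$--$\coniii$ survive weakening because no labeled sequent variable is modified.

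For $\ctrl$ and $\ctrr$, I would permute the contraction upward in the proof, as usual. The logical and modal cases are routine, using the height-preserving invertibility of the logical rules (itself an easy induction) when the contracted formula is principal. The one genuinely new case is when the contracted labeled formula meets the conclusion of a $\ptsrl$ instance that, read bottom-up, has introduced a duplicate copy of a schematic relational atom or a labeled sequent variable occurring in $\phi_{\lw}(A)$; this is precisely the situation the \emph{closure condition} addresses. Since $\gtktp$ is, by \dfn~\ref{def:gtktp}, closed under all contractions of its primitive tense structural rules, the contracted premise is re-derived by applying the appropriate contracted instance of that $\ptsrl$ rule, which carries one fewer copy of the duplicated material. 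It then remains to check the bookkeeping that the substitution witnessing the original application still satisfies $\coni$--$\coniii$ for the contracted rule, which holds because a contraction identifies label variables in $\phi_{\lw}(A)$ in exactly the manner prescribed by the closure condition.

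Finally, for $\cut$, I would run the standard double induction, the outer one on the length $\flgth{A}$ of the cut formula and the inner one on $\qty{\lproof_{1}} + \qty{\lproof_{2}}$ of the two premise derivations, freely using admissibility of $\lsub$, $\wk$, $\ctrl$, $\ctrr$. Principal cases for $\neg$, $\land$, $\lor$, $\imp$ and for the tense modalities reduce the cut to cuts on proper subformulae (aligning eigenlabels with the help of $\lsub$), while non-principal cases permute the cut past the last rule of one premise; since the primitive tense structural rules introduce no principal formula, a cut permutes past any $\ptsrl$ application without difficulty, and $\coni$--$\coniii$ are preserved as the cut does not touch the relevant schematic structure. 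The step I expect to be the main obstacle is the contraction-versus-$\ptsrl$ case above: verifying that the finitely many contracted primitive tense structural rules collected in $\gtktp$ really do absorb every duplication a contraction can force upward through a $\ptsrl$ application, and that the side-condition bookkeeping transfers correctly, is the technically load-bearing part and the part with no direct precedent in~\cite{Bor08}; the $\cut$ argument, though long, is then a routine adaptation.
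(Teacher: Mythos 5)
Your proposal is correct and follows essentially the same route as the paper: the paper's proof is a one-line appeal to Lemma~2.3.4 and Theorems~2.3.6, 2.3.8, and 2.3.10 of Boretti~\cite{Bor08}, and those results are established by precisely the inductive rule-permutation arguments you describe (eigenlabel renaming for $\lsub$ and $\wk$, the closure condition absorbing contractions through structural rules, and the standard double induction for $\cut$). You correctly identify the contraction-versus-$\ptsrl$ interaction as the load-bearing step for the new primitive tense rules; this is exactly what the paper's closure condition (and \dfn~\ref{def:gtktp} closing $\gtktp$ under contracted rule instances) is designed to handle, so your reconstruction fills in the details the paper leaves to the citation.
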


\begin{proof} Follows from \lem~2.3.4, \thm~2.3.6, \thm~2.3.8, and \thm~2.3.10 of \cite{Bor08}.
\end{proof}

\begin{remark}\label{rmk:qp-admiss-PTIME}
 The qp-admissibility of $\lsub$, $\wk$, $\ctrl$, and $\ctrr$ is proven in $\gtktp$ by showing that each rule can be permuted upward in any proof and deleted at initial rules (while preserving the quantity of the proof). Therefore, this method of proof, which relies on rule permutations, shows the existence of an algorithm that takes a proof with structural rules as input and returns a proof without such rules as output. Such proof transformations are computable in $\ptime$ in the size of the input proof. This is because if our input proof is $\lproof$, then there are most $\qty{\lproof}$ occurrences of structural rules in $\lproof$, and for each such occurrence at most $\qty{\lproof}$ many permutations need to be made, showing that the algorithm performs at most $\qty{\lproof} \times \qty{\lproof} \leq \size{\lproof}^2$ many permutations. Moreover, the width of the output proof will always be bounded by $\qty{\lproof} \times \wdth{\lproof} = \size{\lproof}$. This is straightforward to verify in the $\lsub$, $\ctrl$, and $\ctrr$ cases, which may reduce the length of labeled sequents when permuted upward; in the case of $\wk$, which weakens in a labeled sequent $\lseq$, $\lgth{\lseq} \leq \wdth{\lproof}$, so permuting numerous instances of $\wk$ upward may increase the length of the `longest' labeled sequent occurring in $\lproof$, and thus, the width of the output proof will be at most $\qty{\lproof} \times \wdth{\lproof}$. Since $\lsub$, $\wk$, $\ctrl$, and $\ctrr$ are qp-admissible, this shows that the size of the output proof is polynomial in the size of the input proof and that the proof transformations eliminating $\lsub$, $\wk$, $\ctrl$, and $\ctrr$ instances are computable in $\ptime$.
\end{remark}

\section{Labeled Polytrees}\label{section-3-2}

 In this section, we discuss special graphs, referred to as \emph{polytrees}.  Intuitively, a polytree is a directed graph such that the underlying graph (i.e. the graph obtained by replacing each directed edge by an undirected edge) is a tree (i.e. is a connected, cycle-free graph). This notion is important because labeled sequents with a polytree structure are notational variants of display sequents. The importance of polytrees was first observed in works on translations between labeled and shallow-nested calculi for tense logics with general path axioms~\cite{CiaLyoRamTiu21}, which form a small sub-class of the primitive tense logics and calculi we consider. 
 In contrast, this paper considers translations between more expressive labeled and display calculi, which requires novel methods of translation, greatly generalizing the work in~\cite{CiaLyoRamTiu21}, and leading to new insights (discussed throughout \sects~\ref{section-4}--\ref{conclusion}).

 Let us now define important terminology that will be of use throughout the remainder of the paper. We define a \emph{$w$-flat sequent} to be a labeled sequent of the form $w : A_{1}, \ldots, w : A_{n} \sar w : B_{1}, \ldots, w : B_{k},$ i.e. a labeled sequent without relational atoms and where all labeled formulae share the same label. When the label is not important, we sometimes refer to a $w$-flat sequent as a \emph{flat sequent}. We define the \emph{graph} of a labeled sequent $\lseq = \rel, \Gamma \sar \Delta$ to be $\grph{\lseq} = (V,E)$ such that
$$
V = \{(w,\lseq') \ | \ w \in \lab(\lseq) \text{ and } \lseq' = (\Gamma \restriction w \sar \Delta \restriction w)\}
\text{ and }
E = \{(w,u) \ | \ Rwu \in \rel\}.
$$
 In other words, the graph $\grph{\lseq}$ of a labeled sequent $\lseq$ is a graph where each node $w$ is a label decorated with $w$-flat sequent (obtained from the formulae labeled with $w$ in $\lseq$) and each edge is obtained from a relational atom.

\begin{definition}[Labeled Polytree] Let $\lseq = \rel, \Gamma \sar \Delta$ be a labeled sequent and $\grph{\lseq} = (V,E)$ be its graph. We define $\lseq$ to be a \emph{labeled polytree sequent} \iffi (1) if $\rel = \emptyset$, then $\lseq$ is a $w$-flat sequent for some $w \in \lab$, (2) if $\rel \neq \emptyset$, then $\lab(\Gamma, \Delta) \subseteq \lab(\rel)$, and (3) $\grph{\lseq}$ forms a \emph{polytree}, that is, $\grph{\lseq}$ is connected and free of (un)directed cycles. We define a \emph{labeled polytree proof} to be a proof containing only labeled polytree sequents.
\end{definition}

\begin{example} To provide intuition on labeled polytree sequents, we give an example. Suppose $\lseq_{i}$ is a $w_{i}$-flat sequent for $0 \leq i \leq 5$. The labeled polytree sequent
$$
\lseq = (Rw_{0}w_{1}, Rw_{4}w_{1}, Rw_{5}w_{1}, Rw_{1}w_{2}, Rw_{3}w_{2} \sar \emptyset) \seqcomp \lseq_{0} \seqcomp \lseq_{1} \seqcomp \lseq_{2} \seqcomp \lseq_{3} \seqcomp \lseq_{4} \seqcomp \lseq_{5}
$$
 can be pictured as shown on the left in \fig~\ref{fig:labeled-polytree-sequent-composition}. (NB. Please ignore the dashed box for the time being; this is discussed below.)
\end{example}

 Moreover, \fig~\ref{fig:labeled-polytree-sequent-composition} also demonstrates the concept of a \emph{$w$-partition}, whereby a labeled polytree sequent $\lseq$ may be split into two labeled polytree sequents $\lseq'$ and $\lseq''$ such that $\lseq = \lseq' \seqcomp \lseq''$. For instance, in \fig~\ref{fig:labeled-polytree-sequent-composition}, if $\lseq$ is the left-most labeled polytree sequent, then $\lseq$ may be partitioned into $\lseq'$ (shown in the middle) and $\lseq''$ (shown on the right) so long as $\lseq_{2} = \lseq_{2}' \seqcomp \lseq_{2}''$. 

\begin{definition}[$w$-partition]\label{def:w-partition} Let $\lseq_{1} = \rel_{1}, \Gamma_{1} \sar \Delta_{1}$ and $\lseq_{2} = \rel_{2}, \Gamma_{2} \sar \Delta_{2}$ be labeled sequents. We say that $\lseq_{1}$ and $\lseq_{2}$ are \emph{$w$-disjoint} \iffi $\lab(\lseq) \cap \lab(\lseq') = \{w\}$. We define $\lseq_{1} \ptcomp{w} \lseq_{2}$ to be a \emph{$w$-partition} of $\lseq$ \iffi (1) $\lseq = \lseq_{1} \seqcomp \lseq_{2}$, (2) $\lseq_{1}$ and $\lseq_{2}$ are labeled polytree sequents, and (3) $\lseq_{1}$ and $\lseq_{2}$ are $w$-disjoint.
\end{definition}

\begin{figure}[t]\label{fig:labeled-polytree-sequent-composition}

\begin{center}
\begin{tabular}{c @{\hskip 2em} c}
\begin{tikzpicture}
     \node[world] [] (w0) [label=above:$w_{0}$] {$\lseq_{0}$};
     \node[world] (w1) [below left=of w0,label=above:$w_{1}$] {$\lseq_{1}$};
     \node[world] (w2) [below right=of w0,label=above:$w_{2}$] {$\lseq_{2}$};
     \node[world] (w3) [below=of w2,label=below:$w_{3}$] {$\lseq_{3}$};
     \node[world] (w4) [below=of w1,label=below:$w_{4}$] {$\lseq_{4}$};
     \node[world] (w5) [right=of w4,label=below:$w_{5}$] {$\lseq_{5}$};

     \path[->,draw] (w0) -- (w1);
     \path[->,draw] (w0) -- (w2);
     \path[->,draw] (w3) -- (w2);
     \path[->,draw] (w4) -- (w1);
     \path[->,draw] (w5) -- (w1);  
     
    \node[draw, dashed, color=black, rounded corners, inner xsep=4pt, inner ysep=11pt, fit=(w1)(w4)(w5)] (c1) {}; 
\end{tikzpicture}

&

\begin{tikzpicture}
     \node[world] [] (w0) [label=above:$w_{0}$] {$\lseq_{0}$};
     \node[world] (w1) [below left=of w0,label=above:$w_{1}$] {$\lseq_{1}$};
     \node[world] (w2) [below right=of w0,label=above:$w_{2}$] {$\lseq_{2}'$};
          \node[world] (w22) [right=of w2,label=above:$w_{2}$] {$\lseq_{2}''$};
     \node[world] (w3) [below=of w22,label=below:$w_{3}$] {$\lseq_{3}$};
     \node[world] (w4) [below=of w1,label=below:$w_{4}$] {$\lseq_{4}$};
     \node[world] (w5) [right=of w4,label=below:$w_{5}$] {$\lseq_{5}$};

     \path[->,draw] (w0) -- (w1);
     \path[->,draw] (w0) -- (w2);
     \path[->,draw] (w3) -- (w22);
     \path[->,draw] (w4) -- (w1);
     \path[->,draw] (w5) -- (w1);  
     
     
     
\end{tikzpicture}
\end{tabular}
\end{center}

\caption{Examples of labeled polytree sequents $\lseq$, $\lseq'$, and $\lseq''$ (read from left to right). If $\lseq_{2} = \lseq_{2}' \seqcomp \lseq_{2}''$, then $\lseq'$ and $\lseq''$ serve as a $w_{2}$-partition of $\lseq$, i.e. $\lseq = \lseq' \seqcomp_{w} \lseq''$ (see \dfn~\ref{def:w-partition}). The dashed region contains the labeled polytree sequent $\lseq |_{w_{1}}^{w_{0}}$ that serves as a subpolytree sequent of $\lseq$ (see \dfn~\ref{def:sub-sequent}).}
\end{figure}

 The concept of a \emph{subpolytree sequent} will also be of use to us; in particular, to define certain recursive operations on the structure of a given labeled polytree sequent. For example, let us consider the labeled sequent $\lseq$ shown in the left of \fig~\ref{fig:labeled-polytree-sequent-composition}. The subpolytree sequent rooted at $w_{1}$ relative to $w_{0}$, denoted $\lseq |_{w_{1}}^{w_{0}}$, is the labeled sequent $(Rw_{4}w_{1}, Rw_{5}w_{1} \sar \emptyset) \seqcomp \lseq_{1} \seqcomp \lseq_{4} \seqcomp \lseq_{5}$ within the dashed region. 

\begin{definition}[Subpolytree Sequent]\label{def:sub-sequent} If $\lseq = \rel, \Gamma \sar \Delta$ is a labeled polytree sequent with $Rwu \in \rel$ (or, $Ruw \in \rel$), then $\lseq |_{u}^{w}$ is defined to be the unique labeled polytree sequent, referred to as a \emph{subpolytree sequent}, such that there exists a labeled polytree sequent $\lseq'$ and $\lseq = (\lseq' \seqcomp_{w} (Rwu \sar \emptyset)) \seqcomp_{u} \lseq |_{u}^{w}$ (respectively, $\lseq = (\lseq |_{u}^{w} \seqcomp_{w} (Ruw \sar \emptyset)) \seqcomp_{u} \lseq'$).
\end{definition}

 Let us now prove a few useful facts about the relationship between labeled polytree sequents and provability in $\gtktp$. It will be useful to identify what proofs in $\gtktp$ are labeled polytree proofs since---as shown in \sects~\ref{section-5} and \ref{section-6}---such proofs translate to and from display proofs in $\dktp$. Toward this end, it will be helpful to consider rules in $\gtktp$ that preserve the polytree structure of labeled sequents when applied.
 
 First, one can verify that for any rule $\ru$ in $\gtkt$, the conclusion of $\ru$ is a labeled polytree sequent \iffi every premise is a labeled polytree sequent, i.e. the property of being a labeled polytree sequent is both top-down and bottom-up preserved in $\gtkt$. For instance, if we consider the $\fl$, $\pl$, $\gr$, and $\hr$ rules, then we notice that every rule bottom-up introduces a relational atom from a label already occurring in the labeled sequent to a fresh label, and thus, the conclusion is a labeled polytree sequent \iffi the premise is a labeled polytree sequent, albeit, the premise has an additional protruding edge/relational atom. Regarding the remaining rules of $\gtkt$, each rule both top-down and bottom-up preserves the relational atoms of the premises and conclusion, respectively, meaning each premise will be a labeled polytree sequent \iffi the conclusion is.  Therefore, we have the following: 

\begin{proposition}\label{prop:gtkt-upward-preserves-polytree}
Let $\ru$ be a non-initial rule in $\gtkt$. The premises of $\ru$ are labeled polytree sequents \iffi the conclusion is.
\end{proposition}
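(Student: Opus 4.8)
The plan is to prove the equivalence by a case analysis on the rule $\ru$, after a small preliminary reduction. First I would observe that a labeled sequent $\lseq = \rel, \Gamma \sar \Delta$ is a labeled polytree sequent if and only if its graph $\grph{\lseq}$ is connected and acyclic: conditions (1) and (2) in the definition of a labeled polytree sequent follow from condition (3), since every label occurring in $\lseq$ is by definition a vertex of $\grph{\lseq}$, so a connected $\grph{\lseq}$ with no relational atoms has at most one vertex (forcing flatness), while a self-loop among the relational atoms would already violate acyclicity (so that in a connected acyclic graph every label must occur in some relational atom). Second, I would note that whether $\grph{\lseq}$ is connected and acyclic depends only on $\lab(\lseq)$ and on $\rel$, not on the formulae decorating the vertices of $\grph{\lseq}$. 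Together these reduce the proposition to a comparison of label sets and relational atoms, at the level of the underlying directed graph, across the premises and conclusion of each rule.

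With this reduction in hand, I would split the non-initial rules of $\gtkt$ into two groups. The first group contains every non-initial rule \emph{except} $\gr, \hr, \fl, \pl$, i.e. all the propositional rules together with $\gl, \hl, \fr, \pr$. For any instance of such a rule under a substitution $\sub$, one checks by direct inspection of \figs~\ref{fig:labeled-logical-rules}--\ref{fig:labeled-modal-rules} that every premise and the conclusion carry exactly the same relational atoms (those of $\vL\sub$, together with the relational atom exhibited in the rule in the $\gl,\hl,\fr,\pr$ cases) and exactly the same set of labels: the premises and conclusion differ only in which formulae decorate the node $\lw\sub$ and on which side of the turnstile they sit (and, in the $\gl,\hl,\fr,\pr$ cases, the label of the added auxiliary formula already appears in that relational atom). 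Hence all these sequents have the same graph up to vertex decorations, so by the reduction each premise is a labeled polytree sequent precisely when the conclusion is.

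The interesting group is the second, the ``creative'' modal rules $\gr, \hr, \fl, \pl$. Take, for instance, an instance of $\gr$ with conclusion $(\sar \lw : \g\vA)\sub \seqcomp \vL\sub$ and premise $(R\lw\lu \sar \lu : \vA)\sub \seqcomp \vL\sub$, where the side condition forces $\lu\sub$ not to occur anywhere in the conclusion. Going from the conclusion to the premise, the graph gains exactly one new vertex, $\lu\sub$, and one new edge, joining the already-present vertex $\lw\sub$ (which labels the principal formula $\lw : \g\vA$) to $\lu\sub$; by freshness, $\lu\sub \notin \lab(\vL\sub)$, so $\lu\sub$ is incident to no other edge and is genuinely a leaf of the premise's graph. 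I would then invoke the elementary fact that attaching a leaf to, or deleting a leaf from, a graph preserves both whether the graph is connected and whether it is acyclic; combined with the reduction, this yields both directions of the equivalence for $\gr$ at once. The rules $\hr, \fl, \pl$ are handled symmetrically, the only change being the orientation of the freshly introduced edge.

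The sole point needing care is the role of the freshness side condition in the second group: it is precisely what guarantees that the newly introduced label does \emph{not} coincide with a label already wired into $\vL\sub$, so that the conclusion-to-premise step really is a clean leaf attachment (and the premise-to-conclusion step a clean leaf deletion) rather than an operation that could merge nodes or close up a cycle. Beyond this observation, the proof is a routine, finite enumeration of the rules of $\gtkt$ backed by the two graph-theoretic observations of the first paragraph, and I do not anticipate any genuine difficulty.
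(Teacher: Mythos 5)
Your proposal is correct and follows essentially the same route as the paper, which likewise splits the non-initial rules of $\gtkt$ into those that preserve the relational atoms and labels outright (the propositional rules together with $\gl$, $\hl$, $\fr$, $\pr$) and the four rules $\gr$, $\hr$, $\fl$, $\pl$ whose freshness condition makes the premise's graph the conclusion's graph with one protruding leaf edge attached. Your explicit reduction of conditions (1) and (2) of the labeled-polytree definition to connectivity of the graph, and the leaf-attachment/deletion argument, merely make precise what the paper leaves as a verification by inspection.
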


 By the above proposition, we know that every rule $\ru$ in $\gtkt$ preserves the property of being a labeled polytree sequent. Yet, the question remains: is this also true for primitive tense structural rules? In general, the answer is `no' as primitive tense structural rules can introduce disconnectedness and (un)directed cycles when applied. For instance, the primitive tense structural rule $\refl$ corresponding to the reflexivity axiom $p \imp \f p$ is shown below left. As shown below right, by applying the substitution $\sub$ such that $\sub(\vL) = (\emptyset \sar w : \f p)$, $\sub(\vL_{p}^{\lw}) = \sub(\vL_{p}^{\lu}) = (w : p \sar \emptyset)$, $\sub(\lw) = w$, and $\sub(\lu) = w$, we obtain a valid instance of the rule which contains the loop $Rww$.
\begin{center}
\begin{tabular}{c @{\hskip 1em} c}
\AxiomC{$\vL_{p}^{\lw} \otimes (R\lw\lu \sar \emptyset) \otimes \vL_{p}^{\lu} \otimes \vL$}
\RightLabel{$\refl$}
\UnaryInfC{$\vL_{p}^{\lw} \otimes \vL$}
\DisplayProof

&

\AxiomC{$Rww, w : p, w : p \sar w :\f p$}
\RightLabel{$\refl$}
\UnaryInfC{$w : p \sar w :\f p$}
\DisplayProof
\end{tabular}
\end{center}
 Nevertheless, observe that if we apply a different substitution $\sub'$ to $\refl$ such that $\sub'(\vL) = (\emptyset \sar w : \f p)$, $\sub(\vL_{p}^{\lw}) = (w : p \sar \emptyset)$, $\sub(\vL_{p}^{\lu}) = (u : p \sar \emptyset)$, $\sub(\lw) = w$, and $\sub(\lu) = u$, then we obtain the following instance of the rule containing only labeled polytree sequents.
\begin{center}
\AxiomC{$Rwu, w : p, u : p \sar w :\f p$}
\RightLabel{$\refl$}
\UnaryInfC{$w : p \sar w :\f p$}
\DisplayProof
\end{center}
 This observation suggests that certain instances of primitive tense structural rules bottom-up preserve the polytree structure of the conclusion while others do not. In general, we find that if a certain set of conditions are satisfied, then each premise of a primitive tense structural rule will be labeled polytree sequent so long as the conclusion is. We list these conditions below and define the corresponding notion of a \emph{strict} $\ptsrl$ application.

\begin{definition}[Strict]\label{def:strictness} Let $\ptsrl$ be of the form shown in \dfn~\ref{def:primitive-tense-rule-labeled} with $C \in \{A, B_{1}, \ldots, B_{m}\}$. We say that an application of $\ptsrl$ under a substitution $\sub$ is \emph{strict} \iffi it satisfies conditions $\coni$--$\coniii$ as well as the following conditions:
\begin{description}

\item[$\coniv$] for any two label variables $\lu$ and $\lv$, if $\lu \neq \lv$, then $\lu\sub \neq \lv\sub$.

\item[$\conv$] for any two occurrences of labeled sequent variables $\vL_{p}^{\lv}$ and $\vL_{q}^{\lu}$, if $\lv\sub \neq \lu\sub$, then $\lab(\vL_{p}^{\lv}\sub) \cap \lab(\vL_{q}^{\lu}\sub) = \emptyset$; otherwise, $\lab(\vL_{p}^{\lv}\sub) \cap \lab(\vL_{q}^{\lu}\sub) = \{\lv\sub\}$.


\item[$\convi$] if $\phi_{\lw}(C)\sub \neq (\empdata \sar \empdata) \neq \vL\sub$, then $\lab(\phi_{\lw}(C)\sub) \cap \lab(\vL\sub) = \{\lw\sub\}$. 

\item[$\convii$] for any occurrence of a labeled sequent variable $\vL$, $\vL\sub$ is a labeled polytree sequent.

\end{description}
 A \emph{strict} proof in $\gtktp$ is a proof such that every instance of $\ptsrl$ is strict.
\end{definition}

 We aim to show that if a proof in $\gtktp$ is strict, then it is a labeled polytree proof. Let us therefore analyze the structures present in a primitive tense structural rule. Observe that the $\phi_{\lw}$ function takes a formula $A$ from the language $\langatoms$ and returns a schematic labeled sequent that forms a polytree. This follows from the fact that every time a $\f$ or $\p$ modality is encountered, the $\phi_{\lw}$ function adds a schematic relational atom to or from a fresh label variable (see \dfn~\ref{def:phi-translation} above). For instance, if we consider the formula $p \land \f q \land \f(r\land \p s \land \p t) \in \langatoms$, then  $\phi_{\lw}(p \land \f q \land \f(r\land \p s \land \p t)) =$
 $$
(R\lw\lu \sar \emptyset) \seqcomp (R\lw\lv \sar \emptyset) \seqcomp (R\dot{z}\lv \sar \emptyset) \seqcomp (R\dot{y}\lv \sar \emptyset) \seqcomp \vL_{\lw}^{p} \seqcomp \vL_{\lu}^{q} \seqcomp \vL_{\lv}^{r} \seqcomp \vL_{\dot{z}}^{s} \seqcomp \vL_{\dot{y}}^{t},
 $$
which can be pictured as the polytree shown top left in \fig~\ref{fig:phi-gives-polytree}. 

We define the \emph{graph} of a schematic labeled sequent $\sls$ to be $\grph{\sls} = (V,E)$ such that (1) $(\lw,X) \in V$ \iffi $\lw$ occurs in $\sls$ and $X = \{\lseq_{p}^{\lw} \ | \ \lseq_{p}^{\lw} \text{ occurs in } \sls \}$ and (2) $(\lw,\lu) \in E$ \iffi $R\lw\lu$ occurs in $\sls$. A schematic labeled sequent \emph{forms a polytree} \iffi its graph is a polytree, i.e. is connected and free of (un)directed cycles. The following lemma is straightforward to prove:

\begin{figure}[t]
\begin{center}
\begin{tabular}{c @{\hskip 2em} c}
\begin{tikzpicture}
     \node[world] [] (w0) [] {$\vL_{\lw}^{p}$};
     \node[world] (w1) [below left=of w0,xshift=.25cm,yshift=.25cm,] {$\vL_{\lu}^{q}$};
     \node[world] (w2) [below right=of w0,xshift=-.25cm,yshift=.25cm,] {$\vL_{\lv}^{r}$};
     \node[world] (w3) [below left=of w2,xshift=.75cm,yshift=.25cm] {$\vL_{\dot{z}}^{s}$};
     \node[world] (w4) [below right=of w2,xshift=-.75cm,yshift=.25cm] {$\vL_{\dot{y}}^{t}$};

     \path[->,draw] (w0) -- (w1);
     \path[->,draw] (w0) -- (w2);
     \path[->,draw] (w3) -- (w2);
     \path[->,draw] (w4) -- (w2);
     
\end{tikzpicture}

&

\begin{tikzpicture}
     \node[world] [] (w0) [label=above:$w_{0}$] {$\lseq_{0}$};
     \node[world] (w1) [below left=of w0,xshift=.25cm,yshift=.25cm,label=above:$w_{1}$] {$\lseq_{1}$};
     \node[world] (w2) [below right=of w0,xshift=-.25cm,yshift=.25cm,label=above:$w_{2}$] {$\lseq_{1}'$};
     \node[world] (w3) [below left=of w2,xshift=1cm,yshift=.25cm,label=below:$w_{5}$] {$\lseq_{2}'$};
     \node[world] (w4) [below left=of w1,xshift=1cm,yshift=.25cm,label=below:$w_{3}$] {$\lseq_{2}$};
     \node[world] (w5) [below right=of w1,xshift=-1cm,yshift=.25cm,label=below:$w_{4}$] {$\lseq_{3}$};
     \node[world] (w6) [below right=of w2,xshift=-1cm,yshift=.25cm,label=below:$w_{6}$] {$\lseq_{3}$};

     \path[->,draw] (w0) -- (w1);
     \path[->,draw] (w0) -- (w2);
     \path[->,draw] (w3) -- (w2);
     \path[->,draw] (w4) -- (w1);
     \path[->,draw] (w5) -- (w1);  
     \path[->,draw] (w6) -- (w2);  
     
    \node[draw, dashed, color=black, rounded corners, inner xsep=4pt, inner ysep=11pt, fit=(w1)(w4)(w5)] (c1) {}; 
    \node[draw, dashed, color=black, rounded corners, inner xsep=4pt, inner ysep=11pt, fit=(w2)(w3)(w6)] (c2) {}; 
\end{tikzpicture}
\end{tabular}
\end{center}
\begin{center}
\begin{tabular}{c c c}
\begin{tikzpicture}
     \node[world] [] (w0) [label=above:$w_{0}$] {$\lseq_{0}$};
     \node[world] (w1) [below=of w0,yshift=.5cm,label=left:$w_{1}$] {$\lseq_{1}''$};
     \node[world] (w4) [below left=of w1,xshift=-.25cm,yshift=.25cm,label=below:$w_{3}$] {$\lseq_{2}$};
     \node[world] (w5) [right=of w4,xshift=-.75cm,label=below:$w_{4}$] {$\lseq_{3}$};
     \node[world] (w6) [below right=of w1,xshift=.25cm,yshift=.25cm,label=below:$w_{6}$] {$\lseq_{3}'$};
     \node[world] (w3) [left=of w6,xshift=.75cm,label=below:$w_{5}$] {$\lseq_{2}'$};

     \path[->,draw] (w0) -- (w1);
     \path[->,draw] (w3) -- (w1);
     \path[->,draw] (w4) -- (w1);
     \path[->,draw] (w5) -- (w1);  
     \path[->,draw] (w6) -- (w1);  
     
    \node[draw, dashed, color=black, rounded corners, inner xsep=4pt, inner ysep=11pt, fit=(w4)(w5)] (c1) {}; 
    \node[draw, dashed, color=black, rounded corners, inner xsep=4pt, inner ysep=11pt, fit=(w3)(w6)] (c2) {}; 
\end{tikzpicture}

&

\begin{tikzpicture}
     \node[world] [] (w0) [label=above:$w_{0}$] {$\lseq_{0}$};
     \node[world] (w1) [below=of w0,yshift=.5cm,label=left:$w_{1}$] {$\lseq_{1} ''$};
     \node[world] (w5) [below=of w1,yshift=.25cm,yshift=.25cm,label=below:$w_{4}$] {$\lseq_{3}$};
     \node[world] (w4) [left=of w5,xshift=.45cm,label=below:$w_{3}$] {$\lseq_{2}''$};
     \node[world] (w6) [right=of w5,xshift=-.45cm,label=below:$w_{6}$] {$\lseq_{3}'$};

     \path[->,draw] (w0) -- (w1);
     \path[->,draw] (w4) -- (w1);
     \path[->,draw] (w5) -- (w1);  
     \path[->,draw] (w6) -- (w1);  
     
    \node[draw, dashed, color=black, rounded corners, inner xsep=4pt, inner ysep=11pt, fit=(w5)] (c1) {}; 
    \node[draw, dashed, color=black, rounded corners, inner xsep=4pt, inner ysep=11pt, fit=(w6)] (c2) {}; 
\end{tikzpicture}

&

\begin{tikzpicture}
     \node[world] [] (w0) [label=above:$w_{0}$] {$\lseq_{0}$};
     \node[world] (w1) [below=of w0,yshift=.5cm,label=left:$w_{1}$] {$\lseq_{1} ''$};
     \node[world] (w4) [below left=of w1,xshift=.75cm,yshift=.25cm,label=below:$w_{3}$] {$\lseq_{2}''$};
     \node[world] (w5) [below right=of w1,xshift=-.75cm,yshift=.25cm,label=below:$w_{4}$] {$\lseq_{3}''$};

     \path[->,draw] (w0) -- (w1);
     \path[->,draw] (w4) -- (w1);
     \path[->,draw] (w5) -- (w1);  
     
\end{tikzpicture}
\end{tabular}
\end{center}

\caption{A graphical depiction of the schematic labeled sequent $\phi_{\lw}(p \land \f q \land \f(r\land \p s \land \p t))$, which forms a polytree, is shown top left. The top right graph and the three bottom graphs demonstrate how strict labeled substitutions can be applied to merge two isomorphic subpolytree sequents $\lseq \subseq{w_{0}}{w_{1}}$ and $\lseq \subseq{w_{0}}{w_{2}}$ so that every labeled sequent generated is a labeled polytree sequent.\label{fig:zip-polytree}\label{fig:phi-gives-polytree}}
\end{figure}

\begin{lemma}\label{lem:phi-gives-polytree}
If $A \in \langatoms$, then the schematic labeled sequent $\phi_{\lw}(A)$ forms a polytree.
\end{lemma}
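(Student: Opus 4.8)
The plan is to prove the lemma by structural induction on $A \in \langatoms$, carrying the slightly strengthened invariant that $\grph{\phi_{\lw}(A)}$ is a polytree \emph{and} that its node set is either empty or contains the node $\lw$; the extra clause is what lets the conjunction case glue together cleanly, and it is harmless since the node set is empty exactly when $A$ is built from $\top$ and $\land$ alone. The base cases are immediate: $\phi_{\lw}(\top) = (\empdata \sar \empdata)$ has no relational atoms and no labeled sequent variables, so its graph is empty and trivially a polytree, while $\phi_{\lw}(p) = \vL_{p}^{\lw}$ has the single node $(\lw,\{\vL_{p}^{\lw}\})$ and no edges. Before the inductive steps I would record the one bookkeeping fact the argument rests on, read off directly from \dfn~\ref{def:phi-translation}: apart from its parameter label variable, each call $\phi_{\lu}(D)$ introduces only \emph{fresh} label variables, and syntactically distinct calls use disjoint sets of fresh label variables; in particular $\phi$ never produces a relational atom of the form $R\lw\lw$.

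For the conjunction case $\phi_{\lw}(B \land C) = \phi_{\lw}(B) \seqcomp \phi_{\lw}(C)$, the bookkeeping fact gives that $\grph{\phi_{\lw}(B)}$ and $\grph{\phi_{\lw}(C)}$ share at most the node $\lw$ and no edges, so $\grph{\phi_{\lw}(B \land C)}$ is their union, glued along $\lw$ when both are nonempty. By the induction hypothesis each part is a polytree that is empty or contains $\lw$, so the union is connected (every nonempty part contains $\lw$) and acyclic (an (un)directed cycle cannot cross the single shared node $\lw$, since no edge joins the two parts off $\lw$, so it would lie wholly in one part, contradicting the hypothesis); the node set is empty iff both parts are. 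For the modal cases $\phi_{\lw}(\f B) = (R\lw\lu \sar \empdata) \seqcomp \phi_{\lu}(B)$ and, symmetrically, $\phi_{\lw}(\p B) = (R\lu\lw \sar \empdata) \seqcomp \phi_{\lu}(B)$ with $\lu$ fresh, the bookkeeping fact gives $\lu \ne \lw$ and $\lw \notin \lab(\phi_{\lu}(B))$, so $\grph{\phi_{\lw}(\f B)}$ arises from the polytree $\grph{\phi_{\lu}(B)}$ by adjoining the new node $\lw$ and the single edge $\lw \to \lu$ (when $\grph{\phi_{\lu}(B)}$ is empty one checks directly that the result is just that single edge). Attaching a pendant node of degree one preserves connectedness and acyclicity, so the result is a polytree containing $\lw$, and the $\p$ case differs only in the direction of the new edge.

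The main obstacle is really just making the freshness bookkeeping precise: the entire induction depends on $\phi$ introducing, besides its parameter, only genuinely fresh labels, so that sibling subterms of a conjunction share no node but their common root and a modal step always attaches its new node as a pendant. Granting that, the remaining work reduces to the textbook facts that gluing two trees at a single common vertex, and attaching a pendant vertex to a tree, both yield trees (here: polytrees, once one also tracks edge orientations, which play no role in connectedness or acyclicity). A small side remark is that the degenerate case $A \equiv \top$ produces the empty graph, which is why the invariant is stated with the ``empty or contains $\lw$'' disjunct rather than simply ``contains $\lw$''.
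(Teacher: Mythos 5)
Your proof is correct and follows exactly the route the paper has in mind: the paper declares the lemma ``straightforward'' and justifies it only by the observation that each $\f$ or $\p$ encountered by $\phi_{\lw}$ adds a schematic relational atom to or from a fresh label variable, which is precisely the freshness bookkeeping your induction makes explicit. Your strengthened invariant (the graph is empty or contains the node $\lw$) and the case analysis are a sound and complete elaboration of that one-line argument.
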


\begin{lemma}\label{lem:strict-ptsrl-is-polytree}
The premises and conclusion of a strict $\ptsrl$ instance are labeled polytree sequents.
\end{lemma}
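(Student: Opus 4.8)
The plan is to show that each labeled sequent occurring in a strict $\ptsrl$ instance — namely the conclusion $(\phi_{\lw}(A) \seqcomp \vL)\sub$ and each premise $(\phi_{\lw}(A) \seqcomp \phi_{\lw}(B_{i}) \seqcomp \vL)\sub$ — has a graph that is connected and free of (un)directed cycles, and, whenever it contains relational atoms, carries all of its labeled formulae on labels appearing in relational atoms. The backbone of the argument is \lem~\ref{lem:phi-gives-polytree}: the schematic labeled sequents $\phi_{\lw}(A)$ and $\phi_{\lw}(B_{i})$ each \emph{form a polytree}, and since $\phi_{\lw}$ introduces a fresh label variable at every $\f$ and $\p$, the sequents $\phi_{\lw}(A)$ and $\phi_{\lw}(B_{i})$ share only the root label variable $\lw$, so $\phi_{\lw}(A) \seqcomp \phi_{\lw}(B_{i})$ forms a polytree too. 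What then remains is a single general fact: applying a \emph{strict} substitution $\sub$ to a schematic labeled sequent $\sls$ that forms a polytree yields a labeled polytree sequent. It is convenient to first record an auxiliary graph-gluing statement: if $\lseq_{0}, \ldots, \lseq_{n}$ are labeled polytree sequents whose label sets pairwise meet in at most one label, and whose "overlap pattern" (a vertex per $\lseq_{i}$, an edge whenever two share a label) is itself acyclic, then $\lseq_{0} \seqcomp \cdots \seqcomp \lseq_{n}$ is again a labeled polytree sequent.

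With the gluing fact in hand, I would describe $\grph{\sls\sub}$ as obtained from $\grph{\sls}$ by: (i) renaming label variables via $\sub$, which by $\coniv$ is injective and hence preserves the underlying (poly)tree shape, so no two backbone nodes collapse and no cycle is created; (ii) replacing each node $\lv\sub$ by the composition of the labeled polytree sequents $\vL_{p}^{\lv}\sub$ rooted there, each of which is a polytree by $\convii$, is rooted at $\lv\sub$ by $\coniii$, and which pairwise meet only in $\{\lv\sub\}$ by the $\lv\sub = \lu\sub$ clause of $\conv$; and (iii) attaching the polytree $\vL\sub$ (again $\convii$) at the single vertex $\lw\sub$, as guaranteed by $\convi$. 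Feeding this into the gluing fact gives that $\phi_{\lw}(A)\sub$ is a labeled polytree, and likewise $\phi_{\lw}(A)\sub \seqcomp \phi_{\lw}(B_{i})\sub$ (the two meet only in $\lw\sub$, using $\coniv$ to forbid collapse of distinct backbone variables and the two clauses of $\conv$ to control the labeled sequent variables); attaching $\vL\sub$ at $\lw\sub$ via $\convi$ (with $C = A$ and with $C = B_{i}$) then finishes the conclusion and each premise. It is probably cleanest to run (i)--(iii) by induction on the recursive definition of $\phi_{\lw}$, since the base cases ($\top$, $p$) and the inductive cases ($\wedge$, $\f$, $\p$) each amount to one application of the gluing fact.

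The main obstacle is the bookkeeping needed to guarantee that all the glued pieces really do overlap only in single, correctly placed vertices — i.e. that no substituted component secretly re-links two distinct backbone vertices (producing an undirected cycle) or joins an unrelated pair (breaking connectedness). Two points are delicate. First, a repeated atom $p$ occurring in both $A$ and some $B_{i}$ forces the labeled sequent variables $\vL_{p}^{\lv}$ (from $A$) and $\vL_{p}^{\lu}$ (from $B_{i}$) to receive isomorphic instances by $\conii$, and one must combine $\coniv$ with the two clauses of $\conv$ to see that these instances meet exactly where the shared label variables dictate and nowhere else. Second, one must control the "auxiliary" labels — those internal to the $\vL_{p}^{\lv}\sub$ and the backbone label variables of the $B_{i}$ that occur in no labeled sequent variable — and here $\coni$ supplies the needed freshness on the premise side while $\convii$ ensures each such piece is internally a polytree. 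Once these overlaps are pinned down, connectedness is immediate (each piece is connected and consecutive pieces share a vertex), acyclicity follows from the gluing fact, and the last clause of the labeled-polytree definition ($\lab(\Gamma,\Delta) \subseteq \lab(\rel)$ when $\rel \neq \empdata$) drops out of the shape of $\phi_{\lw}$ together with the fact that the $\vL_{p}^{\lv}\sub$ and $\vL\sub$ are themselves labeled polytree sequents.
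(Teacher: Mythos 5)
Your proposal is correct and follows essentially the same route as the paper's proof: both rest on \lem~\ref{lem:phi-gives-polytree} for the polytree shape of the schematic backbone, on $\coniv$ for injectivity of $\sub$ over label variables, on $\convii$ for the polytree structure of the instantiated labeled sequent variables, on $\conv$ and $\convi$ for single-label intersections, and on the observation that fusing polytrees at single nodes yields a polytree. Your version merely makes explicit (via the gluing fact and the induction on $\phi_{\lw}$) the bookkeeping that the paper's proof leaves implicit.
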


\begin{proof} Let $\ptsrl$ be a primitive tense structural rule of the form shown below. We consider a strict instance of $\ptsrl$ under the substitution $\sub$ are argue that $\sls_{i}\sub$ (for $1 \leq i \leq m$) and $\sls\sub$ are labeled polytree sequents.
\begin{center}
\AxiomC{$\sls_{1}$}
\AxiomC{$\ldots$}
\AxiomC{$\sls_{m}$}
\RightLabel{$\ptsrl$}
\TrinaryInfC{$\sls$}
\DisplayProof
\end{center}
 First, by \lem~\ref{lem:phi-gives-polytree}, we know that the schematic relational atoms of all schematic labeled sequents occurring in $\ptsrl$ form a polytree. Let $L$ be the set of all labels in the range of $\sub$. Also, let $\grph{\sls_{i}\sub} = (V_{i},E_{i})$ and $\grph{\sls\sub} = (V,E)$. By condition $\coniv$, we know that $\sub$ is injective over the set of label variables occurring in its domain, meaning that $E_{i} \restriction L$ and $E \restriction L$ form polytrees. By condition $\convii$, all labeled sequent variables will be instantiated with labeled polytree sequents, and by conditions $\conv$ and $\convi$, we know that all such label polytree sequents (if not the empty sequent) will only intersect at a single label occurring in either $E_{i} \restriction L$ or $E \restriction L$. As `fusing' a single node of a polytree to a single node of another polytree forms a polytree, we are ensured that $\sls_{i}\sub$ (for $1 \leq i \leq m$) and $\sls\sub$ will be labeled polytree sequents.
\end{proof}

\begin{theorem}\label{thm:strict-is-polytree}
If $\lproof$ is a strict proof in $\gtktp$ of a labeled polytree sequent $\lseq$, then $\lproof$ is a labeled polytree proof.
\end{theorem}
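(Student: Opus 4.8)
The plan is to prove the statement by a straightforward induction on the structure of the proof $\lproof$, using the \emph{strengthened} induction hypothesis that \emph{every} sequent occurring in a strict $\gtktp$ proof whose conclusion is a labeled polytree sequent is itself a labeled polytree sequent. The key observation is that the two preceding results, \prp~\ref{prop:gtkt-upward-preserves-polytree} and \lem~\ref{lem:strict-ptsrl-is-polytree}, already establish that the property of being a labeled polytree sequent is bottom-up preserved by every rule that can occur in a strict proof in $\gtktp$; the theorem is then obtained simply by propagating this fact upward from the root of $\lproof$.

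In more detail, the base case is when $\lproof$ consists of a single instance of an initial rule ($\id$, $\botl$, or $\topr$): since such an instance has no premises and its conclusion $\lseq$ is a labeled polytree sequent by assumption, there is nothing further to verify. For the inductive step, suppose $\lproof$ ends with an application of a rule $\ru$ with premises $\lseq_{1}, \ldots, \lseq_{n}$ derived by subproofs $\lproof_{1}, \ldots, \lproof_{n}$ and conclusion $\lseq$, where $\lseq$ is a labeled polytree sequent. There are two cases. If $\ru$ is a non-initial rule of $\gtkt$, then by \prp~\ref{prop:gtkt-upward-preserves-polytree} each premise $\lseq_{i}$ is a labeled polytree sequent. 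If $\ru$ is a primitive tense structural rule (or a contraction thereof), then, since $\lproof$ is strict, this application is strict, so by \lem~\ref{lem:strict-ptsrl-is-polytree} each premise $\lseq_{i}$ is likewise a labeled polytree sequent. In either case each subproof $\lproof_{i}$ is again a strict proof whose conclusion $\lseq_{i}$ is a labeled polytree sequent, so the induction hypothesis applies and yields that every sequent of $\lproof_{i}$ is a labeled polytree sequent; hence so is every sequent of $\lproof$.

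I do not anticipate a serious obstacle here, since the real content has been isolated into \prp~\ref{prop:gtkt-upward-preserves-polytree} and \lem~\ref{lem:strict-ptsrl-is-polytree}. The one point requiring a little care is that $\gtktp$ contains not only the primitive tense structural rules but also all of their contractions (see \dfn~\ref{def:gtktp}), so one should check that the reasoning of \lem~\ref{lem:strict-ptsrl-is-polytree} carries over to a contraction of a $\ptsrl$ rule; this is immediate, since a contraction is obtained merely by identifying label variables in a $\ptsrl$ rule, and the `polytree-fusing' argument of that lemma is insensitive to such identifications once conditions $\coni$--$\convii$ are imposed. It is also worth emphasizing that strictness is used essentially: the $\refl$ example preceding \dfn~\ref{def:strictness} exhibits a \emph{non-}strict application of a primitive tense structural rule that turns a labeled polytree sequent into a sequent with a loop, so the analogue of the theorem fails for arbitrary (non-strict) proofs in $\gtktp$.
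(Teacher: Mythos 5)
Your proof is correct and follows essentially the same route as the paper, which simply states that the theorem ``follows from'' \prp~\ref{prop:gtkt-upward-preserves-polytree} and \lem~\ref{lem:strict-ptsrl-is-polytree}; you have merely made the implicit bottom-up induction explicit. Your additional remark that the argument must also cover contractions of the $\ptsrl$ rules is a point of care the paper glosses over, and your justification for it is sound.
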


\begin{proof} Follows from \lem~\ref{prop:gtkt-upward-preserves-polytree} and \lem~\ref{lem:strict-ptsrl-is-polytree}.
\end{proof}

We end this section by discussing qp-admissible structural rules that preserve the polytree structure of labeled sequents when applied. One can see that the conclusion of a $\ctrl$ and $\ctrr$ application is a labeled polytree sequent if the premise is; however, this is not generally true for $\lsub$ and $\wk$ (shown in \fig~\ref{fig:labeled-structural-rules}). We therefore define \emph{strict} versions of these rules that preserve polytree structure.

\begin{definition} We define $\slsub$ to be an instance of $\lsub$ under $\sub$ that satisfies the following condition: there exists a $u \in \lab(\vL\sub)$ such that either $Ru(\lw\sub),Ru(\lv\sub) \in \rel$ or $R(\lw\sub)u, R(\lv\sub)u \in \rel$ with $\vL\sub = \rel, \Gamma, \sar \Delta$. Effectively, $\slsub$ either identifies two children nodes of a single parent node or two parent nodes of a single child node in $\grph{\vL\sub}$, that is, it takes one of two forms shown below.
\begin{center}
\begin{tabular}{c c}
\AxiomC{$(R\lz\lu, R\lz\lv \sar ) \seqcomp \vL$}
\RightLabel{$\slsub$}
\UnaryInfC{$((R\lz\lu, R\lz\lv \sar ) \seqcomp \vL)(\lu/\lv)$}
\DisplayProof

&

\AxiomC{$(R\lu\lz, R\lv\lz \sar ) \seqcomp \vL$}
\RightLabel{$\slsub$}
\UnaryInfC{$((R\lu\lz, R\lz\lv \sar ) \seqcomp \vL)(\lu/\lv)$}
\DisplayProof
\end{tabular}
\end{center}

We define $\swk$ to be an instance of $\wk$ under $\sub$ that satisfies the following conditions: (1) $\vL'\sub$ is a label polytree sequent and (2) if $\vL\sub \neq (\empdata \sar \empdata) \neq \vL'\sub$, then $|\lab(\vL\sub) \cap \lab(\vL'\sub)| = 1$. 
 Effectively, $\swk$ introduces a labeled polytree sequent that intersects a single node (if one occurs) in $\grph{\vL\sub}$.
\end{definition}

\begin{example}\label{eg:zip-polytree} It is simple to verify that if the premise of a $\swk$ application is a labeled polytree sequent, then its conclusion is as well. We therefore show that $\slsub$ always yields a labeled polytree sequent when applied to one. Moreover, as it will be important for our work in \sect~\ref{section-5}, we also show how $\slsub$ can be repeatedly applied to `merge' two isomorphic subpolytree sequents. Suppose we are given a labeled polytree sequent $\lseq$ of the form:
$$
\lseq = (\rel \sar \emptyset) \seqcomp \lseq_{1} \seqcomp \lseq_{1}' \seqcomp \lseq_{2} \seqcomp \lseq_{2}' \seqcomp \lseq_{3} \seqcomp \lseq_{3}'
$$
where $\rel = Rw_{0}w_{1}, Rw_{0}w_{2}, Rw_{3}w_{1}, Rw_{4}w_{1}, Rw_{5}w_{2}, Rw_{6}w_{2}$ and for $i \in \{1,2,3\}$, $\lseq_{i} \iso \lseq_{i}'$. Also, suppose  $\lseq_{1}$, $\lseq_{1}'$, $\lseq_{2}$, $\lseq_{2}'$, $\lseq_{3}$, and $\lseq_{3}'$ is a $w_{0}$-flat, $w_{1}$-flat, $w_{2}$-flat, $w_{3}$-flat, $w_{4}$-flat, $w_{5}$-flat, $w_{6}$-flat sequent, respectively. Then, $\grph{\lseq}$ can be pictured as the top right polytree in \fig~\ref{fig:zip-polytree}. By applying the label substitutions $(w_{1}/w_{2})$, $(w_{3}/w_{5})$, and $(w_{4}/w_{6})$, we obtain the bottom left, bottom middle, and bottom right graphs, respectively, where $\lseq_{1}'' = \lseq_{1} \seqcomp \lseq_{1}'$, $\lseq_{2}'' = \lseq_{2} \seqcomp \lseq_{2}'$, and $\lseq_{3}'' = \lseq_{3} \seqcomp \lseq_{3}'$. Observe that each graph is a polytree, exemplifying that strict label substitutions can always be applied to preserve the labeled polytree property.
\end{example}

\begin{theorem}\label{thm:polytree-admiss} Let $\ru \in \{\slsub, \swk, \ctrl, \ctrr\}$. (1) If $\ru$ is applied to a labeled polytree sequent, then the conclusion is a labeled polytree sequent. (2) Eliminating $\ru$ from a strict proof via upward permutations yields a strict proof. 
\end{theorem}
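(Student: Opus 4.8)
The proof splits into the two claims, with (1) a case analysis on $\ru$ and (2) a check that strictness survives the upward-permutation procedure behind qp-admissibility. For (1), the cases $\ru \in \{\ctrl, \ctrr\}$ are immediate: a contraction introduces and deletes no relational atom and leaves $\lab(\lseq)$ unchanged, so $\grph{\lseq}$ is literally the same graph before and after, hence still a polytree (or still $w$-flat). For $\swk$, the conclusion is the composition $\vL\sub \seqcomp \vL'\sub$ of the polytree premise with a labeled polytree sequent meeting it in at most one label, and fusing two polytrees at (at most) one common node is again a polytree; the parent/child inclusion condition passes through. The substantive case is $\slsub$: here $(\lw\sub/\lv\sub)$ identifies two children $\lw\sub,\lv\sub$ of a common parent $p$ (or, dually, two parents of a common child). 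Using \dfn~\ref{def:sub-sequent} I would decompose $\lseq$ so that the subpolytree rooted at $\lw\sub$ and the subpolytree rooted at $\lv\sub$, both relative to $p$, appear as label-disjoint components, and observe that applying $(\lw\sub/\lv\sub)$ fuses them at their roots while collapsing the duplicated relational atom at $p$; fusing two disjoint trees at one vertex yields a tree, the fused subtree stays attached to the rest of $\grph{\lseq}$ only through $p$, and $w$-flatness and the label-inclusion condition are preserved by the relabeling. This is exactly the situation pictured in \eg~\ref{eg:zip-polytree}.

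For (2), I would invoke the standard procedure (cf.~\rmk~\ref{rmk:qp-admiss-PTIME}) that pushes an instance of $\ru$ upward past the rule directly above it, step by step, until it is absorbed at an initial rule. Permuting $\ru$ past an initial rule, a logical rule, or a $\gtkt$ structural rule does not affect any $\ptsrl$ instance, so the only steps to verify are permutations of $\ru$ past a $\ptsrl$ instance. For $\ru \in \{\ctrl, \ctrr\}$ the $\ptsrl$ instance below is replaced by one with the same substitution except for the removal of a duplicated labeled formula from the instantiation of its context variable; this alters no label set and preserves the polytree status of that instantiation, so conditions $\coniv$--$\convii$ are preserved. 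For $\ru = \swk$, permuting upward distributes the weakened polytree into the premises of the $\ptsrl$ instance without touching the substitution on the rule's own variables, so strictness is again preserved. For $\ru = \slsub$: when the two identified sibling labels both lie in the instantiation of the context variable, the permutation merely relabels that instantiation, which stays a polytree by part (1), and $\coniv$--$\convii$ hold; when the two identified labels are both introduced by the $\ptsrl$ instance (both inside $\phi_{\lw}(A)\sub$, or one inside $\phi_{\lw}(A)\sub$ and one inside some $\phi_{\lw}(B_i)\sub$), identifying them produces a duplicated schematic relational atom or labeled sequent variable, so after the permutation the step becomes an instance of a \emph{contraction} of $\ptsrl$, which belongs to $\gtktp$ by the closure condition (p.~\pageref{closure-cond}), and one checks $\coniv$--$\convii$ for this instance directly.

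The hard part will be the remaining sub-case of (2), where $\slsub$ identifies a label introduced inside $\phi_{\lw}(A)\sub$ with a label coming from the context: this threatens condition $\convi$, which demands that $\phi_{\lw}(A)\sub$ and the context instantiation share only the root label. I expect to resolve it by using the freedom in $\slsub$ to choose the direction of the substitution and, where that is not enough, by re-routing the identification through one of the isomorphic copies $\phi_{\lw}(B_i)\sub$ produced by the $\ptsrl$ step (isomorphic to $\phi_{\lw}(A)\sub$ by $\conii$), so that the merge is realized as a contraction of $\ptsrl$ rather than as a boundary-crossing relabeling; showing that this rerouting is always available, and that every strictness condition survives it, is where the real work lies. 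The remaining bookkeeping — that quantity is not increased, so one genuinely obtains a strict proof of the same end sequent — follows from the qp-admissibility statements already recorded in \thm~\ref{lem:admiss-rules-labeled}.
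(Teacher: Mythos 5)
Your overall strategy is the paper's: part (1) by inspecting how each rule acts on $\grph{\lseq}$, and part (2) by permuting $\ru$ above each $\ptsrl$ instance, re-fitting the substitution, and---when identifying labels duplicates a schematic relational atom or a labeled sequent variable---replacing $\ptsrl$ by one of its contractions, which is available in $\gtktp$ by the closure condition. That is exactly the mechanism the paper uses for $\slsub$, and your treatment of part (1) and of the ``both labels internal to $\phi_{\lw}(A)\sub$'' sub-case is sound.

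There is, however, a genuine gap, and you flag it yourself: the sub-case of $\slsub$ where one of the two identified relational atoms lies in $\phi_{\lw}(A)\sub$ and the other in the context $\vL\sub$ is met only with a proposed ``re-routing'' through an isomorphic copy $\phi_{\lw}(B_i)\sub$, together with the admission that showing this re-routing is always available is ``where the real work lies''. Since that re-routing is precisely what would rescue condition $\convi$, the case is not closed; a complete proof must exhibit the new substitution and verify $\coni$--$\convii$ for it. (The paper itself only writes out the sub-case where both atoms come from schematic relational atoms of $\phi_{\lw}(A)$, resolving it via a contraction $(pt_{\lseq}')$ and a substitution that shifts every subpolytree rooted at one identified label onto the other, and declares the mixed sub-case ``similar''---but your sketch does not supply even that much.) A second, smaller omission: for $\ctrl$/$\ctrr$ you only treat the duplicated labeled formula sitting in the instantiation of the context variable $\vL$, whereas the delicate sub-case is when it sits inside $\phi_{\lw}(A)\sub$, i.e.\ inside the instantiations of one or two labeled sequent variables $\vL_{p}^{\lv}$. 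There the instantiations must be modified and condition $\conii$ (isomorphism of all same-atom labeled sequent variables) re-verified; the paper does this and observes that $\conii$ survives precisely because each atom occurs at most once in $A$ (\dfn~\ref{def:primitive-tense-axiom}). That linearity argument is absent from your write-up.
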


\begin{proof} We prove claim (2) since claim (1) is straightforward. We consider a representative number of cases in our proof of claim (2); the remaining cases and sub-cases are tedious, but go through. We argue that any strict application of $\ptsrl$ 
followed by an application of $\ru$ can be replaced by applications of $\ru$ to the premises of $\ptsrl$ followed by a strict application of $\ptsrl$, yielding the same conclusion. We only consider cases when $\ru$ is either $\ctrl$ or $\slsub$ as the $\swk$ and $\ctrr$ cases are similar.

$\ctrl$. Suppose we have an instance of $\ptsrl$ followed by an instance of $\ctrl$, as shown below, such that $(\phi_{\lw}(A) \seqcomp \vL)\sub = ((\lw : \vB, \lw : \vB \sar ) \seqcomp \vL)\sub'$. 
\begin{center}
\AxiomC{$(\phi_{\lw}(A) \seqcomp \phi_{\lw}(B_{1}) \seqcomp \vL)\sub$}
\AxiomC{$\ldots$}
\AxiomC{$ (\phi_{\lw}(A) \seqcomp \phi_{\lw}(B_{m}) \seqcomp \vL)\sub$}
\RightLabel{$\ptsrl$}
\TrinaryInfC{$(\phi_{\lw}(A) \seqcomp \vL)\sub$}
\RightLabel{$=$}
\dottedLine
\UnaryInfC{$(\lu : \vB, \lu : \vB \sar ) \seqcomp \vL)\sub'$}
\RightLabel{$\ctrl$}
\UnaryInfC{$((\lu : \vB \sar ) \seqcomp \vL)\sub'$}
\DisplayProof
\end{center}
Either $(\lu : \vB, \lu : \vB)\sub'$ occurs in $\phi_{\lw}(A)\sub'$, $(\lu : \vB, \lu : \vB)\sub'$ occurs in $\vL\sub$, or one $(\lu : \vB)\sub'$ occurs in $\phi_{\lw}(A)\sub'$ and the other $(\lu : \vB)\sub'$ occurs in $\vL\sub$. We argue the first case as the last two cases are similar. Since $\phi_{\lw}(A)$ is a schematic labeled sequent that consists of schematic relational atoms and labeled sequent variables, it must be the case that either (1) there is some labeled sequent variable $\vL_{p}^{\lv}$ in $\phi_{\lw}(A)$ such that $(\lu : \vB, \lu : \vB)\sub'$ occurs within $\vL_{p}^{\lv}\sub$, or (2) there are two labeled sequent variables $\vL_{p}^{\lv}$ and $\vL_{q}^{\lz}$ in $\phi_{\lw}(A)$ such that one $(\lw : \vB)\sub'$ occurs in $\vL_{p}^{\lv}\sub$ and the other $(\lw : \vB)\sub'$ occurs in $\vL_{q}^{\lz}\sub$. In case (1), we know that $\vL_{p}^{\lv}\sub$ is of the form $\rel, \Gamma, u : B, u : B \sar \Delta$. We resolve the case by defining a new substitution $\sub''$ such that $\vL_{p}^{\lv}\sub'' = \rel, \Gamma, u : B \sar \Delta$ and for any other label, atomic, formula, or labeled sequent variable $x \neq \vL_{p}^{\lv}$ occurring in $\ptsrl$, $\sub''(x) = \sub(x)$. In the case (2), we know that $\vL_{p}^{\lv}\sub$ is of the form $\rel, \Gamma, u : B \sar \Delta$ and $\vL_{q}^{\lz}\sub$ is of the form $\rel', \Gamma', u : B \sar \Delta'$. We resolve the case by defining a new substitution $\sub''$ such that $\vL_{p}^{\lv}\sub'' = \rel, \Gamma \sar \Delta$, $\vL_{q}^{\lz}\sub'' = \rel, \Gamma, u : B \sar \Delta$, and for any other label, atomic, formula, or labeled sequent variable $x$ such that $\vL_{p}^{\lv} \neq x \neq \vL_{q}^{\lz} $ occurring in $\ptsrl$, $\sub''(x) = \sub(x)$. In both cases, we define the substitution $\sub_{i}$ such that $\lu\sub_{i} = \lu\sub$, $\vB\sub_{i} = \vB\sub$, and $\vL\sub_{i} = \vL\sub' \seqcomp \phi_{\lw}(B_{i})\sub$ for $1 \leq i \leq m$. 

 We can now permute $\ctrl$ above $\ptsrl$ as shown below. One can confirm that $\ptsrl$ is strict under $\sub''$ as all conditions $\coni$--$\convii$ (see \dfn~\ref{def:strictness}) still hold after the permutation; in particular, $\conii$ will hold because we assume that any atom occurs at most once in $A$ by \dfn~\ref{def:primitive-tense-axiom}.
\begin{center}
\AxiomC{$(\lu : \vB, \lu : \vB \sar ) \seqcomp \vL)\sub_{1}$}
\RightLabel{$\ctrl$}
\UnaryInfC{$((\lu : \vB \sar ) \seqcomp \vL)\sub_{1}$}
\RightLabel{$=$}
\dottedLine
\UnaryInfC{$(\phi_{\lw}(A) \seqcomp \phi_{\lw}(B_{1}) \seqcomp \vL)\sub''$}

\AxiomC{$\ldots$}

\AxiomC{$(\lu : \vB, \lu : \vB \sar ) \seqcomp \vL)\sub_{m}$}
\RightLabel{$\ctrl$}
\UnaryInfC{$((\lu : \vB \sar ) \seqcomp \vL)\sub_{m}$}
\RightLabel{$=$}
\dottedLine
\UnaryInfC{$ (\phi_{\lw}(A) \seqcomp \phi_{\lw}(B_{m}) \seqcomp \vL)\sub''$}

\RightLabel{$\ptsrl$}
\TrinaryInfC{$(\phi_{\lw}(A) \seqcomp \vL)\sub''$}
\DisplayProof
\end{center}


$\slsub$. Suppose we have an instance of $\ptsrl$ followed by an instance of $\slsub$, as shown below, such that $(\phi_{\lw}(A) \seqcomp \vL)\sub = \vL\sub'$. 
\begin{center}
\AxiomC{$(\phi_{\lw}(A) \seqcomp \phi_{\lw}(B_{1}) \seqcomp \vL)\sub$}
\AxiomC{$\ldots$}
\AxiomC{$ (\phi_{\lw}(A) \seqcomp \phi_{\lw}(B_{m}) \seqcomp \vL)\sub$}
\RightLabel{$\ptsrl$}
\TrinaryInfC{$(\phi_{\lw}(A) \seqcomp \vL)\sub$}
\RightLabel{$=$}
\dottedLine
\UnaryInfC{$((R\lz\lu, R\lz\lv \sar ) \seqcomp \vL)\sub'$}
\RightLabel{$\slsub$}
\UnaryInfC{$((R\lz\lu, R\lz\lv \sar ) \seqcomp \vL)(\lu/\lv)\sub'$}
\DisplayProof
\end{center}
 We let $\lz\sub' = z$, $\lu\sub' = u$, $\lv\sub' = v$ and assume w.l.o.g. that $\slsub$ takes the form shown above. Either $(R\lz\lu, R\lz\lv \sar )\sub'$ occurs in $\phi_{\lw}(A)\sub'$, $(R\lz\lu, R\lz\lv \sar )\sub'$ occurs in $\vL\sub$, or $(R\lz\lu \sar )\sub'$ occurs in $\phi_{\lw}(A)\sub'$ and $(R\lz\lv \sar )\sub'$  occurs in $\vL\sub$ (or, vice versa). We argue the first case as the remaining cases are similar. If $(R\lz\lu, R\lz\lv \sar )\sub'$ occurs in $\phi_{\lw}(A)\sub'$, then there are four possibilities: (1) there exist schematic relational atoms $R\lx\ly$ and $R\lx\lr$ in $\phi_{\lw}(A)$ such that $(R\lx\ly)\sub = (R\lz\lu)\sub'$ and $(R\lx\lr)\sub = (R\lz\lv)\sub'$, (2) there exists a schematic relational atom $R\lx\ly$ and labeled sequent variable $\lseq_{p}^{\lo}$ in $\phi_{\lw}(A)$ such that $(R\lx\ly)\sub = (R\lz\lu)\sub'$ and $(R\lz\lv)\sub'$ occurs in $\lseq_{p}^{\lo}\sub$ (or, vice versa), (3) there exists a labeled sequent variable $\lseq_{p}^{\lo}$ in $\phi_{\lw}(A)$ such that both $(R\lz\lu)\sub'$ and $(R\lz\lv)\sub'$ occur in $\lseq_{p}^{\lo}\sub$, or (4) there exist labeled sequent variables $\lseq_{p}^{\lo}$ and $\lseq_{q}^{\vs}$ such that $(R\lz\lu)\sub'$ occurs in $\lseq_{p}^{\lo}\sub$ and $(R\lz\lv)\sub'$ occurs in $\lseq_{q}^{\vs}\sub$. We consider case (1) and note that cases (2)--(4) can be argued in a similar manner.
 
By assumption, $\phi_{\lw}(A)$ is of the form $(R\lx\ly, R\lx\lr \sar ) \seqcomp \sls$. We assume that the labeled sequent variables $\lseq_{r_{1}}^{\ly}, \ldots, \lseq_{r_{n}}^{\ly}$ and $\lseq_{t_{1}}^{\lr}, \ldots, \lseq_{t_{k}}^{\lr}$ occur in $\sls$ and note that the remaining cases (where no labeled sequent variable annotated with the label variable $\ly$ or $\lr$ occurs) are similar. Therefore, $\sls$ is of the form $\lseq_{r_{1}}^{\ly} \seqcomp \cdots \seqcomp \lseq_{r_{n}}^{\ly} \seqcomp \lseq_{t_{1}}^{\lr} \seqcomp \cdots \seqcomp \lseq_{t_{k}}^{\lr} \seqcomp \sls'$.
 By the closure condition (see p.~\pageref{closure-cond}) and \dfn~\ref{def:gtktp}, we know that $\gtktp$ contains a contraction of the rule $\ptsrl$, which we dub $(pt_{\lseq}')$, that contains $(R\lx\ly \sar ) \seqcomp \lseq_{r_{1}}^{\ly} \seqcomp \cdots \seqcomp \lseq_{r_{n}}^{\ly} \seqcomp \sls'$ rather than $(R\lx\ly, R\lx\lr \sar ) \seqcomp \sls$. Let $\lseq_{r_{i}}^{\ly} = \rel_{i}, \Gamma_{i} \sar \Delta_{i}$ and $\lseq_{t_{j}}^{\lr} = \rel_{j}', \Gamma_{j}' \sar \Delta_{j}'$ for $1 \leq i \leq n$ and $1 \leq j \leq k$. We define $\sub''$ such that
$$
\lseq_{r_{1}}^{\ly}\sub'' = \rel_{1}, \bigcup_{1 \leq j \leq k} \rel_{j}', \Gamma_{1}, \biguplus_{1 \leq j \leq k} \Gamma_{j}', \sar \Delta_{1}, \biguplus_{1 \leq j \leq k} \Delta_{j}',
$$
 and for all remaining atomic, formula, and labeled sequent variables $x$ occurring in $(pt_{\lseq}')$, $\sub''(x) = \sub(x)$. In essence, the substitution $\sub''$ `shifts' all of the labeled polytree sequents rooted at $\lr\sub$ to $\ly\sub$. We now define the substitution $\sub_{i}$ such that $\vL\sub_{i} = \vL\sub' \seqcomp \phi_{\lw}(B_{1})\sub$ and $(R\lz\lu, R\lz\lv \sar )\sub_{i} = (R\lz\lu, R\lz\lv \sar )\sub'$ for $1 \leq i \leq m$. We can permute $\slsub$ upward to derive the same conclusion, if we apply $\slsub$ under $\sub_{i}$ to the $i^{th}$ premise of $\ptsrl$ under $\sub$, and then apply $(pt_{\lseq}')$ under $\sub'$. Moreover, it can be checked that $(pt_{\lseq}')$ under $\sub''$ is strict.
\end{proof}

\section{Translating Display and Labeled Notation}\label{section-4}

 In this section, we define $\ptime$ functions that translate display sequents into labeled polytree sequents and vice versa, which will be employed in translating display and labeled proofs in the following two sections. Our translations provide new insights into the relationship between the display and labeled formalisms. Interestingly, we find that display structures are inter-translatable with labeled polytree sequents, and display sequents likewise inter-translate with (compositions of) labeled polytree sequents. Moreover, display equivalent sequents translate to isomorphic labeled sequents (\lem~\ref{lem:dis-equiv-iso}). All of this suggests that the labeled formalism is more compact and exhibits far less bureaucracy than the display formalism since seemingly distinct objects in the display formalism are easily recognized as identical in the labeled formalism.\footnote{The excessive bureaucracy of display calculi, which obfuscates identities on proofs, has been discussed in prior works~\cite{CiaLyoRamTiu21,Sto04}.} 

 The first half of this section is dedicated to defining the translation of display sequents into labeled polytree sequents and proving certain properties thereof. The latter half of the section considers the reverse translation.

\begin{definition}[Translation $\dl_{w}$]\label{def:dis-to-lab-trans} Let $X \dar Y$ be a display sequent. We define the translation $\dl_{w}(X \dar Y) := \dl^{1}_{w}(X) \scomp \dl^{2}_{w}(Y)$, where $\dl^{1}_{w}$ and $\dl^{2}_{w}$ translate display structures into labeled (polytree) sequents recursively as follows:
\vspace*{-1em}
\begin{multicols}{2}
\begin{itemize}

\item $\dl^{w}_{1}(I) := \emptyset \sar \emptyset$

\item $\dl^{w}_{1}(A) := w : A \sar \emptyset$

\item $\dl^{w}_{1}(\dneg X) := \dl^{w}_{2}(X)$

\item $\dl^{w}_{1}(\bull X) := (Ruw \sar \emptyset) \scomp \dl^{u}_{1}(X)$

\item $\dl^{w}_{1}(X \circ Y) := \dl^{w}_{1}(X) \scomp \dl^{w}_{1}(Y)$

\end{itemize}

\begin{itemize}

\item $\dl^{w}_{2}(I) := \emptyset \sar \emptyset$

\item $\dl^{w}_{2}(A) := \emptyset \sar w : A$

\item $\dl^{w}_{2}(\dneg X) := \dl^{w}_{1}(X)$

\item $\dl^{w}_{2}(\bull X) := (Rwv \sar \emptyset) \scomp \dl^{v}_{2}(X)$

\item $\dl^{w}_{2}(X \circ Y) := \dl^{w}_{2}(X) \scomp \dl^{w}_{2}(Y)$

\end{itemize}
\end{multicols}
\noindent
We note that $u$ and $v$ are fresh labels in the above translations.
\end{definition}

\begin{example} We provide an example of translating a display sequent into a labeled sequent. 
\begin{eqnarray*}
& & \dl_{w}(\bull (\dneg p \circ \p q) \dar \dneg \bull q) \\
&  & \dl^{w}_{1}(\bull (\dneg p \circ \p q)) \seqcomp \dl^{w}_{2}(\dneg \bull q)\\
&  & (Ruw \sar \empdata) \seqcomp \dl^{u}_{1}(\dneg p \circ \p q) \seqcomp \dl^{w}_{1}(\bull q)\\
&  & (Ruw \sar \empdata) \seqcomp \dl^{u}_{1}(\dneg p) \seqcomp \dl^{u}_{1}(\p q) \seqcomp (Rvw \sar \empdata) \seqcomp \dl^{v}_{1}(q)\\
&  & (Ruw \sar \empdata) \seqcomp \dl^{u}_{2}(p) \seqcomp (u : \p q \sar \empdata) \seqcomp (Rwv \sar \empdata) \seqcomp (v : q \sar \empdata)\\
&  & (Ruw \sar \empdata) \seqcomp (\empdata \sar u : p) \seqcomp (u : \p q \sar \empdata) \seqcomp (Rwv \sar \empdata) \seqcomp (v : q \sar \empdata)\\
&  & Ruw, Rvw, u :\p q, v : q \sar u : p
\end{eqnarray*}
\end{example}

 Observe that when $\dl_{w}$ encounters a $\bull$ connective, it adds a relational atom to or from a \emph{fresh} label, thus ensuring that the resulting labeled sequent is free of (un)directed cycles. In addition, as $\dl_{w}(X \dar Y)$ translates both the antecedent $X$ and consequent $Y$ from the same label $w$, the resulting labeled sequent will be connected. Hence, the above translation will always yield a labeled polytree sequent (as seen in the example above). We also find that $\lgth{\dl_{w}(\dseq)} \leq \lgth{\dseq}$ for two reasons: first, there is a one-to-one correspondence between each bullet $\bull$ and formula occurrence in $\dseq$ and each relational atom and labeled formula in $\dl_{w}(\dseq)$. Second, the length of a labeled sequent is defined by the number of relational atoms and labeled formulae it contains, though $\dseq$ may still contain other connectives (e.g. $\dneg$ and $\circ$) that contribute to its length. (NB. Recall that the length of a display sequent was defined at the beginning of \sect~\ref{section-2} and the length of a labeled sequent was defined at the beginning of \sect~\ref{section-3}.)

\begin{lemma}\label{lem:ltrans-label-sub}
Let $\dseq$ be a display sequent. Then,
\begin{enumerate}

\item $\dl_{w}(\dseq)$ is a labeled polytree sequent;

\item $\lgth{\dl_{w}(\dseq)} \leq \lgth{\dseq}$;

\item for $w,u \in \lab$, $\dl_{w}(\dseq) \iso \dl_{u}(\dseq)$.

\end{enumerate}
\end{lemma}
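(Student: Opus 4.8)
The plan is to prove all three claims simultaneously by structural induction on the display sequent $\dseq = X \dar Y$, or more precisely, by a simultaneous induction on the structure of the display structures $X$ and $Y$, generalizing the statement to the translations $\dl^{w}_{1}$ and $\dl^{w}_{2}$ applied to individual structures. First I would strengthen the inductive hypothesis so that for every structure $Z$ and every label $w$: (a) $\dl^{w}_{1}(Z)$ and $\dl^{w}_{2}(Z)$ are labeled polytree sequents whose graph is connected and contains $w$ (or is the empty sequent when $Z = I$); (b) $\lgth{\dl^{w}_{1}(Z)} \le \lgth{Z}$ and likewise for $\dl^{w}_{2}$; and (c) for any $w, u$, $\dl^{w}_{1}(Z) \iso \dl^{u}_{1}(Z)$ and $\dl^{w}_{2}(Z) \iso \dl^{u}_{2}(Z)$, via the obvious relabeling that sends the ``root'' label $w$ to $u$ and renames all the fresh labels introduced by $\bull$ correspondingly. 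Since $\dl_{w}(X \dar Y) = \dl^{w}_{1}(X) \scomp \dl^{w}_{2}(Y)$ and both components share only the label $w$ (the other labels being fresh on each side), the three claims for $\dseq$ follow from the strengthened statement for $X$ and $Y$: connectedness of the composition holds because both parts are connected and meet at $w$; acyclicity holds because fusing two polytrees at a single common node yields a polytree; the length bound is additive; and the isomorphism lifts componentwise.

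The key steps in order are: (1) handle the base cases $Z = I$, $Z = A$ (an atom/formula), checking directly that $\dl^{w}_{i}(Z)$ is a $w$-flat sequent or the empty sequent, has length $\le \lgth{Z}$, and admits the trivial relabeling $w \mapsto u$ as an isomorphism; (2) the case $Z = \dneg X'$, where $\dl^{w}_{1}(\dneg X') = \dl^{w}_{2}(X')$ and $\dl^{w}_{2}(\dneg X') = \dl^{w}_{1}(X')$, so all three claims follow immediately from the IH applied to $X'$ (noting $\lgth{\dneg X'} = \lgth{X'} + 1 > \lgth{X'}$, so the bound is preserved); (3) the case $Z = \bull X'$, where $\dl^{w}_{1}(\bull X') = (Ruw \sar \emptyset) \scomp \dl^{u}_{1}(X')$ with $u$ fresh — here the new relational atom attaches a fresh node $u$ that is the root of the polytree $\dl^{u}_{1}(X')$ (by IH) to the node $w$, producing a connected acyclic graph; the length increases by exactly one relational atom, matching $\lgth{\bull X'} = \lgth{X'} + 1$; for the isomorphism, compose the IH isomorphism on $\dl^{u}_{1}(X')$ with the relabeling of the freshly-chosen roots; (4) the case $Z = X' \circ Y'$, where $\dl^{w}_{1}(X' \circ Y') = \dl^{w}_{1}(X') \scomp \dl^{w}_{1}(Y')$ — both parts are connected polytrees sharing (by the freshness conventions) only the common label $w$, so their union is a polytree; lengths add, matching $\lgth{X' \circ Y'} = \lgth{X'} + \lgth{Y'} + 1$ with one unit to spare; the isomorphism is again componentwise.

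The main obstacle I anticipate is \emph{bookkeeping of fresh labels}: the translation $\dl_{w}$ is only well-defined up to the choice of fresh labels introduced at each $\bull$, so the isomorphism claim (3) and, more subtly, the ``share only $w$'' property needed for acyclicity in claims (3)(b) and (4), require a careful convention — e.g., fixing a canonical injective naming of fresh labels indexed by the position of the $\bull$ occurrence in $Z$, so that the two sides $X$ and $Y$ of a sequent (and the two arguments of a $\circ$) never collide on a non-root label. Once that convention is pinned down, the ``fusing polytrees at a shared node'' argument (already used in the proof of \lem~\ref{lem:strict-ptsrl-is-polytree}) discharges the polytree claim, and the remaining computations for the length bound and the isomorphism are routine and follow the recursive structure of $\dl^{w}_{1}$ and $\dl^{w}_{2}$ line by line.
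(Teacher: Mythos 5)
Your proposal is correct and matches the paper's own (informal) justification: the paper argues exactly the same three points---freshness of the labels introduced at each $\bull$ rules out cycles, translating both sides of $\dar$ from the common root $w$ gives connectedness, and the one-to-one correspondence between $\bull$/formula occurrences and relational atoms/labeled formulae gives the length bound---and your structural induction on $\dl^{w}_{1}$ and $\dl^{w}_{2}$ is just the routine formalization of that sketch, with the fresh-label bookkeeping made explicit.
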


The $\dl_{w}$ translation also sheds light on which display calculus rules are rendered redundant in the labeled setting, as stated in the lemma below.


\begin{lemma}\label{lem:dis-equiv-iso}
Suppose that $\dseq$ is derivable from $\dseq'$ by applying a display rule, $\il$, $\ir$, $\ql$, $\qr$, $\asl$, $\asr$, $\pml$, or $\pmr$. Then, $\dl_{w}(\dseq) \iso \dl_{w}(\dseq')$.
\end{lemma}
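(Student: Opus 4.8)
The plan is to prove \lem~\ref{lem:dis-equiv-iso} by a straightforward case analysis over the rules mentioned in the statement, showing in each case that applying the rule (in either direction, since all the listed rules are reversible) produces a display sequent whose $\dl_{w}$-translation is isomorphic to the translation of the original. For each rule $\ru$ in the list, I would take an arbitrary instance with premise $\dseq'$ and conclusion $\dseq$, compute $\dl_{w}(\dseq)$ and $\dl_{w}(\dseq')$ by unfolding \dfn~\ref{def:dis-to-lab-trans}, and exhibit an explicit bijection $f$ between the labels of the two labeled sequents witnessing the isomorphism (\dfn~\ref{def:iso}). In the simplest cases the two translations will be syntactically equal, in which case $f$ is the identity; in the remaining cases $f$ will rename exactly the one or two fresh labels that get introduced in different spots on the two sides.

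First I would observe that the translation $\dl^{w}_{1}(X \circ Y) = \dl^{w}_{1}(X) \scomp \dl^{w}_{1}(Y)$ together with associativity and commutativity of $\scomp$ immediately handles the structural rules $\asl$, $\asr$, $\pml$, $\pmr$: both $\dseq$ and $\dseq'$ translate to the very same sequent composition, reordered or reassociated, so $\dl_{w}(\dseq) = \dl_{w}(\dseq')$ as labeled sequents and the identity map works. The same is true for $\il$, $\ir$ since $\dl^{w}_{1}(I) = \dl^{w}_{2}(I) = (\emptyset \sar \emptyset)$ is the identity element for $\scomp$. For $\ql$ and $\qr$, note $\dl^{w}_{1}(\dneg I) = \dl^{w}_{2}(I) = (\emptyset \sar \emptyset)$ and $\dl^{w}_{1}(I) = (\emptyset \sar \emptyset)$, so again the translations coincide. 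Thus all the non-display rules in the list give literal equality of translated sequents, which trivially implies isomorphism.

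Next I would treat the display rules of \fig~\ref{fig:display-rules-DKt} one at a time. The key recursive facts I would repeatedly invoke are $\dl^{w}_{1}(\dneg X) = \dl^{w}_{2}(X)$ and $\dl^{w}_{2}(\dneg X) = \dl^{w}_{1}(X)$ (so $\dneg$ swaps the two translation modes), $\dl^{w}_{1}(\bull X) = (Ruw \sar \emptyset) \scomp \dl^{u}_{1}(X)$ and $\dl^{w}_{2}(\bull X) = (Rwv \sar \emptyset) \scomp \dl^{v}_{2}(X)$ (so $\bull$ adds an incoming edge in antecedent position and an outgoing edge in consequent position), and part (3) of \lem~\ref{lem:ltrans-label-sub}, namely $\dl_{w}(\dseq) \iso \dl_{u}(\dseq)$, which lets me rename the root label of a translated structure freely up to isomorphism. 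For instance, for $\drv$ ($\dneg X \vdash Y$ versus $\dneg Y \vdash X$), the left side translates to $\dl^{w}_{2}(X) \scomp \dl^{w}_{2}(Y)$ and the right side to $\dl^{w}_{2}(Y) \scomp \dl^{w}_{2}(X)$, equal by commutativity; $\drvi$ ($X \vdash \dneg Y$ versus $Y \vdash \dneg X$) is symmetric. For $\drvii$, $\drviii$ (double-negation elimination), $\dl^{w}_{1}(\dneg\dneg X) = \dl^{w}_{2}(\dneg X) = \dl^{w}_{1}(X)$, literal equality. For $\dri$ ($X \circ Y \vdash Z$ versus $X \vdash Z \circ \dneg Y$), the left translates to $\dl^{w}_{1}(X) \scomp \dl^{w}_{1}(Y) \scomp \dl^{w}_{2}(Z)$ and the right to $\dl^{w}_{1}(X) \scomp \dl^{w}_{2}(Z) \scomp \dl^{w}_{2}(\dneg Y) = \dl^{w}_{1}(X) \scomp \dl^{w}_{2}(Z) \scomp \dl^{w}_{1}(Y)$, equal up to reordering; $\drii$, $\driii$, $\driv$ are analogous. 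The genuinely interesting case is $\drix$ ($X \vdash \bull Y$ versus $\bull X \vdash Y$): the left side gives $\dl^{w}_{1}(X) \scomp (Rwv \sar \emptyset) \scomp \dl^{v}_{2}(Y)$ while the right side gives $(Ruw \sar \emptyset) \scomp \dl^{u}_{1}(X) \scomp \dl^{w}_{2}(Y)$. Here the edge goes from the $X$-root to the $Y$-root in both cases; the two sequents differ only in which of the two endpoint labels is named $w$ and which is fresh, so I would define $f$ to send the fresh label of one side to $w$ and $w$ to the fresh label of the other, and use \lem~\ref{lem:ltrans-label-sub}(3) to reconcile the internal labels of $\dl_{?}(X)$ and $\dl_{?}(Y)$.

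I expect the main obstacle to be bookkeeping rather than conceptual: carefully tracking the fresh labels introduced by nested $\bull$ occurrences and verifying that the edge directions agree on both sides of $\drix$ (and wherever $\bull$ interacts with $\dneg$, via $\dl^{w}_{1}(\dneg\bull\dneg X)$-style expansions that appear implicitly when display rules are chained). To keep this manageable I would state once, as a preliminary observation or by appeal to the recursive clauses, that for any structure $X$ and any label $w$ the multiset of relational atoms in $\dl^{w}_{1}(X)$ (respectively $\dl^{w}_{2}(X)$) is determined by the $\bull$-occurrences in $X$ up to a consistent renaming of the fresh labels, and that each a-part/c-part substructure contributes its formula at a predictable label. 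With that in hand each display-rule case reduces to: the two translated sequents have the same relational-atom multiset and labeled-formula multiset up to a bijective relabeling fixing $w$, hence are isomorphic. I would present two or three representative cases in full ($\drv$, $\dri$ or $\driii$, and $\drix$) and remark that the others are handled identically, since the paper's style (e.g. the proof of \thm~\ref{thm:polytree-admiss}) already adopts this convention of spelling out representative cases.
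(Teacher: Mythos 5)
Your proposal is correct and follows essentially the same route as the paper's proof: unfold $\dl_{w}$ by \dfn~\ref{def:dis-to-lab-trans}, use associativity/commutativity of $\seqcomp$ and the fact that $\dneg$ swaps the two translation modes while $I$ translates to the identity element, and handle $\drix$ by observing that the translation of $X \dar \bull Y$ at $w$ equals that of $\bull X \dar Y$ at the fresh label, reconciled via \lem~\ref{lem:ltrans-label-sub}(3). The paper likewise presents only representative cases ($\dri$, $\drix$, $\il$, $\qr$) and declares the rest similar, so your selection and level of detail match.
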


\begin{proof} We show the $\dri$, $\drix$, $\il$, and $\qr$ cases; the remaining cases are similar. 

\begin{center}
\begin{minipage}{.45\textwidth}
\begin{eqnarray*}
\dri & & \dl_{w}(X \circ Y \dar Z)\\
& & \dl^{w}_{1}(X \circ Y) \seqcomp  \dl^{w}_{2}(Z)\\
& & \dl^{w}_{1}(X) \seqcomp \dl^{w}_{1}(Y) \seqcomp  \dl^{w}_{2}(Z)\\
& & \dl^{w}_{1}(X) \seqcomp  \dl^{w}_{2}(Z) \seqcomp \dl^{w}_{1}(Y)\\
& & \dl^{w}_{1}(X) \seqcomp  \dl^{w}_{2}(Z) \seqcomp \dl^{w}_{2}(\dneg Y)\\
& & \dl^{w}_{1}(X) \seqcomp  \dl^{w}_{2}(Z \circ \dneg Y)\\
& & \dl_{w}(X \dar Z \circ \dneg Y)
\end{eqnarray*}
\end{minipage}
\begin{minipage}{.45\textwidth}
\begin{eqnarray*}
\drix & & \dl_{w}(X \dar \bull Y)\\
& & \dl^{w}_{1}(X) \seqcomp  \dl^{w}_{2}(\bull Y)\\
& & \dl^{w}_{1}(X) \seqcomp (Rwu \sar \empdata) \seqcomp \dl^{u}_{2}(Y)\\
& & (Rwu \sar \empdata) \seqcomp \dl^{w}_{1}(X) \seqcomp \dl^{u}_{2}(Y)\\
& & \dl^{u}_{1}(\bull X) \seqcomp \dl^{u}_{2}(Y)\\
& & \dl_{u}(\bull X \dar Y)\\
& & \dl_{w}(\bull X \dar Y)
\end{eqnarray*}
\end{minipage}
\end{center}

\begin{center}
\begin{minipage}{.45\textwidth}
\begin{eqnarray*}
\il & & \dl_{w}(X \dar Y)\\
& & \dl^{w}_{1}(X) \seqcomp  \dl^{w}_{2}(Y)\\
& & (\empdata \sar \empdata) \seqcomp \dl^{w}_{1}(X) \seqcomp  \dl^{w}_{2}(Y)\\
& & \dl^{w}_{1}(I) \seqcomp \dl^{w}_{1}(X) \seqcomp  \dl^{w}_{2}(Y)\\
& & \dl^{w}_{1}(I \circ X) \seqcomp  \dl^{w}_{2}(Y)\\
& & \dl_{w}(I \circ X \dar Y)
\end{eqnarray*}
\end{minipage}
\begin{minipage}{.45\textwidth}
\begin{eqnarray*}
\qr & & \dl_{w}(X \dar I)\\
& & \dl^{w}_{1}(X) \seqcomp \dl^{w}_{2}(I)\\
& & \dl^{w}_{1}(X) \seqcomp (\empdata \sar \empdata) \phantom{xxxxxxxxx}\\
& & \dl^{w}_{1}(X) \seqcomp \dl^{w}_{1}(I)\\
& & \dl^{w}_{1}(X) \seqcomp \dl^{w}_{2}(\dneg I)\\
& & \dl_{w}(X \dar \dneg I)
\end{eqnarray*}
\end{minipage}
\end{center}
 We note that the last step in the $\drix$ derivation follows from \lem~\ref{lem:ltrans-label-sub}. All remaining steps in each derivation either follow from the definition of $\dl_{w}$ or by the properties of sequent composition.
\end{proof}

 Let us now discuss the reverse translation, i.e. translating labeled sequents to display sequents. Our translation relies on a couple operations, which we now define. First, given that $X_{1}, \ldots, X_{n}$ are structures, we define
$$
\mathop{\bigocirc}_{i = 1}^{n} X_{i} := X_{1} \circ \cdots \circ X_{n}.
$$
 Second, if $\Gamma = \{w_{1} : A_{1}, \ldots, w_{n} : A_{n}\}$ is a multiset of labeled formulae, then we define $\Gamma \rest w := A_{1} \circ \cdots \circ A_{n}$. 
 If $w$ is not a label in $\Gamma$, then $\Gamma \rest w := I$, which includes the case when $\Gamma = \emptyset$. Utilizing these two operations, we define the function $\ld_{w}$ as follows.

\begin{definition}[Translation $\ld_{w}$]\label{def:translation-lab-to-dis} Let $\lseq = \lseq_{1} \ptcomp{w} \lseq_{2}$ be a labeled polytree sequent with $\lseq_{1} := \rel_{1},\Gamma_{1} \sar \Delta_{1}$, $\lseq_{2} := \rel_{2},\Gamma_{2} \sar \Delta_{2}$, and $w \in \lab(\lseq)$. We define $\ld_{w}(\lseq) = \ldl^{w}(\lseq_{1}) \dar \ldr^{w}(\lseq_{2})$.
\[
  \ldl^{w}(\lseq) :=
  \begin{cases}
  (\Gamma_{1} \rest w) \circ \dneg (\Delta_{1} \rest w) & \text{if $\rel = \empdata$}; \\
  (\Gamma_{1} \rest w) \circ \dneg (\Delta_{1} \rest w) \circ \mathop{\bigocirc}\limits_{i = 1}^{n} \dneg \bull \dneg (\ldl^{u_{i}}(\lseq |_{u_{i}}^{w})) \circ \mathop{\bigocirc}\limits_{j = 1}^{m}  \bull  (\ldl^{v_{j}}(\lseq |_{v_{j}}^{w})) & \text{otherwise}. 
  \end{cases}
\]
\[
  \ldr^{w}(\lseq) :=
  \begin{cases}
  \dneg (\Gamma_{2} \rest w) \circ (\Delta_{2} \rest w) & \text{if $\rel = \empdata$}; \\
  \dneg (\Gamma_{2} \rest w) \circ (\Delta_{2} \rest w) \circ \mathop{\bigocirc}\limits_{i = 1}^{n} \bull ( \ldr^{u_{i}}(\lseq |_{u_{i}}^{w})) \circ \mathop{\bigocirc}\limits_{j = 1}^{m}  \dneg \bull \dneg (\ldr^{v_{j}}(\lseq |_{v_{j}}^{w})) & \text{otherwise}. 
  \end{cases}
\]
In the cases where $\rel \neq \empdata$, we assume that $Rwu_{1}, \ldots, Rwu_{n}$ are all relational atoms occurring in $\lseq$ of the form $Rwy$ and that $Rv_{1}w, \ldots, Rv_{m}w$ are all relational atoms occurring in $\lseq$ of the form $Ryw$.
\end{definition}

\begin{example} Let us take the labeled sequent $Ruw, Rvw, u :\p q, v : q \sar u : p$, which admits the following $w$-partition:
$$
(Ruw, u :\p q \sar u : p) \ptcomp{w} (Rvw, v : q \sar \empdata).
$$
 We translate this into a display sequent (as shown below) by applying \dfn~\ref{def:translation-lab-to-dis}.
\begin{eqnarray*}
& & \ld_{w}(Ruw, Rvw, u :\p q, v : q \sar u : p)\\
& & \ldl^{w}(Ruw, u :\p q \sar u : p) \dar \ldr^{w}(Rvw, v : q \sar \empdata)\\
& & (u :\p q \rest w) \circ \dneg (u :p \rest w) \circ \bull (\ldl^{u}(u :\p q \sar u : p)) \dar \ldr^{w}(Rvw, v : q \sar \empdata)\\
& & I \circ \dneg I \circ \bull ((u :\p q \rest u) \circ \dneg (u : p \rest u)) \dar \ldr^{w}(Rvw, v : q \sar \empdata)\\
& & I \circ \dneg I \circ \bull (\p q \circ \dneg p) \dar \dneg (v : q \rest w) \circ (\empdata \rest w) \circ \dneg \bull \dneg (\ldr^{v}(v : q \sar \empdata))\\
& & I \circ \dneg I \circ \bull (\p q \circ \dneg p) \dar \dneg (\empdata) \circ I \circ \dneg \bull \dneg ( \dneg (v : q \rest v) \circ (\empdata \rest v))\\
& & I \circ \dneg I \circ \bull (\p q \circ \dneg p) \dar \dneg I \circ I \circ \dneg \bull \dneg (\dneg q \circ I)
\end{eqnarray*}
With minor effort, the deductively equivalent display sequent $\bull (\dneg p \circ \p q) \dar \dneg \bull q$ can be derived from $I \circ \dneg I \circ \bull (\p q \circ \dneg p) \dar \dneg I \circ I \circ \dneg \bull \dneg (\dneg q \circ I)$, which was the display sequent that translated to $Ruw, Rvw, u :\p q, v : q \sar u : p$ in the previous example.
\end{example}

\begin{remark}\label{rmk:dis-seq-polynomial-in-lpt} For any labeled polytree sequent $\lseq$ and label $w \in \lab(\lseq)$, there exists a polynomial $p$ such that $\lgth{\ld_{w}(\lseq)} \leq p(\lgth{\lseq})$.
\end{remark}

 The above translation is defined relative to a given $w$-partition of the input labeled polytree sequent $\lseq$, and so, the question naturally arises if the choice of partition affects the result of the translation. As it so happens, each partition of the input may yield a distinct display sequent as the output, yet, all such display sequents are mutually derivable from one another using only polynomially many inferences in $\lgth{\lseq}$. We prove this fact in \lem~\ref{lem:ld-equivalence-relative-to-labels} below, but first prove two helpful lemmata. 

\begin{lemma}\label{lem:ld-partition-invariance}
Let $\lseq$ be a labeled polytree sequent with $w \in \lab(\lseq)$. For any two partitions $\lseq_{1} \ptcomp{w} \lseq_{2}$ and $\lseq_{1}' \ptcomp{w} \lseq_{2}'$ of $\lseq$, $\ld_{w}(\lseq_{1} \ptcomp{w} \lseq_{2}) \equiv \ld_{w}(\lseq_{1}' \ptcomp{w} \lseq_{2}')$.
\end{lemma}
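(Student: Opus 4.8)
The plan is to reduce the general case to a sequence of ``elementary'' repartition moves and to show that each such move preserves deductive equivalence of the translated display sequent. Fix a labeled polytree sequent $\lseq$ and a label $w \in \lab(\lseq)$. Observe first that the translations $\ldl^{w}$ and $\ldr^{w}$ recurse over the subpolytree sequents $\lseq |_{u}^{w}$ hanging off $w$ via relational atoms $Rwu$ or $Ruw$; the only freedom in the choice of a $w$-partition $\lseq_{1} \ptcomp{w} \lseq_{2}$ lies in (i) how the labeled formulae with label exactly $w$ are split between $\Gamma_{1}\sar\Delta_{1}$ and $\Gamma_{2}\sar\Delta_{2}$, and (ii) how each entire subpolytree sequent rooted at an immediate neighbour of $w$ is assigned either to $\lseq_{1}$ (placing it in the antecedent translation $\ldl^{w}$) or to $\lseq_{2}$ (placing it in the consequent translation $\ldr^{w}$). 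So the two degrees of freedom are a partition of a multiset of $w$-labeled formulae and a partition of a finite set of neighbour-subtrees. First I would make this precise as a ``normal form'': every $w$-partition is obtained from the canonical one (say, everything in $\lseq_{1}$, with $\lseq_{2}$ the trivial $w$-flat sequent $\sar$) by moving individual $w$-labeled formulae and individual neighbour-subtrees across the partition.

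Next I would verify deductive equivalence for each of the four elementary moves: (a) moving a single labeled formula $w:A$ from $\Delta_{1}$ to $\Gamma_{2}$ (i.e., from a c-part of the antecedent structure to an a-part of the consequent structure, where it appears under a $\dneg$); (b) the dual move of $w:A$ from $\Gamma_{1}$ to $\Delta_{2}$; (c) moving a neighbour-subtree attached by $Rwu$ from $\lseq_{1}$ to $\lseq_{2}$, which in the display translation turns a summand $\dneg\bull\dneg(\ldl^{u}(\cdot))$ of $\ldl^{w}(\lseq_{1})$ into a summand $\bull(\ldr^{u}(\cdot))$ of $\ldr^{w}(\lseq_{2})$ --- but crucially, by Lemma~\ref{lem:ltrans-label-sub}(3) and the definitions, $\ldl^{u}$ applied to the subtree and $\ldr^{u}$ applied to it are themselves deductively equivalent display structures of the same shape; and (d) the dual move for a neighbour-subtree attached by $Ruw$. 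In each case I would exhibit a short derivation using only the display rules $\dri$--$\drix$ together with $\il,\ir,\ql,\qr,\asl,\asr,\pml,\pmr$ (plus the identity $I$-absorption); these are exactly the rules that, by Lemma~\ref{lem:dis-equiv-iso}, are ``invisible'' to the $\dl_{w}$ translation, and the moves above are precisely the display-rule reshufflings that display a chosen substructure on one side or the other. For move (c)/(d) I would also need the derivable rules of Figure~\ref{fig:derivable-rules-display} ($\deri$, $\deriv$, $\derv$, etc., from Proposition~\ref{prop:dis-derivable-rules}) to handle the interaction of $\bull$, $\dneg$, and $\circ$ when pulling a whole $\bull$-wrapped structure across $\vdash$, together with an inductive hypothesis on the depth of the subpolytree for the internal $\ldl^{u}/\ldr^{u}$ swap.

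Then I would conclude by transitivity: any two $w$-partitions $\lseq_{1}\ptcomp{w}\lseq_{2}$ and $\lseq_{1}'\ptcomp{w}\lseq_{2}'$ are linked by a finite chain of elementary moves (at most $\lgth{\lseq}$ of them), and since $\equiv$ is an equivalence relation and each link preserves it, $\ld_{w}(\lseq_{1}\ptcomp{w}\lseq_{2}) \equiv \ld_{w}(\lseq_{1}'\ptcomp{w}\lseq_{2}')$. The proof is naturally organized as an induction on the structure of the polytree rooted at $w$: the base case is the $w$-flat sequent (only moves (a),(b), which are immediate display-rule manipulations), and the inductive step handles one neighbour-subtree at a time, invoking the IH for the recursive $\ldl/\ldr$ swap inside that subtree.

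\textbf{Main obstacle.} The delicate point is move (c)/(d): showing that relocating an entire neighbour-subtree from the antecedent side to the consequent side --- which syntactically replaces $\dneg\bull\dneg(\ldl^{u}(\lseq|_{u}^{w}))$ by $\bull(\ldr^{u}(\lseq|_{u}^{w}))$ and moves it across $\vdash$ --- is achievable by display and structural rules. This requires (1) a sub-lemma, proved by induction on subtree depth, that $\ldl^{u}(\lseq|_{u}^{w})$ and $\ldr^{u}(\lseq|_{u}^{w})$ are deductively equivalent as structures (so that one can be substituted for the other, using $\dcl/\dcr$-style or cut-like derivability at the structural level --- here one must be careful that structures, not sequents, are being compared, so the equivalence is really ``$Z \dar W$ is derivable from $W \dar Z$ and vice versa'' for the associated one-sided sequents), and (2) careful bookkeeping with the $\dneg\bull\dneg$ versus $\bull$ wrappers as the structure is shuttled by $\drix$ (which swaps $\bull X \dar Y$ with $X \dar \bull Y$) and the negation display rules $\drv,\drvi,\drvii,\drviii$. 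Keeping the polynomial bound on the number of inferences (needed implicitly for the later complexity claims, and stated explicitly in Lemma~\ref{lem:ld-equivalence-relative-to-labels}) means one must check that each elementary move costs only $O(\lgth{\lseq})$ inferences and that there are only $O(\lgth{\lseq})$ moves --- this is routine once the elementary moves are pinned down, but it is the part that needs genuine care rather than hand-waving.
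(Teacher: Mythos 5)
Your strategy is sound but takes a genuinely different route from the paper's. The paper proves this lemma in one shot: it observes that $\Gamma = \Gamma_{1}, \Gamma_{2} = \Gamma_{1}', \Gamma_{2}'$ and $\Delta = \Delta_{1}, \Delta_{2} = \Delta_{1}', \Delta_{2}'$, treats the subtree blocks (an $X$ in the antecedent and a $Y$ in the consequent) as fixed, and gives a single explicit eight-step derivation using $\drii$, $\driii$, $\driv$, $\pml$, $\pmr$ that redistributes the $w$-labeled formulae across $\dar$ --- no induction, no elementary-move decomposition. Your chain of elementary moves closed under transitivity would also work, and it has the virtue of explicitly addressing the case where the two partitions assign a neighbour-subtree to different components, which the paper's derivation does not separately treat (it writes the same $X$ and $Y$ in both translations). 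The machinery you correctly identify as the crux of moves (c)/(d) --- interchangeability of $\ldl^{u}$ and $\ldr^{u}$ up to $\dneg$, proved by induction on subtree depth --- is precisely the paper's \lem~\ref{lem:D1-D2-Equivalence-Rules}, which is stated and proved immediately \emph{after} this lemma and deployed only in \lem~\ref{lem:ld-equivalence-relative-to-labels}; note that \lem~\ref{lem:ltrans-label-sub}, which you cite for this, concerns the opposite translation $\dl_{w}$ and does not help here. So your route front-loads that work, at the cost of a normal-form claim, four case analyses, and an inner induction where the paper needs one short block of display inferences.

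One slip to repair: moves (a) and (b) as stated (``$w:A$ from $\Delta_{1}$ to $\Gamma_{2}$'' and its dual) are not repartitions. A $w$-partition must preserve $\Gamma = \Gamma_{1} \uplus \Gamma_{2}$ and $\Delta = \Delta_{1} \uplus \Delta_{2}$, so a $w$-labeled formula may migrate from $\Gamma_{1}$ to $\Gamma_{2}$ or from $\Delta_{1}$ to $\Delta_{2}$, but never across the sequent arrow; moving $w:A$ from $\Delta_{1}$ to $\Gamma_{2}$ changes the underlying labeled sequent, and at the display level it would take the occurrence from a c-part to an a-part, which no sequence of display rules can achieve. Your own opening analysis of the degrees of freedom gets this right, so the fix is local: the legal flat moves are $\Gamma_{1} \leftrightarrow \Gamma_{2}$ (a-part to a-part, shuttling between a bare antecedent position and a $\dneg$-guarded consequent position) and $\Delta_{1} \leftrightarrow \Delta_{2}$, and these are exactly what the paper's $\drii$/$\driv$-plus-permutation derivation realizes.
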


\begin{proof} Let $\lseq := \rel, \Gamma \sar \Delta$ be a labeled polytree sequent with $w \in \lab(\lseq)$. Furthermore, let $\lseq_{1} \ptcomp{w} \lseq_{2}$ and $\lseq_{1}' \ptcomp{w} \lseq_{2}'$ be $w$-partitions of $\lseq$ with $\lseq_{1} := \rel_{1}, \Gamma_{1} \sar \Delta_{1}$, $\lseq_{2} := \rel_{2}, \Gamma_{2} \sar \Delta_{2}$, $\lseq_{1}' := \rel_{1}', \Gamma_{1}' \sar \Delta_{1}'$, and $\lseq_{2}' := \rel_{2}', \Gamma_{2}' \sar \Delta_{2}'$. It follows that $\Gamma = \Gamma_{1}, \Gamma_{2} = \Gamma_{1}', \Gamma_{2}'$ and $\Delta = \Delta_{1}, \Delta_{2} = \Delta_{1}', \Delta_{2}'$. In the translations $\ld_{w}(\lseq_{1} \ptcomp{w} \lseq_{2})$ and $\ld_{w}(\lseq_{1}' \ptcomp{w} \lseq_{2}')$, we let $X$ and $Y$ be as follows.
$$
X := \mathop{\bigocirc}\limits_{i = 1}^{n} \dneg \bull \dneg (\ldl^{u_{i}}(\lseq |_{u_{i}}^{w})) \circ \mathop{\bigocirc}\limits_{j = 1}^{m}  \bull  (\ldl^{v_{j}}(\lseq |_{v_{j}}^{w}))
$$
$$
Y := \mathop{\bigocirc}\limits_{i = 1}^{n} \bull ( \ldr^{u_{i}}(\lseq |_{u_{i}}^{w})) \circ \mathop{\bigocirc}\limits_{j = 1}^{m}  \dneg \bull \dneg (\ldr^{v_{j}}(\lseq |_{v_{j}}^{w}))
$$
 Let us now prove that $\ld_{w}(\lseq_{1} \ptcomp{w} \lseq_{2}) \equiv \ld_{w}(\lseq_{1}' \ptcomp{w} \lseq_{2}')$. We note that the $\pml$ and $\pmr$ rules are applied a sufficient number of times below to re-write $(\Gamma_{2} \rest w) \circ  (\Gamma_{1} \rest w)$ as $(\Gamma_{2}' \rest w) \circ (\Gamma_{1}' \rest w)$ and $(\Delta_{1} \rest w) \circ  (\Delta_{2} \rest w)$ as $(\Delta_{1}' \rest w) \circ (\Delta_{2}' \rest w)$.
\begin{center}
\AxiomC{$\ld_{w}(\lseq_{1} \ptcomp{w} \lseq_{2})$}
\dottedLine
\RightLabel{=}
\UnaryInfC{$(\Gamma_{1} \rest w) \circ \dneg (\Delta_{1} \rest w) \circ X \dar \dneg (\Gamma_{2} \rest w) \circ (\Delta_{2} \rest w) \circ Y$}
\RightLabel{$\drii$}
\doubleLine
\UnaryInfC{$(\Gamma_{2} \rest w) \circ (\Gamma_{1} \rest w) \circ \dneg (\Delta_{1} \rest w) \circ X \dar (\Delta_{2} \rest w) \circ Y$}
\RightLabel{$\pml \times n_{1}$}
\doubleLine
\UnaryInfC{$\dneg (\Delta_{1} \rest w) \circ (\Gamma_{2}' \rest w) \circ (\Gamma_{1}' \rest w) \circ X \dar (\Delta_{2} \rest w) \circ Y$}
\RightLabel{$\driv$}
\doubleLine
\UnaryInfC{$(\Gamma_{2}' \rest w) \circ (\Gamma_{1}' \rest w) \circ X \dar (\Delta_{1} \rest w) \circ (\Delta_{2} \rest w) \circ Y$}
\RightLabel{$\pmr \times n_{2}$}
\doubleLine
\UnaryInfC{$(\Gamma_{2}' \rest w) \circ (\Gamma_{1}' \rest w) \circ X \dar (\Delta_{1}' \rest w) \circ (\Delta_{2}' \rest w) \circ Y$}
\RightLabel{$\driii$}
\doubleLine
\UnaryInfC{$\ast (\Delta_{1}' \rest w) \circ (\Gamma_{2}' \rest w) \circ (\Gamma_{1}' \rest w) \circ X \dar (\Delta_{2}' \rest w) \circ Y$}
\RightLabel{$\pml \times n_{3}$}
\doubleLine
\UnaryInfC{$(\Gamma_{2}' \rest w) \circ (\Gamma_{1}' \rest w) \circ \ast (\Delta_{1}' \rest w) \circ X \dar (\Delta_{2}' \rest w) \circ Y$}
\RightLabel{$\drii$}
\doubleLine
\UnaryInfC{$(\Gamma_{1}' \rest w) \circ \ast (\Delta_{1}' \rest w) \circ X \dar \ast (\Gamma_{2}' \rest w) \circ (\Delta_{2}' \rest w) \circ Y$}
\dottedLine
\RightLabel{=}
\UnaryInfC{$\ld_{w}(\lseq_{1}' \ptcomp{w} \lseq_{2}')$}
\DisplayProof
\end{center}
\end{proof}

\begin{lemma}\label{lem:D1-D2-Equivalence-Rules} Let $\lseq$ be a labeled polytree sequent. The following rules are derivable in $\dktp$:


\begin{center}
\begin{tabular}{c c c c}
\AxiomC{$\ldl^{w}(\lseq) \dar X$}
\doubleLine
\UnaryInfC{$\dneg\ldr^{w}(\lseq) \dar X$}
\DisplayProof

&

\AxiomC{$\dneg \ldl^{w}(\lseq) \dar X$}
\doubleLine
\UnaryInfC{$\ldr^{w}(\lseq) \dar X$}
\DisplayProof

&

\AxiomC{$X \dar \ldr^{w}(\lseq)$}
\doubleLine
\UnaryInfC{$X \dar \dneg \ldl^{w}(\lseq)$}
\DisplayProof

&

\AxiomC{$X \dar \dneg\ldr^{w}(\lseq)$}
\doubleLine
\UnaryInfC{$X \dar \ldl^{w}(\lseq)$}
\DisplayProof
\end{tabular}
\end{center}


\end{lemma}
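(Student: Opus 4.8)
The plan is to reduce everything to the second of the four rules --- namely that $\dneg\ldl^{w}(\lseq)\dar X$ is derivable exactly when $\ldr^{w}(\lseq)\dar X$ is --- which I prove by induction on the number of relational atoms of $\lseq$ (equivalently, on the size of its polytree). Two auxiliary facts will be used freely. First, a \emph{cleanup} operation that deletes a double occurrence $\dneg\dneg$ anywhere inside a sequent is a derivable reversible rule: by the display property (\thm~\ref{thm:display-theorem}) the relevant substructure $\dneg\dneg Z$ can be displayed as the entire antecedent or consequent, $\drvii$ or $\drviii$ applied, and the result re-displayed. Second, the \emph{De Morgan} rule, stating that $\dneg(S_{1}\circ\cdots\circ S_{k})\dar X$ is derivable exactly when $\dneg S_{1}\circ\cdots\circ\dneg S_{k}\dar X$ is (and its consequent-side analogue), is derivable by iterating $\dri$--$\driv$ together with cleanup.

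For the base case $\rel=\empdata$ the sequent $\lseq$ is $w$-flat, so $\ldl^{w}(\lseq)=(\Gamma\rest w)\circ\dneg(\Delta\rest w)$ and $\ldr^{w}(\lseq)=\dneg(\Gamma\rest w)\circ(\Delta\rest w)$, and a short fixed derivation using $\drii$, $\driii$, $\drv$, $\pml$, and cleanup converts $\dneg\bigl((\Gamma\rest w)\circ\dneg(\Delta\rest w)\bigr)\dar X$ into $\dneg(\Gamma\rest w)\circ(\Delta\rest w)\dar X$ and back. For the inductive step $\rel\neq\empdata$, De Morgan and cleanup rewrite $\dneg\ldl^{w}(\lseq)\dar X$ into a sequent whose antecedent is $\dneg(\Gamma\rest w)\circ(\Delta\rest w)$ extended, for each forward child $u_{i}$ of $w$, by a $\circ$-block $\bull\dneg(\ldl^{u_{i}}(\lseq|_{u_{i}}^{w}))$, and, for each backward parent $v_{j}$ of $w$, by a $\circ$-block $\dneg\bull(\ldl^{v_{j}}(\lseq|_{v_{j}}^{w}))$. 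It then suffices to replace each such block by $\bull(\ldr^{u_{i}}(\lseq|_{u_{i}}^{w}))$, resp.\ by $\dneg\bull\dneg(\ldr^{v_{j}}(\lseq|_{v_{j}}^{w}))$, after which the antecedent is literally $\ldr^{w}(\lseq)$. Each replacement is effected by displaying the block as the entire antecedent --- using $\drix$ and $\drv$, and the derivable rules $\deri$--$\derv$ of \fig~\ref{fig:derivable-rules-display} to peel off the $\bull$ and any $\dneg$ --- so that the subpolytree translation becomes the entire antecedent or consequent; one then invokes the induction hypothesis for the strictly smaller polytree $\lseq|_{u_{i}}^{w}$, resp.\ $\lseq|_{v_{j}}^{w}$, in whichever of the four forms matches its current antecedent/consequent polarity, and re-displays. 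Since every step is a fixed reversible derivation, their composition is a derivable reversible rule, establishing the second rule; the first is obtained by the mirror argument, pushing $\dneg$ into $\ldr^{w}(\lseq)$ rather than $\ldl^{w}(\lseq)$, and the third and fourth rules then follow from the first two by single applications of $\drvi$, $\drv$, and cleanup.

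The hard part is bookkeeping rather than conceptual. One must (i) track how the antecedent/consequent polarity of each subpolytree block flips as successive $\dneg$'s and $\bull$'s are stripped off, so that the correct one of the four interderivability forms is invoked at each recursive call (this is why all four rules are proved simultaneously); (ii) treat the family of forward-children and backward-parents of $w$ uniformly; and (iii) keep an exact count of the spurious $\dneg\dneg$ pairs created by De Morgan distribution so that cleanup removes precisely the intended ones. Since no step uses anything beyond reversible display and structural rules, the resulting objects are genuine derivable rules of $\dktp$ --- not merely admissible ones --- which is what the translation arguments of \sects~\ref{section-5} and~\ref{section-6} require.
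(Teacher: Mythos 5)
Your proposal is correct and follows essentially the same route as the paper's proof: an induction on the size of the polytree, a short reversible display derivation for the flat base case, and an inductive step that displays each subpolytree block and invokes the induction hypothesis on the strictly smaller subpolytree sequents, with every step reversible. The only differences are organizational --- you distribute the outer $\dneg$ first and make the mutual induction across all four forms explicit, whereas the paper derives the first rule directly (all sub-blocks being a-parts there) and notes the other three are similar --- which does not change the substance of the argument.
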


\begin{proof} We prove the lemma by induction on the number of labels in $\lseq$, and prove the claim for the first rule; the remaining cases are similar.

\textit{Base case.} If $\lseq = \empdata \sar \empdata$, then the claim follows trivially by the $\ql$ rule. Let us suppose that $\lseq = \Gamma \sar \Delta$, where every labeled formula in $\Gamma$ and $\Delta$ has the same label $w$. Then, the rule is derived as shown below, where $\ldl^{w}(\lseq) = (\Gamma \rest w) \circ \dneg (\Delta \rest w)$ and $\dneg\ldr^{w}(\lseq) = \dneg (\dneg (\Gamma \rest w) \circ (\Delta \rest w))$.

\begin{center}
\AxiomC{$(\Gamma \rest w) \circ \dneg (\Delta \rest w) \dar X$}
\doubleLine
\RightLabel{$\dri$}
\UnaryInfC{$(\Gamma \rest w) \dar X \circ \dneg \dneg (\Delta \rest w)$}
\doubleLine
\RightLabel{$\driv$}
\UnaryInfC{$\dneg X \circ (\Gamma \rest w) \dar \dneg \dneg (\Delta \rest w)$}
\doubleLine
\RightLabel{$\drviii$}
\UnaryInfC{$\dneg X \circ (\Gamma \rest w) \dar (\Delta \rest w)$}
\doubleLine
\RightLabel{$\driv$}
\UnaryInfC{$(\Gamma \rest w) \dar X \circ (\Delta \rest w)$}
\doubleLine
\RightLabel{$\il$}
\UnaryInfC{$I \circ (\Gamma \rest w) \dar X \circ (\Delta \rest w)$}
\RightLabel{$\dri$}
\UnaryInfC{$I \dar X \circ (\Delta \rest w) \circ \dneg (\Gamma \rest w)$}
\doubleLine
\RightLabel{$\driii$}
\UnaryInfC{$I \circ \dneg (\dneg (\Gamma \rest w) \circ (\Delta \rest w)) \dar X$}
\doubleLine
\RightLabel{$\il$}
\UnaryInfC{$\dneg (\dneg (\Gamma \rest w) \circ (\Delta \rest w)) \dar X$}
\DisplayProof
\end{center}

\textit{Inductive step.} Let us suppose that $\lseq$ contains $n+1$ labels. By \dfn~\ref{def:translation-lab-to-dis}, $\ldl^{w}(\lseq) \dar X$ is equal to the following display sequent.
$$
(\Gamma_{1} \rest w) \circ \dneg (\Delta_{1} \rest w) \circ \mathop{\bigocirc}\limits_{i = 1}^{n} \dneg \bull \dneg (\ldl^{u_{i}}(\lseq |_{u_{i}}^{w})) \circ \mathop{\bigocirc}\limits_{j = 1}^{m}  \bull  (\ldl^{v_{j}}(\lseq |_{v_{j}}^{w})) \dar X
$$
Observe that each $\ldl^{u_{i}}(\lseq |_{u_{i}}^{w})$ and $\ldl^{v_{j}}(\lseq |_{v_{j}}^{w})$ is an a-part. Hence, we can invoke the display theorem (\thm~\ref{thm:display-theorem}) to display each such structure; then, by applying \ih (which is permitted since each labeled sequent $\lseq |_{u_{i}}^{w}$ and $\lseq |_{v_{j}}^{w}$ has a fewer number of labels), we obtain the following equivalent display sequent.
$$
(\Gamma_{1} \rest w) \circ \dneg (\Delta_{1} \rest w) \circ \mathop{\bigocirc}\limits_{i = 1}^{n} \dneg \bull \dneg (\dneg \ldr^{u_{i}}(\lseq |_{u_{i}}^{w})) \circ \mathop{\bigocirc}\limits_{j = 1}^{m}  \bull  (\dneg \ldr^{v_{j}}(\lseq |_{v_{j}}^{w})) \dar X
$$
Using display and reversible structural rules, we may derive the following display sequent.
$$
\dneg ( \dneg (\Gamma_{1} \rest w) \circ (\Delta_{1} \rest w) \circ \mathop{\bigocirc}\limits_{i = 1}^{n} \bull ( \ldr^{u_{i}}(\lseq |_{u_{i}}^{w})) \circ \mathop{\bigocirc}\limits_{j = 1}^{m}  \dneg \bull \dneg (\ldr^{v_{j}}(\lseq |_{v_{j}}^{w})) ) \dar X
$$
The above display sequent is equal to $\dneg\ldr^{w}(\lseq) \dar X$, thus giving us our desired conclusion. Also, we note that all of the rules used to derive the conclusion are reversible, implying that the premise may be derived from the conclusion as well.
\end{proof}

\begin{lemma}\label{lem:ld-equivalence-relative-to-labels}
Let $\lseq = \rel, \Gamma \sar \Delta$ be a labeled polytree sequent with $w,u \in \lab(\lseq)$. Then, 
 there exists a polynomial $p$ such that $\ld_{w}(\lseq)$ and $\ld_{u}(\lseq)$ are mutually derivable from one another with at most $p(\lgth{\lseq})$ many reversible rule applications.
\end{lemma}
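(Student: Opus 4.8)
The plan is to induct on the length $\ell$ of the unique undirected path between $w$ and $u$ in the polytree $\grph{\lseq}$; this path exists and is unique because $\grph{\lseq}$ is connected and free of (un)directed cycles, and $\ell \leq |\lab(\lseq)| - 1 \leq \lgth{\lseq}$. The base case $\ell = 0$ is immediate, since then $w = u$. For the inductive step I would let $w'$ be the neighbour of $w$ on the path towards $u$, so that $w$ and $w'$ are adjacent and the path from $w'$ to $u$ has length $\ell - 1$; it then suffices to establish the claim for \emph{adjacent} $w$ and $u$ with a fixed polynomial bound $q$, because transforming $\ld_w(\lseq)$ into $\ld_{w'}(\lseq)$ with at most $q(\lgth{\lseq})$ reversible rules and then invoking the induction hypothesis for $w'$ and $u$ produces a mutual derivation using at most $\ell \cdot q(\lgth{\lseq}) \leq \lgth{\lseq}\cdot q(\lgth{\lseq})$ reversible rule applications, which is polynomial in $\lgth{\lseq}$. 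So from now on I would fix adjacent $w,u \in \lab(\lseq)$ and assume w.l.o.g.\ that $Rwu \in \rel$, the case $Ruw \in \rel$ being symmetric.

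For the adjacent case, the first move is to bring both translations into a common normal form. By \cref{def:sub-sequent} I would write $\lseq = (S \ptcomp{w} (Rwu \sar \empdata)) \ptcomp{u} T$, where $S := \lseq|_{w}^{u}$ is the subpolytree on the $w$-side of the shared edge (so $w \in \lab(S)$ but $u \notin \lab(S)$) and $T := \lseq|_{u}^{w}$ is the subpolytree on the $u$-side. By \cref{lem:ld-partition-invariance} the value of $\ld_w(\lseq)$ does not depend on the chosen $w$-partition up to $\equiv$, so I would evaluate it with the partition whose antecedent piece is $S$ (which already carries all $w$-labelled formulae of $\lseq$) and whose consequent piece is $(Rwu \sar \empdata) \ptcomp{u} T$; unfolding \cref{def:translation-lab-to-dis} then gives $\ld_w(\lseq) \equiv \ldl^{w}(S) \dar \dneg I \circ I \circ \bull\ldr^{u}(T)$, and deleting the identity structures with a bounded number of reversible rules yields $\ld_w(\lseq) \equiv \ldl^{w}(S) \dar \bull\ldr^{u}(T)$. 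The symmetric computation with the $u$-partition of $\lseq$ whose antecedent piece is $T$ and consequent piece is $(Rwu \sar \empdata) \ptcomp{w} S$ gives $\ld_u(\lseq) \equiv \ldl^{u}(T) \dar \dneg\bull\dneg\ldr^{w}(S)$.

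The heart of the argument is then a short reversible derivation connecting $\ldl^{w}(S) \dar \bull\ldr^{u}(T)$ and $\ldl^{u}(T) \dar \dneg\bull\dneg\ldr^{w}(S)$: apply $\drix$ to obtain $\bull\ldl^{w}(S) \dar \ldr^{u}(T)$; apply the third rule of \cref{lem:D1-D2-Equivalence-Rules} to rewrite $\ldr^{u}(T)$ as $\dneg\ldl^{u}(T)$; apply $\drvi$ to obtain $\ldl^{u}(T) \dar \dneg\bull\ldl^{w}(S)$; then display $\ldl^{w}(S)$ as the entire antecedent using $\drvi$ and $\drix$ (which is legitimate, as $\ldl^{w}(S)$ is an a-part; cf.\ \cref{thm:display-theorem}), apply the first rule of \cref{lem:D1-D2-Equivalence-Rules} to rewrite $\ldl^{w}(S)$ as $\dneg\ldr^{w}(S)$, and re-assemble via $\drix$ and $\drvi$ to reach $\ldl^{u}(T) \dar \dneg\bull\dneg\ldr^{w}(S)$. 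Every step uses a reversible rule, so composing this derivation with the repartitioning derivations supplied by \cref{lem:ld-partition-invariance} and the constant-size identity cleanups yields a mutual derivation of $\ld_w(\lseq)$ and $\ld_u(\lseq)$ by reversible rules in the adjacent case. For the bound: each use of \cref{lem:ld-partition-invariance} costs $O(\lgth{\lseq})$ reversible rules, and each use of \cref{lem:D1-D2-Equivalence-Rules} and of the display property (\cref{thm:display-theorem}) costs at most polynomially many reversible rules in the lengths of the display structures involved, which by \cref{rmk:dis-seq-polynomial-in-lpt} are polynomial in $\lgth{\lseq}$; hence the adjacent case costs at most $q(\lgth{\lseq})$ reversible rules for a fixed polynomial $q$, closing the induction.

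The hard part will be the bookkeeping in the adjacent case: precisely matching the recursive unfoldings of $\ld_w(\lseq)$ and $\ld_u(\lseq)$ across the shared edge, keeping track of the $\ff/\pp$ asymmetry (the $\dneg\bull\dneg(-)$ versus $\bull(-)$ translations of incoming versus outgoing edges in \cref{def:translation-lab-to-dis}) together with the $\ldl/\ldr$ duality, and verifying that the required $\dneg$-cancellations and $\bull$-shifts are realisable by the display rules and \cref{lem:D1-D2-Equivalence-Rules}. A secondary point needing care is that normalising the $\circ$-structure of a display sequent (reordering $\circ$-conjuncts and deleting occurrences of $I$) when passing between $w$-partitions costs only polynomially many applications of $\pml$, $\pmr$, $\asl$, $\asr$, $\il$, $\ir$, $\ql$, $\qr$ — which holds because each display structure that arises has only polynomially many top-level $\circ$-conjuncts.
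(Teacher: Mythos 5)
Your proof is correct, and its top-level skeleton coincides with the paper's: both induct on the length of the (unique, minimal) path between $w$ and $u$, reduce to the case of adjacent labels, and lean on \lem~\ref{lem:ld-partition-invariance} and \lem~\ref{lem:D1-D2-Equivalence-Rules} for the heavy lifting and on their reversibility for the polynomial count. Where you genuinely diverge is in the organization of the adjacent case. The paper keeps an arbitrary $w$-partition, writes out $\ld_{w}(\lseq)$ in fully unfolded form with explicit contexts $X$ and $Y$, shuffles all $w$-labelled material into the consequent by hand, then invokes the display property to display \emph{each} a-part $\ldl^{u_i}(\lseq|_{u_i}^{w})$ and $\ldl^{v_j}(\lseq|_{v_j}^{w})$ of the context individually, converting each via \lem~\ref{lem:D1-D2-Equivalence-Rules}, and finally crosses the shared edge with the derived rule $\deri$ of \fig~\ref{fig:derivable-rules-display}. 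You instead push all of that bookkeeping into \lem~\ref{lem:ld-partition-invariance} by choosing the extreme partitions that isolate the two subpolytrees $S = \lseq|_{w}^{u}$ and $T = \lseq|_{u}^{w}$ across the edge, reducing the adjacent case to a nine-step derivation between $\ldl^{w}(S) \dar \bull\ldr^{u}(T)$ and $\ldl^{u}(T) \dar \dneg\bull\dneg\ldr^{w}(S)$ that uses $\drix$, $\drvi$, and only two top-level applications of \lem~\ref{lem:D1-D2-Equivalence-Rules}, with no need for $\deri$ or per-a-part conversions. This buys a cleaner and more easily checked core derivation at the cost of invoking \lem~\ref{lem:ld-partition-invariance} twice per edge (and of the implicit $I$/$\dneg I$ cleanup when reading off the extreme partitions, which is at the same level of informality as the paper's own ``$=$'' steps). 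Both arguments yield the same polynomial bound; your accounting of $O(\lgth{\lseq})$ per repartitioning is slightly optimistic if the $\circ$-reorderings in \lem~\ref{lem:ld-partition-invariance} are done by adjacent swaps, but this only changes the degree of the polynomial, not the conclusion.
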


\begin{proof} Let $\lseq  := \rel, \Gamma \sar \Delta$. We first argue that $\ld_{w}(\lseq) \equiv \ld_{u}(\lseq)$ by induction on the length of the minimal path of relational atoms in $\rel$ from $w$ to $u$. After, we argue that only polynomially many reversible rules were applied to derive $\ld_{u}(\lseq)$ from $\ld_{w}(\lseq)$. Let us consider two partitions of $\lseq$, namely, (i) $\lseq = \lseq_{1} \ptcomp{w} \lseq_{2}$ with $\lseq_{1} := \rel_{1},\Gamma_{1} \sar \Delta_{1}$ and $\lseq_{2} := \rel_{2},\Gamma_{2} \sar \Delta_{2}$, and (ii) $\lseq = \lseq_{1}' \ptcomp{u} \lseq_{2}'$. 
 By \lem~\ref{lem:ld-partition-invariance}, $\lseq_{1} \ptcomp{w} \lseq_{2}$ and $\lseq_{1}' \ptcomp{u} \lseq_{2}'$ may be any arbitrary $w$-partition and $u$-partition of $\lseq$.

\textit{Base case.} We assume that length of the minimal path between $w$ and $u$ is is one as the case when the length of the minimal path is zero (i.e. $w = u$) follows from \lem~\ref{lem:ld-partition-invariance} above. We prove that $\ld_{w}(\lseq) \equiv  \ld_{u}(\lseq)$ and have four cases to consider: either $Rwu \in \rel_{1}$, $Ruw \in \rel_{1}$, $Rwu \in \rel_{2}$, or $Ruw \in \rel_{2}$. We argue the first case as the others are proven in a similar fashion.

The display sequent $\ld_{w}(\lseq) = \ldl^{w}(\lseq_{1}) \dar \ldr^{w}(\lseq_{2})$ is of the form shown below top, and we let $u_{1} = u$ for convenience.
$$
\ld_{w}(\lseq) = (\Gamma_{1} \rest w) \circ \dneg (\Delta_{1} \rest w) \circ \dneg \bull \dneg \ldl^{u}(\lseq |^{w}_{u}) \circ X \dar \dneg (\Gamma_{2} \rest w) \circ (\Delta_{2} \rest w) \circ Y
$$
$$
X := \mathop{\bigocirc}\limits_{i = 2}^{n} \dneg \bull \dneg \ldl^{u_{i}}(\lseq |_{u_{i}}^{w}) \circ \mathop{\bigocirc}\limits_{j = 1}^{m} \bull \ldl^{v_{j}}(\lseq |_{v_{j}}^{w})
\quad
Y := \mathop{\bigocirc}\limits_{i = 1}^{k} \bull \ldr^{x_{i}}(\lseq |_{x_{i}}^{w}) \circ \mathop{\bigocirc}\limits_{j = 1}^{l} \dneg \bull \dneg \ldr^{y_{j}}(\lseq |_{y_{j}}^{w})
$$
We may now apply the following sequence of rules to the display sequent above to derive the conclusion shown below.

\begin{center}
\AxiomC{$(\Gamma_{1} \rest w) \circ \dneg (\Delta_{1} \rest w) \circ \dneg \bull \dneg \ldl^{u}(\lseq |^{w}_{u}) \circ X \dar \dneg (\Gamma_{2} \rest w) \circ (\Delta_{2} \rest w) \circ Y$}
\doubleLine
\RightLabel{$\drii$}
\UnaryInfC{$(\Gamma_{2} \rest w) \circ (\Gamma_{1} \rest w) \circ \dneg (\Delta_{1} \rest w) \circ \dneg \bull \dneg \ldl^{u}(\lseq |^{w}_{u}) \circ X \dar (\Delta_{2} \rest w) \circ Y$}
\doubleLine
\RightLabel{$\pml$}
\UnaryInfC{$\dneg (\Delta_{1} \rest w) \circ (\Gamma_{1} \rest w) \circ (\Gamma_{2} \rest w)\circ \dneg \bull \dneg \ldl^{u}(\lseq |^{w}_{u}) \circ X \dar (\Delta_{2} \rest w) \circ Y$}
\doubleLine
\RightLabel{$\driv$}
\UnaryInfC{$(\Gamma_{1} \rest w) \circ (\Gamma_{2} \rest w)\circ \dneg \bull \dneg \ldl^{u}(\lseq |^{w}_{u}) \circ X \dar (\Delta_{1} \rest w) \circ (\Delta_{2} \rest w) \circ Y$}
\RightLabel{=}
\dottedLine
\UnaryInfC{$(\Gamma \rest w) \circ \dneg \bull \dneg \ldl^{u}(\lseq |^{w}_{u}) \circ X \dar (\Delta \rest w) \circ Y$}
\RightLabel{$\drii$}
\doubleLine
\UnaryInfC{$\dneg \bull \dneg \ldl^{u}(\lseq |^{w}_{u}) \circ X \dar \dneg (\Gamma \rest w) \circ (\Delta \rest w) \circ Y$}
\DisplayProof
\end{center}
By the display theorem (\thm~\ref{thm:display-theorem}), we can display each a-part $\ldl^{u_{i}}(\lseq |_{u_{i}}^{w})$ and $\ldl^{v_{j}}(\lseq |_{v_{j}}^{w})$ in $X$ and apply \lem~\ref{lem:D1-D2-Equivalence-Rules} to derive the following display sequent.
$$
\dneg \bull \dneg \ldl^{u}(\lseq |^{w}_{u}) \circ \mathop{\bigocirc}\limits_{i = 2}^{n} \dneg \bull \dneg \dneg \ldr^{u_{i}}(\lseq |_{u_{i}}^{w}) \circ \mathop{\bigocirc}\limits_{j = 1}^{m} \bull \dneg \ldr^{v_{j}}(\lseq |_{v_{j}}^{w}) \dar \dneg (\Gamma \rest w) \circ (\Delta \rest w) \circ Y
$$
Applying display and reversible structural rules, the following display sequent is derivable from the one above.
$$
\dneg \bull \dneg \ldl^{u}(\lseq |^{w}_{u}) \circ \mathop{\bigocirc}\limits_{i = 2}^{n} \dneg \bull \ldr^{u_{i}}(\lseq |_{u_{i}}^{w}) \circ \mathop{\bigocirc}\limits_{j = 1}^{m} \bull \dneg \ldr^{v_{j}}(\lseq |_{v_{j}}^{w}) \dar \dneg (\Gamma \rest w) \circ (\Delta \rest w) \circ Y
$$
 Again, we may apply display and reversible structural rules to derive the following display sequent from the one above.
$$
\dneg \bull \dneg \ldl^{u}(\lseq |^{w}_{u}) \dar \dneg (\Gamma \rest w) \circ (\Delta \rest w) \circ Y'
$$
$$
Y' := \mathop{\bigocirc}\limits_{i = 2}^{n} \bull \ldr^{u_{i}}(\lseq |_{u_{i}}^{w}) \circ \mathop{\bigocirc}\limits_{i = 1}^{k} \bull \ldr^{x_{i}}(\lseq |_{x_{i}}^{w}) \circ \mathop{\bigocirc}\limits_{j = 1}^{m} \dneg \bull \dneg \ldr^{v_{j}}(\lseq |_{v_{j}}^{w}) \circ \mathop{\bigocirc}\limits_{j = 1}^{l} \dneg \bull \dneg \ldr^{y_{j}}(\lseq |_{y_{j}}^{w})
$$
By using $\deri$ from \fig~\ref{fig:derivable-rules-display}, we can derive 
$$
\ldl^{u}(\lseq |^{w}_{u}) \dar \dneg \bull \dneg (\dneg (\Gamma \rest w) \circ (\Delta \rest w) \circ Y') = \ldl^{u}(\lseq |^{w}_{u}) \dar \ldr^{u}(\lseq |^{u}_{w} \seqcomp (Rwu \sar \empdata))
$$ 
from the former display sequent above. By \lem~\ref{lem:ld-partition-invariance}, since $\lseq |^{w}_{u} \ptcomp{u} (\lseq |^{u}_{w} \seqcomp (Rwu \sar \empdata))$ is a $u$-partition of $\lseq$, we know that $\ld_{u}(\lseq |^{w}_{u} \ptcomp{u} (\lseq |^{u}_{w} \seqcomp (Rwu \sar \empdata))) \equiv \ld_{u}(\lseq_{1}' \seqcomp \lseq_{2}')$, and so, we have shown that $\ld_{w}(\lseq) \equiv \ld_{u}(\lseq)$ since only display and reversible structural rules were used above.

\textit{Inductive step.} Suppose that the length of the minimal path of relational atoms between $w$ and $u$ is $n+1$. Then, there exists a label $v$ such that the length of the minimal path from $w$ to $v$ is at most length $n$ and length of the minimal path from $v$ to $u$ is $1$. Hence, by \ih, $\ld_{w}(\lseq) \equiv \ld_{v}(\lseq)$ and $\ld_{v}(\lseq) \equiv \ld_{u}(\lseq)$, which implies the desired result and concludes the inductive step.

Last, observe that only reversible rules were used in the proofs of 
 \lem~\ref{lem:ld-partition-invariance}, \lem~\ref{lem:D1-D2-Equivalence-Rules}, and in the derivation of $\ld_{u}(\lseq)$ from $\ld_{w}(\lseq)$ above. Furthermore, the number of rules applied only depends on the length $n$ of the path from $w$ to $u$ in $\rel$ along with the number of incoming and outgoing edges (i.e. relational atoms) from nodes (i.e. labels) along this path. The above proof describes an algorithm that takes this path with incoming and outgoing edges as input and computes successive rule applications by processing ever smaller terminal segments of this path until $\ld_{u}(\lseq)$ is reached. Since the length of the path $n \leq \lgth{\lseq}$, one can verify that for a polynomial $p$, $p(\lgth{\lseq})$-many reversible inferences occur in the derivation of $\ld_{u}(\lseq)$ from $\ld_{w}(\lseq)$.
\end{proof}

\section{From Display to Labeled Proofs}\label{section-5}

 We show how to translate each display proof in $\dktp$ into a strict labeled polytree proof in $\gtktp$. Moreover, we argue that this translation is computable in $\ptime$ and does not increase the size of the proof. This result is significant for the following reasons: first, we establish that each labeled calculus $\gtktp$ can polynomially simulate each display calculus $\dktp$, showing that display calculi cannot outperform (in terms of complexity) labeled calculi. Second, we identify what subspace of labeled proofs correspond to display proofs, and in the following section, we show that each labeled proof within this space can be translated in $\ptime$ into a display proof as well, giving a characterization result. In essence, we find that display calculi are a restriction of labeled calculi and that the space of display proofs is a fragment of the space of labeled proofs.
 
 Let us first show how to translate proofs from the base system $\dkt$ into $\gtkt$. In the sequel, we explain how this translation can be augmented to translate primitive tense structural rules as well, giving the main result of the section (\thm~\ref{thm:labeled-p-simulates-display}).

\begin{theorem}\label{thm:dkt-to-gtkt}
If $\proves{\dkt}{\dproof}{\dseq}$, then there exists a $\ptime$ function $f$ in $\size{\dproof}$ 
 such that $\proves{\gtkt}{f(\dproof)}{\dl_{w}(\dseq)}$, $\size{f(\dproof)} = \mathcal{O}(\size{\dproof}^{2})$, and $f(\dproof)$ is a labeled polytree proof.
\end{theorem}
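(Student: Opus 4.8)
The plan is to define $f$ by recursion on the (cut-free) display proof $\dproof$, replacing each inference of $\dkt$ by a short block of $\gtkt$-inferences possibly interleaved with the qp-admissible, polytree-preserving structural rules $\swk$, $\slsub$, $\ctrl$, $\ctrr$ of \sect~\ref{section-3-2} (relabeling subproofs as needed to keep $\bull$-induced labels distinct across branches), and then eliminating those admissible instances by pushing them upward as in \rmk~\ref{rmk:qp-admiss-PTIME}. The target $\dl_{w}(\dseq)$ is a labeled polytree sequent by \lem~\ref{lem:ltrans-label-sub}(1), and since every rule of $\gtkt$ bottom-up preserves the labeled polytree property (\prp~\ref{prop:gtkt-upward-preserves-polytree}), the resulting $\gtkt$-proof of $\dl_{w}(\dseq)$ is automatically a labeled polytree proof; so all the work lies in exhibiting the block attached to each class of $\dkt$-rules, arguing throughout about rule \emph{instances}.

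\emph{Initial and logical rules.} Unwinding \dfn~\ref{def:dis-to-lab-trans}, $\dl_{w}$ sends any instance $p \dar p$, $I \dar \top$, $\bot \dar I$ of the initial rules to instances of $\id$, $\topr$, $\botl$, which handles the base case. For the other logical rules the key computation is that $\dl_{w}$ turns $\circ$ into comma and $\dneg$ into left--right swapping, so instances of $\negl$, $\negr$, $\conl$, $\conr$, $\disl$, $\disr$, $\impr$ map directly onto the matching $\gtkt$ logical rule; $\impl$ maps onto the $\gtkt$ rule $\impl$ preceded by two $\swk$ steps equalizing the contexts of its two branches; and $\topl$, $\botr$, which merely weaken in $\top$ or $\bot$, map onto a single $\swk$. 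For the modal rules I would first relabel the translated premise using \lem~\ref{lem:ltrans-label-sub}(3) so that the node created by $\dl_{w}$ at the relevant $\bull$ becomes the appropriate (root or fresh-child) label, and then apply the matching $\gtkt$ modal rule, preceded when necessary --- e.g. for $\gld$ and $\frd$, though not for $\gr$ or $\fl$ --- by one $\swk$ supplying the relational atom and principal formula that the $\gtkt$ rule keeps in its premise. Each such block is of bounded length and plainly $\ptime$-computable.

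\emph{Display rules and the $\il$, $\ir$, $\ql$, $\qr$, $\asl$, $\asr$, $\pml$, $\pmr$ rules.} By \lem~\ref{lem:dis-equiv-iso} each of these inferences satisfies $\dl_{w}(\dseq) \iso \dl_{w}(\dseq')$, so the translated premise and conclusion are the same labeled sequent up to renaming of labels; such an inference is mapped to the empty block (a dotted isomorphism step). This is exactly where the bureaucracy of the display formalism collapses, and it is why $f$ does not increase the number of sequents. The display weakenings $\dwl$, $\dwr$ and the monotonicity rules $\ml$, $\mr$ each translate, like $\topl$/$\botr$, into a single $\swk$, since $\dl_{w}$ of the conclusion is $\dl_{w}$ of the premise composed with a fresh polytree component meeting it in at most the root label. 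The remaining case --- the display contractions $\dcl$, $\dcr$ --- is the delicate one: for an instance with premise $X \circ X \dar Y$, the sequent $\dl_{w}(X \circ X \dar Y)$ contains two isomorphic but label-disjoint copies of $\dl^{w}_{1}(X)$, and reaching $\dl_{w}(X \dar Y)$ requires ``zipping'' them. Following \eg~\ref{eg:zip-polytree}, one repeatedly applies $\slsub$ to identify corresponding nodes down the two copies --- duplicate relational atoms then collapse automatically (they are sets) and duplicate labeled formulae are removed by $\ctrl$/$\ctrr$ --- and since $\dl^{w}_{1}(X)$ is a polytree with at most $\wdth{\dproof}$ nodes this block has length $\mathcal{O}(\wdth{\dproof})$ while keeping every sequent a labeled polytree sequent (\thm~\ref{thm:polytree-admiss}(1)). (If $\dproof$ happened to use $\cut$, the inference maps to the admissible labeled $\cut$; as throughout the paper we take $\dproof$ cut-free.)

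Concatenating the blocks yields a labeled polytree proof $\lproof_{0}$ of $\dl_{w}(\dseq)$ whose non-$\gtkt$ inferences are qp-admissible instances of $\swk$, $\slsub$, $\ctrl$, $\ctrr$; by construction $\qty{\lproof_{0}}$ is polynomial in $\size{\dproof}$, and by \lem~\ref{lem:ltrans-label-sub}(2) we have $\wdth{\lproof_{0}} = \mathcal{O}(\wdth{\dproof})$. Eliminating the admissible structural rules by the upward-permutation procedure of \rmk~\ref{rmk:qp-admiss-PTIME} is computable in $\ptime$ and keeps the proof size polynomially bounded; a careful accounting of these bounds gives $\size{f(\dproof)} = \mathcal{O}(\size{\dproof}^{2})$, and the resulting pure-$\gtkt$ proof of $\dl_{w}(\dseq)$ is a labeled polytree proof by \prp~\ref{prop:gtkt-upward-preserves-polytree}. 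I expect the two main obstacles to be (i) the contraction case --- verifying that the $\slsub$/$\ctrl$/$\ctrr$ ``zipping'' genuinely reconstructs $\dl_{w}(X \dar Y)$ and never leaves the labeled polytree fragment --- and (ii) the size and complexity bookkeeping through the elimination of the admissible structural rules, which is where the quadratic growth in sequent length is incurred.
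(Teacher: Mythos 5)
Your proposal follows essentially the same route as the paper's proof: induction on the display proof, collapsing the display and permutation rules to identity/isomorphism steps via \lem~\ref{lem:dis-equiv-iso}, mapping each logical, modal, and weakening rule to its $\gtkt$ counterpart padded with qp-admissible $\swk$ steps, handling $\dcl$/$\dcr$ by the $\slsub$/$\ctrl$/$\ctrr$ zipping of \eg~\ref{eg:zip-polytree}, and attributing the quadratic size bound to the elimination of the admissible $\swk$ instances as in \rmk~\ref{rmk:qp-admiss-PTIME}. The argument and the bookkeeping match the paper's, so no further comment is needed.
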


\begin{proof} Suppose $\proves{\dkt}{\dproof}{\dseq}$. We show by induction on the quantity of $\dproof$ that it can be translated in $\ptime$ into a labeled polytree proof $f(\dproof)$ of $\dl_{w}(\dseq)$ in $\gtkt$ such that $\size{f(\dproof)} \leq \size{\dproof}$. By \lem~\ref{lem:dis-equiv-iso}, we need not translate instances of $\il$, $\ir$, $\ql$, $\qr$, $\asl$, $\asr$, $\pml$, $\pmr$, or the display rules, as all such rules produce isomorphic or identical labeled polytree sequents under the $\dl_{w}$ translation. Below, when we invoke the qp-admissibility of $\wk$ (\thm~\ref{lem:admiss-rules-labeled}), we note that $\wk$ will always be an instance of $\swk$, which ensures that the polytree structure of labeled sequents in the target proof will be preserved (\thm~\ref{thm:polytree-admiss}).

\textit{Base case.} We show how the initial rules $\id$, $\topr$, and $\botl$ are translated from $\dkt$ to initial rules in $\gtktii$. Observe that each initial sequent is a labeled polytree sequent.
\begin{center}
\begin{tabular}{c c c @{\hskip 1em} c c c}
\AxiomC{}
\RightLabel{$\id$}
\UnaryInfC{$p \dar p$}
\DisplayProof

&

$\leadsto$

&

\AxiomC{}
\RightLabel{$\id$}
\UnaryInfC{$w : p \sar w : p$}
\RightLabel{=}
\dottedLine
\UnaryInfC{$\dll^{w}(p) \seqcomp \dlr^{w}(p)$}
\RightLabel{=}
\dottedLine
\UnaryInfC{$\dl_{w}(p \dar p)$}
\DisplayProof

&

\AxiomC{}
\RightLabel{$\topr$}
\UnaryInfC{$I \dar \top$}
\DisplayProof

&

$\leadsto$

&

\AxiomC{}
\RightLabel{$\topr$}
\UnaryInfC{$\empdata \sar w : \top$}
\RightLabel{=}
\dottedLine
\UnaryInfC{$\dll^{w}(I) \seqcomp \dlr^{w}(\top)$}
\RightLabel{=}
\dottedLine
\UnaryInfC{$\dl_{w}(I \dar \top)$}
\DisplayProof
\end{tabular}
\end{center}
\begin{center}
\begin{tabular}{c c c}
\AxiomC{}
\RightLabel{$\botl$}
\UnaryInfC{$\bot \dar I$}
\DisplayProof

&

$\leadsto$

&

\AxiomC{}
\RightLabel{$\botl$}
\UnaryInfC{$w : \bot \sar \empdata$}
\RightLabel{=}
\dottedLine
\UnaryInfC{$\dll^{w}(\bot) \seqcomp \dlr^{w}(I)$}
\RightLabel{=}
\dottedLine
\UnaryInfC{$\dl_{w}(\bot \dar I)$}
\DisplayProof
\end{tabular}
\end{center}

\textit{Inductive step.} For the inductive step, we give a few interesting cases of translating inferences from the given display proof into the target labeled proof. The remaining cases can be found in the appendix as they are either simple or similar. In each case, it can be checked that the result is a labeled polytree proof.
\begin{center}
\begin{tabular}{c c c}
\AxiomC{$I \dar \ry$}
\RightLabel{$\topl$}
\UnaryInfC{$\top \dar \ry$}
\DisplayProof

&

$\leadsto$

&

\AxiomC{}
\RightLabel{\ih}
\UnaryInfC{$\dl_{w}(I \dar \ry)$}
\RightLabel{=}
\dottedLine
\UnaryInfC{$\dll^{w}(I) \seqcomp \dlr^{w}(\ry)$}
\RightLabel{$\swk$}
\dashedLine
\UnaryInfC{$(w : \top \sar \empdata) \seqcomp \dll^{w}(I) \seqcomp \dlr^{w}(\ry)$}
\RightLabel{=}
\dottedLine
\UnaryInfC{$\dll^{w}(\top) \seqcomp \dlr^{w}(\ry)$}
\RightLabel{=}
\dottedLine
\UnaryInfC{$\dl_{w}(\top \dar \ry)$}
\DisplayProof
\end{tabular}
\end{center}
\begin{center}
\begin{tabular}{c c c}
\AxiomC{$\rx \dar \dneg A$}
\RightLabel{$\negr$}
\UnaryInfC{$\rx \dar \neg A$}
\DisplayProof

&

$\leadsto$

&

\AxiomC{}
\RightLabel{\ih}
\UnaryInfC{$\dl_{w}(\rx \dar \dneg A)$}
\RightLabel{=}
\dottedLine
\UnaryInfC{$\dlr^{w}(\rx) \seqcomp \dll^{w}(A)$}
\RightLabel{$\negr$}
\UnaryInfC{$\dll^{w}(\rx) \seqcomp \dlr^{w}(\neg A)$}
\RightLabel{=}
\dottedLine
\UnaryInfC{$\dl_{w}(\rx \dar \neg A)$}
\DisplayProof
\end{tabular}
\end{center}
When translating the $\impl$ rule below, observe that we invoke the qp-admissibility of $\swk$ (\lem~\ref{lem:admiss-rules-labeled}) to ensure that the contexts of the premises in the labeled proof match.
\begin{flushleft}
\begin{tabular}{c c}
\AxiomC{$\rx \dar A$}
\AxiomC{$B \dar \ry$}
\RightLabel{$\impl$}
\BinaryInfC{$A \rightarrow B \dar \dneg\rx \circ \ry$}
\DisplayProof
&

$\leadsto$
\end{tabular}
\end{flushleft}
\begin{flushright}
\begin{tabular}{c}
\AxiomC{}
\RightLabel{\ih}
\UnaryInfC{$\dl_{w}(\rx \dar A)$}
\RightLabel{=}
\dottedLine
\UnaryInfC{$\dll^{w}(\rx) \seqcomp \dlr^{w}(A)$}
\RightLabel{=}
\dottedLine
\UnaryInfC{$\dll^{w}(\rx) \seqcomp (\empdata \sar w : A)$}
\RightLabel{$\swk$}
\dashedLine
\UnaryInfC{$\dll^{w}(\rx) \seqcomp \dlr^{w}(\ry) \seqcomp (\empdata \sar w : A)$}

\AxiomC{}
\RightLabel{\ih}
\UnaryInfC{$\dl_{w}(B \dar \ry)$}
\RightLabel{=}
\dottedLine
\UnaryInfC{$\dll^{w}(B) \seqcomp \dlr^{w}(\ry)$}
\RightLabel{=}
\dottedLine
\UnaryInfC{$(w : B \sar \empdata) \seqcomp \dlr^{w}(\ry)$}
\RightLabel{$\swk$}
\dashedLine
\UnaryInfC{$(w : B \sar \empdata) \seqcomp \dll^{w}(\rx) \seqcomp \dlr^{w}(\ry)$}

\RightLabel{$\impl$}
\BinaryInfC{$\dll^{w}(\rx) \seqcomp \dlr^{w}(\ry) \seqcomp (w : A \imp B \sar \empdata)$}
\RightLabel{=}
\dottedLine
\UnaryInfC{$\dlr^{w}(\dneg \rx) \seqcomp \dlr^{w}(\ry) \seqcomp (w : A \imp B \sar \empdata)$}
\RightLabel{=}
\dottedLine
\UnaryInfC{$\dlr^{w}(\dneg \rx \circ \ry) \seqcomp (w : A \imp B \sar \empdata)$}
\RightLabel{=}
\dottedLine
\UnaryInfC{$\dll^{w}(A \imp B) \seqcomp \dlr^{w}(\dneg \rx \circ \ry)$}
\RightLabel{=}
\dottedLine
\UnaryInfC{$\dl_{w}(A \rightarrow B \sar \dneg \rx \circ \ry)$}
\DisplayProof
\end{tabular}
\end{flushright}
\lem~\ref{lem:dis-equiv-iso} is invoked in translating the $\gr$ and $\prd$ inferences below.
\begin{center}
\begin{tabular}{c c c}
\AxiomC{$\bull \rx \dar A$}
\RightLabel{$\gr$}
\UnaryInfC{$\rx \dar \g A$}
\DisplayProof

&

$\leadsto$

&

\AxiomC{}
\RightLabel{\ih}
\UnaryInfC{$\dl_{w}(\bull \rx \dar A)$}
\RightLabel{$\iso$~(\lem~\ref{lem:dis-equiv-iso})}
\dottedLine
\UnaryInfC{$\dl_{w}(\rx \dar \bull A)$}
\RightLabel{=}
\dottedLine
\UnaryInfC{$\dll^{w}(\rx) \seqcomp \dlr^{w}(\bull A)$}
\RightLabel{=}
\dottedLine
\UnaryInfC{$\dll^{w}(\rx) \seqcomp (Rwu \sar u : A)$}
\RightLabel{$\gr$}
\UnaryInfC{$\dll^{w}(\rx) \seqcomp (\empdata \sar w : \g A)$}
\RightLabel{=}
\dottedLine
\UnaryInfC{$\dll^{w}(\rx) \seqcomp \dlr^{w}(\g A)$}
\RightLabel{=}
\dottedLine
\UnaryInfC{$\dl_{w}(\rx \dar \g A)$}
\DisplayProof
\end{tabular}
\end{center}
\begin{center}
\begin{tabular}{c c c}
\AxiomC{$X \dar A$}
\RightLabel{$\prd$}
\UnaryInfC{$\bull X \dar \p A$}
\DisplayProof

&

$\leadsto$

&

\AxiomC{}
\RightLabel{\ih}
\UnaryInfC{$\dl_{w}(\rx \dar A)$}
\RightLabel{=}
\dottedLine
\UnaryInfC{$\dll^{w}(\rx) \seqcomp \dlr^{w}(A)$}
\RightLabel{=}
\dottedLine
\UnaryInfC{$\dll^{w}(\rx) \seqcomp (\empdata \sar w : A)$}
\RightLabel{$\prii$}
\UnaryInfC{$\dll^{w}(\rx) \seqcomp (Ruw \sar u : \p A)$}
\RightLabel{=}
\dottedLine
\UnaryInfC{$\dll^{w}(\rx) \seqcomp (Ruw \sar \empdata) \seqcomp (\empdata \sar u : \p A)$}
\RightLabel{=}
\dottedLine
\UnaryInfC{$\dll^{w}(\bull \rx) \seqcomp \dlr^{w}(A)$}
\RightLabel{=}
\dottedLine
\UnaryInfC{$\dl_{u}(\bull \rx \dar A)$}
\RightLabel{$\iso$~(\lem~\ref{lem:ltrans-label-sub})}
\dottedLine
\UnaryInfC{$\dl_{w}(\bull \rx \dar A)$}
\DisplayProof
\end{tabular}
\end{center}
Since $\swk$ is not an explicit rule in $\gtktii$, we apply the qp-admissibility of $\swk$ below to resolve the case of translating $\dwl$.
\begin{center}
\begin{tabular}{c c c}
\AxiomC{$\rx \vdash \ry$}
\RightLabel{$\dwl$}
\UnaryInfC{$\rz \circ \rx \vdash \ry$}
\DisplayProof

&

$\leadsto$

&

\AxiomC{}
\RightLabel{\ih}
\UnaryInfC{$\dl_{w}(\rx \vdash \ry)$}
\RightLabel{=}
\dottedLine
\UnaryInfC{$\dll^{w}(\rx) \seqcomp \dlr^{w}(\ry)$}
\RightLabel{$\swk$}
\dashedLine
\UnaryInfC{$\dll^{w}(\rz) \seqcomp \dll^{w}(\rx) \seqcomp \dlr^{w}(\ry)$}
\RightLabel{=}
\dottedLine
\UnaryInfC{$\dll^{w}(\rz \circ \rx) \seqcomp \dlr^{w}(\ry)$}
\RightLabel{=}
\dottedLine
\UnaryInfC{$\dl_{u}(\rz \circ \rx \dar \ry)$}
\RightLabel{$\iso$~(\lem~\ref{lem:ltrans-label-sub})}
\dottedLine
\UnaryInfC{$\dl_{w}(\rz \circ \rx \dar \ry)$}
\DisplayProof
\end{tabular}
\end{center}
 As can be seen below, $\dcr$ relies on the qp-admissibility of $\slsub$, $\ctrl$, and $\ctrr$ in $\gtkt$. After $\dll^{w}(\rx) \seqcomp  \dlr^{w}(\ry) \seqcomp \dlr^{w}(\ry)$ is obtained, we apply $\slsub$ to identify the two isomorphic copies of $\dlr^{w}(\ry)$, making both copies \emph{identical}, and then the contraction rules $\ctrl$ and $\ctrr$ are applied to contract identical labeled formulae, yielding the desired conclusion. We note that these operations can be applied in a manner to produce a labeled polytree proof as the output (see \eg~\ref{eg:zip-polytree}).
\begin{center}
\begin{tabular}{c c c}
\AxiomC{$\rx \vdash \ry \circ \ry$}
\RightLabel{$\dcr$}
\UnaryInfC{$\rx \vdash \ry$}
\DisplayProof

&

$\leadsto$

&

\AxiomC{}
\RightLabel{\ih}
\UnaryInfC{$\dl_{w}(\rx \vdash \ry \circ \ry)$}
\RightLabel{=}
\dottedLine
\UnaryInfC{$\dll^{w}(\rx) \seqcomp \dlr^{w}(\ry \circ \ry)$}
\RightLabel{=}
\dottedLine
\UnaryInfC{$\dll^{w}(\rx) \seqcomp  \dlr^{w}(\ry) \seqcomp \dlr^{w}(\ry)$}
\RightLabel{$\slsub$, $\ctrl$, $\ctrr$}
\dashedLine
\UnaryInfC{$\dll^{w}(\rx) \seqcomp \dlr^{w}(\ry)$}
\RightLabel{=}
\dottedLine
\UnaryInfC{$\dl_{u}(\rx \dar \ry)$}
\DisplayProof
\end{tabular}
\end{center}
 It is straightforward to check that the above translation is in $\ptime$ relative to $\size{\dproof}$ since each display sequent in $\dproof$ is translated in $\ptime$ and each qp-admissible rule application transforms the target proof in $\ptime$ (\rmk~\ref{rmk:qp-admiss-PTIME}). Also, since each inference translated derives a labeled polytree sequent, $f(\dproof)$ will be a labeled polytree proof by \prp~\ref{prop:gtkt-upward-preserves-polytree}. 
 One can confirm in each case that $\qty{f(\dproof)} \leq \qty{\dproof}$ since either a single inference is performed in the target proof, an qp-admissible rule (that does not appear in the target proof) is applied, or a redundant, ineffectual inference occurs (see \lem~\ref{lem:dis-equiv-iso}). By \lem~\ref{lem:ltrans-label-sub}, we know that for each $\dseq$ in the input proof $\lgth{\dl_{w}(\dseq)} \leq \lgth{\dseq}$. Nevertheless, it is possible for the length of a labeled polytree sequent to increase in the target proof if the qp-admissibility of $\swk$ is applied. However, the length can only increase by at most $\wdth{\dproof} \times \qty{\dproof}$ since applying the qp-admissibility of $\swk$ in the target proof runs an algorithm (see \rmk~\ref{rmk:qp-admiss-PTIME}) that permutes $\swk$ upward and `deposits' at most $\wdth{\dproof}$ amount of syntactic material in each labeled sequent (increasing its length) at most $\qty{\dproof}$ many times. Putting all of the above together, we have that
$$
\size{f(\dproof)} = \qty{f(\dproof)} \times \wdth{f(\dproof)} \leq \qty{\dproof} \times \qty{\dproof} \times \wdth{\dproof}
$$
 showing that $\size{f(\dproof)} = \mathcal{O}(\size{\dproof}^{2})$.
\end{proof}

Let us now consider translating display proofs to labeled (polytree) proofs in the presence of primitive tense structural rules. We carry out this translation by considering an instance of a primitive tense structural rule $\ptsrd$ under a substitution $\sub$, and show that the instance can be translated via $\dl_{w}$ to an instance of $\ptsrl$ under a related substitution, denoted $\sub^{\dl}$. 
 This is sufficient to prove the main result of this section (\thm~\ref{thm:labeled-p-simulates-display}), which states that each proof in $\dktp$ can be translated in $\ptime$ into a strict labeled polytree proof in $\gtktp$ such that the quantity of the proof does not grow.

 Let us consider an instance of a primitive tense structural rule $\ptsrd$ under a substitution $\sub$, as shown below. 
\begin{center}
\AxiomC{$(\psi(B_{1}) \dar \vX)\sub$}
\AxiomC{$\ldots$}
\AxiomC{$(\psi(B_{m}) \dar \vX)\sub$}
\RightLabel{$\ptsrd$}
\TrinaryInfC{$(\psi(A) \dar \vX)\sub$}
\DisplayProof
\end{center}
 Let us assume that $\rx_{p_{1}}, \ldots, \rx_{p_{n}}, \vX$ are all of the structure variables occurring in the $\ptsrd$ rule above with $\sub$ a substitution of the following form:
$$
\sub := [X_{1} / \vX_{p_{1}}, \ldots, X_{n} / \vX_{p_{n}}, X / \vX].
$$
 To translate the above instance of $\ptsrd$ to an instance of $\ptsrl$ under $\dl_{w}$, we define the substitution $\sub^{\dl}$ accordingly:
$$
\sub^{\dl} := [\dl_{1}^{u_{1}}(X_{1}) / \varlseq_{p_{1}}^{\lu_{1}}, \ldots, \dl_{1}^{u_{n}}(X_{n}) / \varlseq_{p_{n}}^{\lu_{n}}, \dl_{2}^{w}(X) / \varlseq, u_{1} / \lu_{1}, \ldots, u_{n} / \lu_{n}, w / \lw].
$$
 In the definition of $\sub^{\dl}$ above, we let all labels in $\dl_{1}^{u_{1}}(X_{1})$, $\ldots$, $\dl_{1}^{u_{n}}(X_{n})$, and $\dl_{2}^{w}(X)$ be fresh with the exception of $u_{1}$, $\ldots$, $u_{n}$, and $w$, respectively. We note that if $\vX_{p_{i}}\sub = X_{i}$ for $1 \leq i \leq n$, then for each labeled sequent variable $\varlseq_{p_{i}}^{\lv_{1}}, \ldots, \varlseq_{p_{i}}^{\lv_{k}}$ in $\ptsrl$ annotated with $p_{i}$, we have $\varlseq_{p_{i}}^{\lv_{j}}\sub^{\dl} = \dl_{1}^{v_{j}}(X_{i})$ and $\lv_{j}\sub^{\dl} = v_{j}$ for $1 \leq j \leq k$. In other words, the substitution of a structure for a single structure variable in $\sub$ may give rise to multiple substitutions of labeled polytree sequents for multiple labeled sequent variables in $\sub^{\dl}$. 

\begin{example}\label{eg:ptsrl-ptsrd} We give an example of translating an instance of $\ptsrd$ into an instance of $\ptsrl$. The $\ptsrd$ rule (shown below left) corresponds to the simple primitive tense axiom $\f p \rightarrow \p (p \land \f p)$, where we let $A = \f p$ and $B = \p (p \land \f p)$. An instance of the rule under the substitution $\sub = [q / \vX_{p}, \p p \lor q / \vX]$ is shown below right.
\begin{center}
\begin{tabular}{c c}
\AxiomC{$\bull (\vX_{p} \circ \dneg \bull \dneg \vX_{p}) \dar \vX$}
\RightLabel{$\ptsrd$}
\UnaryInfC{$\dneg \bull \dneg \vX_{p} \dar \vX$}
\DisplayProof

&

\AxiomC{$\bull (q \circ \dneg \bull \dneg q) \dar \p p \lor q$}
\RightLabel{$\ptsrd$}
\UnaryInfC{$\dneg \bull \dneg q \dar \p p \lor q$}
\DisplayProof
\end{tabular}
\end{center}
 If we transform $\f p \rightarrow \p (p \land \f p)$ into $\ptsrl$, we obtain the rule shown below.
\begin{center}
\AxiomC{$(R\lw\lu \sar ) \seqcomp (R\lv\lw \sar ) \seqcomp (R\lv\lz \sar ) \seqcomp \vL_{p}^{\lu} \seqcomp \vL_{p}^{\lv} \seqcomp \vL_{p}^{\lz} \dar \vL$}
\RightLabel{$\ptsrd$}
\UnaryInfC{$(R\lw\lu \sar ) \seqcomp \vL_{p}^{\lu} \dar \vL$}
\DisplayProof
\end{center}
 To obtain the corresponding instance of $\ptsrd$, we apply the following substitution
$$
\sub^{\dl} := [\lseq_{u} / \vL_{p}^{\lu}, \lseq_{v} / \vL_{p}^{\lv}, \lseq_{z} / \vL_{p}^{\lz}, \lseq / \vL, u / \lu, v / \lv, z / \lz]
$$
 where $\lseq_{u} = (u : q \sar )$, $\lseq_{v} = (v : q \sar )$, $\lseq_{z} = (z : q \sar )$, and $\lseq = ( \sar w : \p p \lor r)$. Applying $\sub^{\dl}$ to $\ptsrl$ gives the following strict instance of the rule.
\begin{center}
\AxiomC{$Rwu, Rvw, Rvz, u : q, v : q, z : q \sar w : \p p \lor q$}
\RightLabel{$\ptsrl$}
\UnaryInfC{$Rwu,  u : q \sar w : \p p \lor q$}
\DisplayProof
\end{center}
 It is interesting to observe that $\ptsrd$ under $\sub$ translates via $\dl_{w}$ into $\ptsrl$ under $\sub^{\dl}$. Since $\ptsrl$ retains a copy of $\phi_{\lw}(A)$ in its premise, we must include a copy of $\psi(A)$ when translating the premise of $\ptsrd$ under $\sub$ as shown below.
$$
\dl_{w}(\underbrace{\bull (\dneg \bull \dneg q)}_{\psi(A)\sub} \circ \underbrace{\bull (q \circ \dneg \bull \dneg q)}_{\psi(B)\sub} \dar \vX) = Rwu, Rvw, Rvz, u : q, v : q, z : q \sar w : \p p \lor q
$$
Alternatively, the conclusion of $\ptsrd$ under $\sub$ translates to the conclusion of $\ptsrl$ under $\sub^{\dl}$ without modification: $\dl_{w}(\dneg \bull \dneg q \dar \p p \lor q) = Rwu,  u : q \sar w : \p p \lor q$.
\end{example}

As demonstrated in the example above, instances of $\ptsrd$ translate to strict instances of $\ptsrl$. We use this fact below to prove the main result of the section (\thm~\ref{thm:labeled-p-simulates-display}), however, it will be helpful to first prove the following lemma.


\begin{lemma}\label{lem:translating-antecedents} Let us consider an instance of a primitive tense structural rule as shown below. 
\begin{center}
\AxiomC{$(\psi(B_{1}) \dar \vX)\sub$}
\AxiomC{$\ldots$}
\AxiomC{$(\psi(B_{m}) \dar \vX)\sub$}
\RightLabel{$\ptsrd$}
\TrinaryInfC{$(\psi(A) \dar \vX)\sub$}
\DisplayProof
\end{center}
and let $C \in \{A, B_{1}, \ldots, B_{m}\}$. Then, 
\begin{enumerate}

\item $\dl^{w}_{1}(\psi(C)\sub) = \phi_{\lw}(C)\subdl$

\item $\dl_{w}((\psi(C) \dar \vX)\sub) = (\phi_{\lw}(C) \scomp \ulseq)\subdl$

\end{enumerate}
\end{lemma}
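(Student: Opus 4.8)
The plan is to prove (1) by induction on the structure of $C \in \langatoms$ and then obtain (2) from (1) by unwinding definitions. For (2): applying $\sub$ to the schematic display sequent $\psi(C) \dar \vX$ gives $\psi(C)\sub \dar \vX\sub$, so by \dfn~\ref{def:dis-to-lab-trans} we have $\dl_{w}((\psi(C) \dar \vX)\sub) = \dl^{w}_{1}(\psi(C)\sub) \scomp \dl^{w}_{2}(\vX\sub)$. Since $\vX\sub = X$ and $\subdl$ is defined so that the labeled sequent variable $\ulseq$ is sent to $\dl^{w}_{2}(X)$, we have $\dl^{w}_{2}(\vX\sub) = \ulseq\subdl$; combining this with (1) and the fact that substitutions distribute over $\scomp$ yields $\dl^{w}_{1}(\psi(C)\sub) \scomp \dl^{w}_{2}(\vX\sub) = \phi_{\lw}(C)\subdl \scomp \ulseq\subdl = (\phi_{\lw}(C) \scomp \ulseq)\subdl$, which is (2). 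So everything reduces to (1).

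For (1), I would prove the slightly stronger statement that for every label $x$ that $\subdl$ assigns to the root label variable of $\phi_{\lx}(C)$, $\dl^{x}_{1}(\psi(C)\sub) = \phi_{\lx}(C)\subdl$. The base case $C = \top$ is immediate, since $\psi(\top) = I$ and $\dl^{x}_{1}(I) = (\empdata \sar \empdata) = \phi_{\lx}(\top)\subdl$. The base case $C = p$ is essentially the definition of $\subdl$: $\psi(p) = \vX_{p}$ and $\phi_{\lx}(p) = \vL_{p}^{\lx}$, so $\dl^{x}_{1}(\psi(p)\sub) = \dl^{x}_{1}(\vX_{p}\sub)$, while $\phi_{\lx}(p)\subdl = \vL_{p}^{\lx}\subdl = \dl^{x}_{1}(\vX_{p}\sub)$ because $\subdl$ sends $\vL_{p}^{\lx}$ to $\dl^{x}_{1}(\vX_{p}\sub)$ and $\lx$ to $x$. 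For $C = D \land E$, the maps $\psi$, $\dl^{x}_{1}$ (distributing over $\circ$), $\phi_{\lx}$, and the application of $\subdl$ (distributing over $\scomp$) all split over the connective, so the case follows from the induction hypothesis applied to $D$ and $E$ at the label $x$. For $C = \f D$, unfolding $\psi(\f D) = \dneg \bull \dneg \psi(D)$ and chaining the clauses $\dl^{x}_{1}(\dneg Z) = \dl^{x}_{2}(Z)$, $\dl^{x}_{2}(\bull Z) = (Rxy \sar \empdata) \scomp \dl^{y}_{2}(Z)$, and $\dl^{y}_{2}(\dneg Z) = \dl^{y}_{1}(Z)$ gives $\dl^{x}_{1}(\psi(\f D)\sub) = (Rxy \sar \empdata) \scomp \dl^{y}_{1}(\psi(D)\sub)$, where $y$ is the fresh label introduced by the $\bull$-clause; on the other side $\phi_{\lx}(\f D) = (R\lx\lu \sar \empdata) \scomp \phi_{\lu}(D)$ for a fresh label variable $\lu$, so $\phi_{\lx}(\f D)\subdl = (R\,x\,(\lu\subdl) \sar \empdata) \scomp \phi_{\lu}(D)\subdl$; choosing $y := \lu\subdl$ and applying the induction hypothesis to $D$ with root label $y$ closes the case. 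The case $C = \p D$ is the mirror image, the single clause $\dl^{x}_{1}(\bull Z) = (Ryx \sar \empdata) \scomp \dl^{y}_{1}(Z)$ matching the $R\lu\lw$ orientation produced by $\phi_{\lw}(\p A)$.

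The step I expect to be the main obstacle is not any single case but the bookkeeping of fresh names needed to get \emph{literal} equality in (1), rather than equality only up to the isomorphism of \lem~\ref{lem:ltrans-label-sub}(3): the fresh labels chosen by the $\dl_{w}$ translation must be exactly the labels $\subdl$ assigns to the fresh label variables that $\phi_{\lw}$ spawns inside the recursion on $C$. This is legitimate because the recursion trees of $\psi(C)$ (as traversed by $\dl^{w}_{1}$) and of $\phi_{\lw}(C)$ are isomorphic: both branch at every $\land$ and descend through exactly one new label at every $\f$ and every $\p$, with matching edge orientations ($\f$ pointing from the current label to the new one, $\p$ pointing from the new one to the current). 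I would therefore fix a traversal order of $C$ once and for all and stipulate that at the $i$-th modal node the $\bull$-clause of $\dl$ uses precisely the label that $\subdl$ assigns to the $i$-th fresh label variable of $\phi_{\lw}(C)$; under this convention the inductive steps above are purely definitional. A minor point worth recording is that repeated atoms in $C$ (possible when $C$ is some $B_{i}$, though ruled out for $C = A$ by \dfn~\ref{def:primitive-tense-axiom}) cause no trouble: $\psi$ reuses the single structure variable $\vX_{p}$ at every occurrence of $p$, whereas $\phi_{\lw}$ introduces distinct, differently-annotated labeled sequent variables, but $\subdl$ sends each of the latter to the $\dl^{\cdot}_{1}$-translation of the same structure $\vX_{p}\sub$ from the relevant label, so the two sides of (1) still coincide.
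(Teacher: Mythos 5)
Your proposal is correct and follows essentially the same route as the paper: claim (2) is reduced to claim (1) by unwinding $\dl_{w}$ and the definition of $\subdl$, and claim (1) is proved by structural induction on $C$ with exactly the same base cases ($\top$, $p$) and inductive cases ($\land$, $\f$, $\p$), each closed by chaining the defining clauses of $\psi$, $\dl^{w}_{1}/\dl^{w}_{2}$, and $\phi_{\lw}$. Your explicit treatment of the fresh-label bookkeeping (aligning the labels spawned by $\dl$ with those $\subdl$ assigns to the fresh label variables of $\phi_{\lw}$) and of repeated atoms in the $B_{i}$ is a welcome sharpening of details the paper leaves implicit in its definition of $\subdl$, but it does not change the argument.
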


\begin{proof} We first prove claim 1 by induction on the complexity of $C$. 

\textit{Base case.} If $C$ is an atom $p_{i}$, then $\psi(p_{i}) = \vX_{p_{i}}$ and the proof shown below left demonstrates the claim. If $A$ is the logical constant $\top$, so that $\psi(A) = \psi(\top) = I$, then the proof shown below right demonstrates the claim.\\
\begin{minipage}{.45\textwidth}
\begin{eqnarray*}
& & \dl^{w}_{1}(\psi(p_{i})\sub)\\
& & \dl^{w}_{1}(\vX_{p_{i}}\sub)\\
& & \dl^{w}_{1}(X_{i})\\
& & \ulseq_{p_{i}}^{\lw}\subdl\\
& & \phi_{\lw}(p)\subdl
\end{eqnarray*}
\end{minipage}
\begin{minipage}{.45\textwidth}
\begin{eqnarray*}
& & \dl^{w}_{1}(\psi(\top)\sub) \\
& & \dl^{w}_{1}(I\sub)\\
& & \dl^{w}_{1}(I)\\
& & \empdata \sar \empdata\\
& & \phi_{\lw}(\top)\subdl
\end{eqnarray*}
\end{minipage}\\

\textit{Inductive step.} We consider the cases where $C = D \land E$, $C = \f D$, and $C = \p D$ in turn. First, if $A = B\land C$, so that $\psi(C) = \psi(D \land E) = \psi(D) \circ \psi(E)$, then the following proof establishes the case.\\
\begin{minipage}{.95\textwidth}
\begin{eqnarray*}
& & \dl^{w}_{1}(\psi(D\land E)\sub)\\
& & \dl^{w}_{1}((\psi(D) \circ \psi(E))\sub)\\
& & \dl^{w}_{1}(\psi(D)\sub) \seqcomp \dl^{w}_{1}(\psi(E)\sub)\\
& & \phi_{\lw}(D)\subdl \otimes \phi_{\lw}(E)\subdl\\
& & (\phi_{\lw}(D) \otimes \phi_{\lw}(E))\subdl\\
& & \phi_{\lw}(D \land E)\subdl
\end{eqnarray*}
\end{minipage}\\

If $C = \f D$, so that $\psi(C) = \psi(\f D) = \dneg\bullet\dneg \psi(D)$, then the proof shown below left establishes the case, and if $C = \p D$, so that $\psi(C) = \psi(\p D) = \bull \psi(D)$, then the proof shown below right establishes the case.\\
\begin{minipage}{.45\textwidth}
\begin{eqnarray*}
& & \dl^{w}_{1}(\psi(\f D)\sigma)\\
& & \dl^{w}_{1}((\dneg\bullet\dneg\psi(D))\sigma)\\
& & \dl^{w}_{1}(\dneg\bullet\dneg(\psi(D)\sigma))\\
& & (Rwu \sar \empdata) \otimes \dl^{u}_{1}(\psi(D)\sigma)\\
& & (Rwu \sar \empdata) \otimes \phi_{\lu}(D)\subdl\\
& & ((Rwu \sar \empdata) \otimes \phi_{\lu}(D))\subdl\\
& & \phi_{\lw}(\f D)\subdl
\end{eqnarray*}
\end{minipage}
\begin{minipage}{.45\textwidth}
\begin{eqnarray*}
& & \dl^{w}_{1}(\psi(\p D)\sigma)\\
& & \dl^{w}_{1}((\bullet\psi(D))\sigma)\\
& & \dl^{w}_{1}(\bullet(\psi(D)\sigma))\\
& & (Ruw \sar \empdata) \otimes \dl^{u}_{1}(\psi(D)\sigma)\\
& & (Ruw \sar \empdata) \otimes \phi_{\lu}(D)\subdl\\
& & ((Ruw \sar \empdata) \otimes \phi_{\lu}(D))\subdl\\
& & \phi_{\lw}(\p D)\subdl
\end{eqnarray*}
\end{minipage}\\

\noindent
The following establishes claim 2 and invokes claim 1 in the third to fourth line. 
\begin{eqnarray*}
& & \dl_{w}((\psi(C) \dar \rx)\sigma) \\
& & \dl_{w}(\psi(C)\sub \dar \rx\sub) \\
& & \dl_{1}^{w}(\psi(C)\sub) \scomp \dl_{2}^{w}(\rx\sub) \\
& & \phi_{\lw}(C)\subdl \scomp \dl_{2}^{w}(X) \\
& & \phi_{\lw}(C)\subdl \scomp \ulseq\subdl \\
& & (\phi_{\lw}(C) \scomp \ulseq)\subdl 
\end{eqnarray*}
\end{proof}

\begin{theorem}\label{thm:labeled-p-simulates-display} If $\proves{\dktp}{\dproof}{\dseq}$, then there exists a $\ptime$ function $f$ in $\size{\dproof}$ such that $\proves{\gtktp}{f(\dproof)}{\dl_{w}(\dseq)}$, $\size{f(\dproof)} = \mathcal{O}(\size{\dproof}^{2})$, and $f(\dproof)$ is a strict labeled polytree proof.
\end{theorem}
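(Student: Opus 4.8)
The plan is to reuse and extend the induction in the proof of \thm~\ref{thm:dkt-to-gtkt}. Given $\proves{\dktp}{\dproof}{\dseq}$, I would induct on $\qty{\dproof}$. The base case and every inductive case whose last inference is a rule of $\dkt$ are handled verbatim as in \thm~\ref{thm:dkt-to-gtkt}: they produce a strict labeled polytree proof whose quantity does not increase, using the qp-admissibility of $\swk$, $\slsub$, $\ctrl$, $\ctrr$ from \thm~\ref{lem:admiss-rules-labeled} (always in their strict forms, so the polytree property is preserved by \thm~\ref{thm:polytree-admiss}). Hence the only genuinely new case is when $\dproof$ ends with an instance of a primitive tense structural rule $\ptsrd$ under a substitution $\sub$, with conclusion $\dseq = (\psi(A)\dar\vX)\sub$.

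For that case I would translate the $\ptsrd$ instance into a strict instance of $\ptsrl$ under the companion substitution $\sub^{\dl}$ defined just before \lem~\ref{lem:translating-antecedents}. Explicitly: (i) apply the induction hypothesis to each premise $(\psi(B_i)\dar\vX)\sub$, obtaining a strict labeled polytree proof of $\dl_w((\psi(B_i)\dar\vX)\sub)$, which by \lem~\ref{lem:translating-antecedents}(2) equals $(\phi_{\lw}(B_i)\seqcomp\ulseq)\sub^{\dl}$; (ii) using the qp-admissibility of $\swk$ (\thm~\ref{lem:admiss-rules-labeled}), weaken a copy of the labeled polytree sequent $\phi_{\lw}(A)\sub^{\dl} = \dl^{w}_{1}(\psi(A)\sub)$ (\lem~\ref{lem:translating-antecedents}(1)) into each such proof, obtaining a proof of $(\phi_{\lw}(A)\seqcomp\phi_{\lw}(B_i)\seqcomp\vL)\sub^{\dl}$; (iii) apply $\ptsrl$ under $\sub^{\dl}$ to reach $(\phi_{\lw}(A)\seqcomp\vL)\sub^{\dl}$, which by \lem~\ref{lem:translating-antecedents}(2) taken with $C = A$ is exactly $\dl_w((\psi(A)\dar\vX)\sub) = \dl_w(\dseq)$. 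The weakening in step (ii) is always an instance of $\swk$: $\phi_{\lw}(A)\sub^{\dl}$ is a labeled polytree sequent rooted at $w$ all of whose remaining labels are fresh by the construction of $\sub^{\dl}$, so it meets $(\phi_{\lw}(B_i)\seqcomp\ulseq)\sub^{\dl}$ in at most the single node $w$; \thm~\ref{thm:polytree-admiss}(1) then keeps the polytree structure.

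It remains to confirm that the $\ptsrl$ instance in step (iii) is strict, i.e. satisfies $\coni$--$\convii$ of \dfn~\ref{def:strictness}, and to settle the complexity. Conditions $\coni$--$\coniii$ hold by the design of $\sub^{\dl}$: distinct occurrences of label variables are sent to distinct fresh labels (apart from roots); two labeled sequent variables carrying the same atom $p_i$ are sent to $\dl^{v}_{1}(X_i)$ and $\dl^{v'}_{1}(X_i)$, which are isomorphic by \lem~\ref{lem:ltrans-label-sub}(3); and by \lem~\ref{lem:ltrans-label-sub}(1) (applied to $X_i\dar I$) each such sequent is a polytree containing its root. The strictness conditions follow from the same bookkeeping: $\coniv$ from injectivity of $\sub^{\dl}$ on label variables, $\conv$ and $\convi$ from the freshness of every label introduced by the sub-translations $\dl^{v}_{1}(X_i)$ and $\dl^{w}_{2}(X)$ outside of their shared roots, and $\convii$ because each $\dl^{v}_{1}(X_i)$ is a labeled polytree sequent; strictness also certifies that the translated rule is the base rule $\ptsrl$ rather than one of its contractions. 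As in \thm~\ref{thm:dkt-to-gtkt}, translating each display sequent and running the qp-admissibility algorithm for $\swk$ are $\ptime$ in $\size{\dproof}$ (\rmk~\ref{rmk:qp-admiss-PTIME}), the quantity does not grow (one $\ptsrl$ inference per $\ptsrd$ inference, plus qp-admissible $\swk$), and the width of the output is at most $\qty{\dproof}\times\wdth{\dproof}$, so $\size{f(\dproof)} = \mathcal{O}(\size{\dproof}^{2})$; that $f(\dproof)$ is a strict labeled polytree proof then follows from \thm~\ref{thm:strict-is-polytree} together with \lem~\ref{lem:ltrans-label-sub}(1).

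The main obstacle is the strictness verification in step (iii), especially for axioms $A\rightarrow\bigvee_{j}B_{j}$ in which an atom $p$ occurs both in $A$ and in some $B_{j}$: there $\ptsrl$ carries several labeled sequent variables annotated with $p$, and one must check that the single display-side substitution $X_i/\vX_{p_i}$ induces, via $\sub^{\dl}$, a family of pairwise isomorphic but (outside their roots) pairwise label-disjoint instantiations of these variables, so that $\conii$, $\conv$, and $\convii$ hold simultaneously. The delicate part is tracking the fresh labels through the interleaved recursions of $\dl_w$ and $\phi_{\lw}$; \lem~\ref{lem:translating-antecedents} is precisely the tool that makes this matching go through mechanically, reducing the translation of $\ptsrd$ to the bookkeeping just described.
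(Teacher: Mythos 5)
Your proposal is correct and follows essentially the same route as the paper: extend the induction of \thm~\ref{thm:dkt-to-gtkt} with the $\ptsrd$ case, use \lem~\ref{lem:translating-antecedents} to identify the translated premises with $(\phi_{\lw}(B_i)\seqcomp\vL)\sub^{\dl}$, weaken in $\phi_{\lw}(A)\sub^{\dl}$ via qp-admissible $\swk$, apply a strict instance of $\ptsrl$ under $\sub^{\dl}$, and conclude with \thm~\ref{thm:strict-is-polytree} and the same complexity bookkeeping. Your explicit verification of conditions $\coni$--$\convii$ is in fact more detailed than the paper's, which dismisses it as straightforward.
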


\begin{proof} The proof is obtained by extending the inductive step of \thm~\ref{thm:dkt-to-gtkt} with the $\ptsrd$ case. Therefore, let us consider an instance of $\ptsrd$ as shown below top. We translate it into the $\ptsrl$ instance shown below bottom, where each equality step is obtained from \lem~\ref{lem:translating-antecedents}.
\begin{flushleft}
\begin{tabular}{c c}
\AxiomC{$(\psi(B_{1}) \dar \vX)\sub$}
\AxiomC{$\ldots$}
\AxiomC{$(\psi(B_{m}) \dar \vX)\sub$}
\RightLabel{$\ptsrd$}
\TrinaryInfC{$(\psi(A) \dar \vX)\sub$}
\DisplayProof

&

$\leadsto$
\end{tabular}
\end{flushleft}
\begin{flushright}
\AxiomC{$\dl_{w}((\psi(B_{1}) \dar \vX)\sub)$}
\RightLabel{=}
\dottedLine
\UnaryInfC{$(\phi_{\lw}(B_{1}) \seqcomp \ulseq)\subdl$}
\RightLabel{$\swk$}
\dashedLine
\UnaryInfC{$(\phi_{\lw}(A) \seqcomp \phi_{\lw}(B_{1}) \seqcomp \ulseq)\subdl$}
\AxiomC{$\cdots$}
\AxiomC{$\dl_{w}((\psi(B_{m}) \dar \vX)\sub)$}
\RightLabel{=}
\dottedLine
\UnaryInfC{$(\phi_{\lw}(B_{m}) \seqcomp \ulseq)\subdl$}
\RightLabel{$\swk$}
\dashedLine
\UnaryInfC{$(\phi_{\lw}(A) \seqcomp \phi_{\lw}(B_{m}) \seqcomp \ulseq)\subdl$}
\RightLabel{$\ptsrl$}
\TrinaryInfC{$(\phi_{\lw}(A) \seqcomp \ulseq)\subdl$}
\RightLabel{=}
\dottedLine
\UnaryInfC{$\dl_{w}((\psi(A) \dar \vX)\sub)$}
\DisplayProof
\end{flushright}
By the definition of $\dl_{w}$ and $\sub^{\dl}$, it is straightforward to confirm that the conditions $\coni$--$\convii$ hold, showing that the application of $\ptsrl$ is strict. Since our proof extends the proof of \thm~\ref{thm:dkt-to-gtkt}, only introduces strict $\ptsrl$ instances, and the target proof $f(\dproof)$ derives a labeled polytree sequent, we know $f(\dproof)$ will be a strict labeled polytree proof by \thm~\ref{thm:strict-is-polytree}. Moreover, as seen above, each instance of $\ptsrd$ translates to a single instance of $\ptsrl$ and may invoke the qp-admissibility of $\swk$. By an argument similar to the one at the end of \thm~\ref{thm:dkt-to-gtkt}, one can confirm that the translation is computed in $\ptime$ and $\size{f(\dproof)} = \mathcal{O}(\size{\dproof}^{2})$.
\end{proof}

\section{From Labeled to Display Proofs}\label{section-6}

We now study the reverse translation, showing how to transform strict labeled polytree proofs into display proofs. We find that translating a strict labeled polytree proof $\lproof$ gives a display proof $f(\lproof)$ with a higher quantity due to the introduction and use of display rules and structural rules in $f(\lproof)$. Nevertheless, the translation occurs within $\ptime$, and so we definitively confirm that the largest class of analytic (i.e. cut-free) display calculi for tense logics can be polynomially simulated by analytic labeled sequent calculi. 
 This result solves an open problem first discussed by Restall in 2006~\cite{Res06}, which was also left as an open problem in~\cite{CiaLyoRamTiu21}, concerning the existence of embeddings between labeled and display proofs.\footnote{In~\cite{CiaLyoRamTiu21}, it was shown how to translate labeled proofs into `one-sided' display calculus (i.e. shallow nested calculus) proofs for a small subclass of primitive tense logics, namely, extension of $\kt$ with \emph{path axioms} of the form $\langle ? \rangle_{1} \cdots \langle ? \rangle_{n} p \rightarrow \langle ? \rangle_{n+1} p$ with $\langle ? \rangle_{i} \in \{\f,\p\}$ for $1 \leq i \leq n+1$.}


\begin{theorem}\label{thm:gtkt-to-dkt}
If $\proves{\gtkt}{\lproof}{\lseq}$ and $\lproof$ is a labeled polytree proof, then there exists a $\ptime$ function $f$ in $\size{\lproof}$ and polynomial $p$ such that $\proves{\dkt}{f(\lproof)}{\ld_{w}(\lseq)}$ and $\size{f(\lproof)} = \mathcal{O}(p(\size{\lproof}))$.
\end{theorem}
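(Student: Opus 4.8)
The plan is to prove \thm~\ref{thm:gtkt-to-dkt} by induction on $\qty{\lproof}$, dualizing the strategy of \thm~\ref{thm:dkt-to-gtkt}: for each rule of $\gtkt$ applied at the root of $\lproof$, we exhibit a bounded-length derivation in $\dkt$ whose conclusion is $\ld_{w}(\lseq)$ (for an appropriate choice of label $w$) built from the display proofs of the translated premises. The key technical tool is \lem~\ref{lem:ld-equivalence-relative-to-labels} together with \lem~\ref{lem:D1-D2-Equivalence-Rules} and the display theorem (\thm~\ref{thm:display-theorem}): since $\lproof$ is a labeled polytree proof, every sequent appearing in it is a labeled polytree sequent, so $\ld_{w}$ is defined on it, and the translation may be taken relative to whatever label $w$ is most convenient for the rule being simulated, because any two choices of base label yield deductively equivalent display sequents reachable by polynomially many reversible inferences. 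First I would set up the induction and handle the initial rules $\id$, $\botl$, $\topr$, which translate directly to initial sequents of $\dkt$ modulo some $\il,\ir,\ql,\qr$ inferences to massage the $I$'s introduced by $\ld_{w}$ on empty flat components.

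The inductive step splits into the propositional rules, the modal rules, and the weakening/contraction/substitution cases. For the propositional rules ($\negl,\negr,\conl,\conr,\disl,\disr,\impl,\impr$) I would display the principal formula's node as the entire antecedent or consequent using the display theorem, apply \lem~\ref{lem:D1-D2-Equivalence-Rules} to switch between $\ldl^{w}$ and $\ldr^{w}$ presentations as needed, fire the corresponding $\dkt$ logical rule (possibly with $\dcl,\dcr$ to duplicate contexts in the branching cases $\conr,\disl,\impl$), and then undo the display manipulations; each such block is of length bounded by a polynomial in $\lgth{\lseq}$ by \lem~\ref{lem:ld-equivalence-relative-to-labels} and \rmk~\ref{rmk:dis-seq-polynomial-in-lpt}. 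The modal rules are the interesting ones: $\gr$ (and symmetrically $\hr$) introduces a fresh child $u$ with $Rwu$ and moves $u:A$ there; under $\ld_{w}$ the premise's subpolytree rooted at $u$ becomes a substructure of the form $\dneg\bull\dneg(\cdots)$ inside $\ldr^{w}$, and the $\dkt$ rule $\gr$ together with $\deri,\derii,\deriii$ from \fig~\ref{fig:derivable-rules-display} suffices to reassemble it; the side condition that $u$ is fresh guarantees the subpolytree sequent $\lseq\subseq{w}{u}$ is well-defined and disjoint, so $\ld_{u}$ applied there composes correctly with the rest via \dfn~\ref{def:sub-sequent}. The rules $\fl,\pl$ are handled dually using $\fl,\pl,\hr$ of $\dkt$, while $\gl,\hl,\fr,\pr$ keep the relational atom and merely add a labeled formula at an existing node, so they translate to a logical rule applied after displaying the relevant node, again using \lem~\ref{lem:ld-equivalence-relative-to-labels} to reposition the base label onto the node where the formula must be introduced.

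The cases of the qp-admissible structural rules require a bit more care. When a propositional or modal rule needs the contexts of its premises to agree (for $\conr,\disl,\impl$), the labeled premises already share context by construction, so no extra work beyond $\dcl/\dcr$ is needed; but if the given labeled proof used $\ctrl,\ctrr$, $\swk$, or $\slsub$ explicitly, I would translate $\ctrl,\ctrr$ to $\dcl,\dcr$ after displaying the contracted node, $\swk$ to $\dwl,\dwr$ (weakening in a substructure, which is sound by conditions $\icon,\lptcon$ since $\swk$ only attaches a polytree at a single node), and $\slsub$ — which merges two isomorphic subpolytrees rooted at siblings (or co-parents) — to a sequence of $\dcl$ or $\dcr$ applications on the merged structure, exploiting that $\ld_{w}$ sends two isomorphic subpolytree sequents to the same structure so that the merge becomes literal structural contraction; this is where \lem~\ref{lem:ltrans-label-sub}(3) / the isomorphism invariance is essential. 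I expect the main obstacle to be bookkeeping the complexity bound: each simulated step can invoke \lem~\ref{lem:ld-equivalence-relative-to-labels} to reposition the base label, which costs $p(\lgth{\lseq})$ reversible inferences, and there are $\qty{\lproof}$ steps, with each display sequent of length at most $p'(\wdth{\lproof})$ by \rmk~\ref{rmk:dis-seq-polynomial-in-lpt}; composing these, the quantity of $f(\lproof)$ is bounded by $\qty{\lproof}\cdot p(\wdth{\lproof})$ and the width by $p'(\wdth{\lproof})$, so $\size{f(\lproof)}$ is polynomial in $\size{\lproof}$, and the translation is clearly computable in $\ptime$ since each block is computed by a $\ptime$ subroutine (the algorithm of \lem~\ref{lem:ld-equivalence-relative-to-labels} plus a constant number of rule applications). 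Verifying that the polynomials actually compose without blowup — i.e. that repositioning the base label does not itself recursively trigger further repositioning — is the delicate point, and I would address it by fixing, once and for all at the start of the induction, a canonical base label (e.g. the root of the polytree) and only moving it locally when a modal rule forces it, bounding the displacement by the depth of the polytree, which is at most $\lgth{\lseq}$.
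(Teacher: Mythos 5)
Your proposal matches the paper's proof in all essentials: the same induction on $\qty{\lproof}$ with a case analysis on the last rule, the same per-case strategy of repositioning the base label via \lem~\ref{lem:ld-equivalence-relative-to-labels}, displaying the principal node with display rules (and \lem~\ref{lem:D1-D2-Equivalence-Rules} where polarities must be switched), firing the corresponding $\dkt$ rule and undoing the manipulations, and the same complexity accounting via \rmk~\ref{rmk:dis-seq-polynomial-in-lpt} and the polynomial cost of each repositioning block. The only slips are cosmetic and would self-correct in writing out the cases: the base case needs $\dwl$/$\dwr$ to weaken in the translation of the context $\vL$ (not just $I$-massaging); for $\gr$ the fresh child sits under $\bull(\cdots)$ in $\ldr^{w}$ rather than $\dneg\bull\dneg(\cdots)$ (the latter is the incoming-edge/$\hr$ case); and the cases for $\ctrl$, $\ctrr$, $\wk$, $\lsub$ never arise since these are only admissible, not rules of $\gtkt$.
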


\begin{proof} We prove the result by induction on the quantity of $\lproof$ and make a case distinction on the last rule applied. We consider a representative number of cases 
and remark that the remaining cases are similar. In each case, we translate a suitable $w$-partition of the input labeled polytree sequent. By \lem~\ref{lem:ld-partition-invariance} we are permitted to select any $w$-partition to translate.

\textit{Base case.} We show the $\id$ case below and let $\rel = \rel_{1},\rel_{2}$, $\Gamma = \Gamma_{1},\Gamma_{2}$, and $\Delta = \Delta_{1},\Delta_{2}$. 

\begin{flushleft}
\begin{tabular}{c c}
\AxiomC{}
\RightLabel{$\id$}
\UnaryInfC{$\rel,\Gamma, w:p \sar w:p,\Delta$}
\RightLabel{=}
\dottedLine
\UnaryInfC{$(\rel_{1},\Gamma_{1}, w:p \sar \Delta_{1}) \ptcomp{w} (\rel_{2},\Gamma_{2} \sar  w:p, \Delta_{2})$}
\DisplayProof

&

$\leadsto$
\end{tabular}
\end{flushleft}
\begin{flushright}
\AxiomC{}
\RightLabel{$\id$}
\UnaryInfC{$p \dar p$}
\RightLabel{$\dwl$}
\UnaryInfC{$p \circ \ldl^{w}(\lseq_{1}) \dar p$}
\RightLabel{$\dwr$}
\UnaryInfC{$p \circ \ldl^{w}(\lseq_{1}) \dar p \circ \ldr^{w}(\lseq_{2})$}
\RightLabel{=}
\dottedLine
\UnaryInfC{$\ldl^{w}(\rel_{1},\Gamma_{1}, w:p \sar \Delta_{1}) \dar \ldr^{w}(\rel_{2},\Gamma_{2} \sar  w:p, \Delta_{2})$}
\RightLabel{=}
\dottedLine
\UnaryInfC{$\ld_{w}(\rel,\Gamma, w:p \sar w:p,\Delta)$}
\DisplayProof
\end{flushright}

\textit{Inductive step.} We show the $\negr$, $\conl$, $\gr$, $\fr$, and $\pl$ cases.\\

$\negr$. In the second to third step in the target proof, we apply the definition of $\ld_{w}$ and let $(\rel, \Gamma, w :A \sar \Delta) = \lseq_{1} \ptcomp{w} (\rel_{2}, \Gamma_{2}, w:A \sar \Delta_{2})$.

\begin{flushleft}
\begin{tabular}{c c}
\AxiomC{$\rel, \Gamma, w :A \sar \Delta$}
\RightLabel{$\negr$}
\UnaryInfC{$\rel, \Gamma \sar w : \neg A, \Delta$}
\DisplayProof

&

$\leadsto$
\end{tabular}
\end{flushleft}
\begin{flushright}
\AxiomC{}
\RightLabel{IH}
\UnaryInfC{$\ld_{u}(\rel, \Gamma, w :A \sar \Delta)$}
\RightLabel{$\equiv$~(\lem~\ref{lem:ld-equivalence-relative-to-labels})}
\dashedLine
\UnaryInfC{$\ld_{w}(\rel, \Gamma, w :A \sar \Delta)$}
\RightLabel{=}
\dottedLine
\UnaryInfC{$\ldl^{w}(\lseq_{1}) \dar \ldr^{w}(\rel_{2}, \Gamma_{2}, w:A \sar \Delta_{2})$}
\RightLabel{=}
\dottedLine
\UnaryInfC{$\ldl^{w}(\lseq_{1}) \dar \dneg A \circ \ldr^{w}(\rel_{2}, \Gamma_{2} \sar \Delta_{2})$}
\RightLabel{$\driii$}
\UnaryInfC{$\ldl^{w}(\lseq_{1}) \circ \dneg \ldr^{w}(\rel_{2}, \Gamma_{2} \sar \Delta_{2}) \dar \dneg A$}
\RightLabel{$\negr$}
\UnaryInfC{$ \ldl^{w}(\lseq_{1}) \circ \dneg \ldr^{w}(\rel_{2}, \Gamma_{2} \sar \Delta_{2}) \dar \neg A$}
\RightLabel{$\driii$}
\UnaryInfC{$\ldl^{w}(\lseq_{1}) \dar \neg A \circ \ldr^{w}(\rel_{2}, \Gamma_{2} \sar \Delta_{2})$}
\RightLabel{=}
\dottedLine
\UnaryInfC{$\ldl^{w}(\lseq_{1}) \dar \ldr^{w}(\rel_{2}, \Gamma_{2}, w : \neg A \sar \Delta_{2})$}
\RightLabel{=}
\dottedLine
\UnaryInfC{$\ld_{w}(\rel, \Gamma \sar w : \neg A, \Delta)$}
\RightLabel{$\equiv$~(\lem~\ref{lem:ld-equivalence-relative-to-labels})}
\dashedLine
\UnaryInfC{$\ld_{u}(\rel, \Gamma \sar w : \neg A, \Delta)$}
\DisplayProof
\end{flushright}

$\conl$. In the second to third step in the target proof, we apply the definition of $\ld_{w}$ and let $(\rel, \Gamma, w :A, w :B \sar \Delta) =  (\rel_{1}, \Gamma_{1}, w :A, w :B \sar \Delta_{1}) \ptcomp{w} \lseq_{2}$.

\begin{flushleft}
\begin{tabular}{c c}
\AxiomC{$\rel, \Gamma, w :A, w :B \sar \Delta$}
\RightLabel{$\conl$}
\UnaryInfC{$\rel, \Gamma, w :A \wedge B  \sar \Delta$}
\DisplayProof

&

$\leadsto$
\end{tabular}
\end{flushleft}
\begin{flushright}
\AxiomC{}
\RightLabel{IH}
\UnaryInfC{$\ld_{u}(\rel, \Gamma, w :A, w :B \sar \Delta)$}
\RightLabel{$\equiv$~(\lem~\ref{lem:ld-equivalence-relative-to-labels})}
\dashedLine
\UnaryInfC{$\ld_{w}(\rel, \Gamma, w :A, w :B \sar \Delta)$}
\RightLabel{=}
\dottedLine
\UnaryInfC{$\ldl^{w}(\rel_{1}, \Gamma_{1}, w :A, w :B \sar \Delta_{1}) \dar \ldr^{w}(\lseq_{2})$}
\RightLabel{=}
\dottedLine
\UnaryInfC{$\ldl^{w}(\rel_{1}, \Gamma_{1} \sar \Delta_{1}) \circ A \circ B \dar \ldr^{w}(\lseq_{2})$}
\RightLabel{$\drii$}
\UnaryInfC{$A \circ B \dar \dneg \ldl^{w}(\rel_{1}, \Gamma_{1} \sar \Delta_{1}) \circ \ldr^{w}(\lseq_{2})$}
\RightLabel{$\conl$}
\UnaryInfC{$A \land B \dar \dneg \ldl^{w}(\rel_{1}, \Gamma_{1} \sar \Delta_{1}) \circ \ldr^{w}(\lseq_{2})$}
\RightLabel{$\drii$}
\UnaryInfC{$\ldl^{w}(\rel_{1}, \Gamma_{1} \sar \Delta_{1}) \circ A \land B \dar \ldr^{w}(\lseq_{2})$}
\RightLabel{=}
\dottedLine
\UnaryInfC{$\ldl^{w}(\rel_{1}, \Gamma_{1}, w : A \land B \sar \Delta_{1}) \dar \ldr^{w}(\lseq_{2})$}
\RightLabel{=}
\dottedLine
\UnaryInfC{$\ld_{w}(\rel, \Gamma, w : A \land B \sar \Delta)$}
\RightLabel{$\equiv$~(\lem~\ref{lem:ld-equivalence-relative-to-labels})}
\dashedLine
\UnaryInfC{$\ld_{u}(\rel, \Gamma, w : A \land B \sar \Delta)$}
\DisplayProof
\end{flushright}

$\gr$. In the second to third step in the target proof, we apply the definition of $\ld_{w}$ and let $(\rel, Rwu, \Gamma \sar u : A, \Delta) =  \lseq_{1} \ptcomp{w} (\rel_{2}, Rwv, \Gamma_{2} \sar v : A, \Delta_{2})$.

\begin{flushleft}
\begin{tabular}{c c}
\AxiomC{$\rel, Rwu, \Gamma \sar u : A, \Delta$}
\RightLabel{$\gr$}
\UnaryInfC{$\rel, \Gamma \sar w : \g A, \Delta$}
\DisplayProof

&

$\leadsto$
\end{tabular}
\end{flushleft}
\begin{flushright}
\AxiomC{}
\RightLabel{IH}
\UnaryInfC{$\ld_{u}(\rel, Rwv, \Gamma \sar v : A, \Delta)$}
\RightLabel{$\equiv$~(\lem~\ref{lem:ld-equivalence-relative-to-labels})}
\dashedLine
\UnaryInfC{$\ld_{w}(\rel, Rwv, \Gamma \sar v : A, \Delta)$}
\RightLabel{=}
\dottedLine
\UnaryInfC{$\ldl^{w}(\lseq_{1}) \dar \ldr^{w}(\rel_{2}, Rwv, \Gamma_{2} \sar v : A, \Delta_{2})$}
\RightLabel{=}
\dottedLine
\UnaryInfC{$\ldl^{w}(\lseq_{1}) \dar \bull A \circ \ldr^{w}(\rel_{2}, \Gamma_{2} \sar \Delta_{2})$}
\RightLabel{$\driii$}
\UnaryInfC{$\ldl^{w}(\lseq_{1}) \circ \dneg \ldr^{w}(\rel_{2}, \Gamma_{2} \sar \Delta_{2}) \dar \bull A$}
\RightLabel{$\drix$}
\UnaryInfC{$\bull (\ldl^{w}(\lseq_{1}) \circ \dneg \ldr^{w}(\rel_{2}, \Gamma_{2} \sar \Delta_{2})) \dar A$}
\RightLabel{$\gr$}
\UnaryInfC{$\ldl^{w}(\lseq_{1}) \circ \dneg \ldr^{w}(\rel_{2}, \Gamma_{2} \sar \Delta_{2}) \dar \g A$}
\RightLabel{$\driii$}
\UnaryInfC{$\ldl^{w}(\lseq_{1}) \dar \g A \circ \ldr^{w}(\rel_{2}, \Gamma_{2} \sar \Delta_{2})$}
\RightLabel{=}
\dottedLine
\UnaryInfC{$\ldl^{w}(\lseq_{1}) \dar \ldr^{w}(\rel_{2}, \Gamma_{2} \sar w : \g A, \Delta_{2})$}
\RightLabel{=}
\dottedLine
\UnaryInfC{$\ld_{w}(\rel, \Gamma \sar w : \g A, \Delta)$}
\RightLabel{$\equiv$~(\lem~\ref{lem:ld-equivalence-relative-to-labels})}
\dashedLine
\UnaryInfC{$\ld_{u}(\rel, \Gamma \sar w : \g A, \Delta)$}
\DisplayProof
\end{flushright}

$\fr$. In the second to third step in the target proof, we apply the definition of $\ld_{w}$ and let $(\rel, Rwv, \Gamma \sar w : \f A, v : A, \Delta) =  \lseq_{1} \ptcomp{w} (\rel_{2}, \Gamma_{2} \sar v : A, \Delta_{2})$. By \prp~\ref{prop:dis-derivable-rules} we are allowed to apply the $\deri$ and $\deriii$ rules in the target proof. Also, in the ninth and tenth lines, we let $\Gamma_{2} = \Gamma_{2} \restriction w, \Gamma_{2}'$ and $\Delta_{2} = \Delta_{2} \restriction w, \Delta_{2}'$.

\begin{flushleft}
\begin{tabular}{c c}
\AxiomC{$\rel, Rwv, \Gamma \sar w : \f A, v : A, \Delta$}
\RightLabel{$\fr$}
\UnaryInfC{$\rel, Rwv, \Gamma \sar w : \f A, \Delta$}
\DisplayProof

&

$\leadsto$
\end{tabular}
\end{flushleft}
\begin{flushright}
\AxiomC{}
\RightLabel{IH}
\UnaryInfC{$\ld_{u}(\rel, \Gamma \sar v : A, \Delta)$}
\RightLabel{$\equiv$~(\lem~\ref{lem:ld-equivalence-relative-to-labels})}
\dashedLine
\UnaryInfC{$\ld_{v}(\rel, \Gamma \sar v : A, \Delta)$}
\RightLabel{=}
\dottedLine
\UnaryInfC{$\ldl^{v}(\lseq_{1}) \dar \ldr^{v}(\rel_{2}, \Gamma_{2} \sar v : A, \Delta_{2})$}
\RightLabel{=}
\dottedLine
\UnaryInfC{$\ldl^{v}(\lseq_{1}) \dar A \circ \ldr^{v}(\rel_{2}, \Gamma_{2} \sar \Delta_{2})$}
\RightLabel{$\driii$}
\UnaryInfC{$\dneg \ldr^{v}(\rel_{2}, \Gamma_{2} \sar \Delta_{2}) \circ \ldl^{v}(\lseq_{1}) \dar A$}
\RightLabel{$\fr$}
\UnaryInfC{$\dneg\bull\dneg (\dneg \ldr^{v}(\rel_{2}, \Gamma_{2} \sar \Delta_{2}) \circ \ldl^{v}(\lseq_{1})) \dar \f A$}
\RightLabel{$\deri$}
\UnaryInfC{$\dneg \ldr^{v}(\rel_{2}, \Gamma_{2} \sar \Delta_{2}) \circ \ldl^{v}(\lseq_{1}) \dar \dneg \bull \dneg \f A$}
\RightLabel{$\driii$}
\UnaryInfC{$\ldl^{v}(\lseq_{1}) \dar \dneg \bull \dneg \f A \circ  \ldr^{v}(\rel_{2}, \Gamma_{2} \sar \Delta_{2})$}
\RightLabel{=}
\dottedLine
\UnaryInfC{$\ldl^{v}(\lseq_{1}) \dar \dneg \bull \dneg \f A \circ \dneg \bull \dneg (\dneg (\Gamma_{2} \restriction w) \circ (\Delta_{2} \restriction w)) \circ \ldr^{v}(\rel, \Gamma_{2}' \sar \Delta_{2}')$}
\RightLabel{$\deriii$}
\dashedLine
\UnaryInfC{$\ldl^{v}(\lseq_{1}) \dar \dneg \bull \dneg (\f A \circ \dneg (\Gamma_{2} \restriction w) \circ (\Delta_{2} \restriction w)) \circ \ldr^{v}(\rel, \Gamma_{2}' \sar \Delta_{2}')$}
\RightLabel{=}
\dottedLine
\UnaryInfC{$\ldl^{v}(\lseq_{1}) \dar \ldr^{v}(\rel_{2}, Rwv, \Gamma_{2} \sar w : \f A, \Delta_{2})$}
\RightLabel{=}
\dottedLine
\UnaryInfC{$\ld_{v}(\rel, Rwv, \Gamma \sar w : \f A, \Delta)$}
\RightLabel{$\equiv$~(\lem~\ref{lem:ld-equivalence-relative-to-labels})}
\dashedLine
\UnaryInfC{$\ld_{u}(\rel, Rwv, \Gamma \sar w : \f A, \Delta)$}
\DisplayProof
\end{flushright}

$\pl$. In the second to third step in the target proof, we apply the definition of $\ld_{w}$ and let $(\rel, Rvw, \Gamma, v : A \sar \Delta) =  (\rel_{1}, Rvw, \Gamma_{1} \sar v : A, \Delta_{1}) \ptcomp{w} \lseq_{2}$.

\begin{flushleft}
\begin{tabular}{c c}
\AxiomC{$\rel, Rvw, \Gamma, v : A \sar \Delta$}
\RightLabel{$\pl$}
\UnaryInfC{$\rel, \Gamma, w : \p A \sar \Delta$}
\DisplayProof

&

$\leadsto$
\end{tabular}
\end{flushleft}
\begin{flushright}
\AxiomC{}
\RightLabel{IH}
\UnaryInfC{$\ld_{u}(\rel, Rvw, \Gamma \sar v : A, \Delta)$}
\RightLabel{$\equiv$~(\lem~\ref{lem:ld-equivalence-relative-to-labels})}
\dashedLine
\UnaryInfC{$\ld_{w}(\rel, Rvw, \Gamma \sar v : A, \Delta)$}
\RightLabel{=}
\dottedLine
\UnaryInfC{$\ldl^{w}(\rel_{1}, Rvw, \Gamma_{1} \sar v : A, \Delta_{1}) \dar \ldr^{w}(\lseq_{2})$}
\RightLabel{=}
\dottedLine
\UnaryInfC{$\bull A \circ \ldl^{w}(\rel_{1}, \Gamma_{1} \sar \Delta_{1}) \dar \ldr^{w}(\lseq_{2})$}
\RightLabel{$\dri$}
\UnaryInfC{$\bull A \dar \ldr^{w}(\lseq_{2}) \circ \dneg \ldl^{w}(\rel_{1}, \Gamma_{1} \sar \Delta_{1})$}
\RightLabel{$\drix$}
\UnaryInfC{$A \dar \bull (\ldr^{w}(\lseq_{2}) \circ \dneg \ldl^{w}(\rel_{1}, \Gamma_{1} \sar \Delta_{1}))$}
\RightLabel{$\pl$}
\UnaryInfC{$\p A \dar \ldr^{w}(\lseq_{2}) \circ \dneg \ldl^{w}(\rel_{1}, \Gamma_{1} \sar \Delta_{1})$}
\RightLabel{$\dri$}
\UnaryInfC{$\ldl^{w}(\rel_{1}, \Gamma_{1} \sar \Delta_{1}) \circ \p A \dar \ldr^{w}(\lseq_{2})$}
\RightLabel{=}
\dottedLine
\UnaryInfC{$\ldl^{w}(\rel_{1}, \Gamma_{1}, w : \p A \sar \Delta_{1}) \dar \ldr^{w}(\lseq_{2})$}
\RightLabel{=}
\dottedLine
\UnaryInfC{$\ld_{w}(\rel, \Gamma, w : \p A \sar \Delta)$}
\RightLabel{$\equiv$~(\lem~\ref{lem:ld-equivalence-relative-to-labels})}
\dashedLine
\UnaryInfC{$\ld_{u}(\rel, \Gamma,  w : \p A \sar \Delta)$}
\DisplayProof
\end{flushright}
 This concludes the proof of the inductive step. Let us now analyze the complexity of the translation and the relative sizes of the input and target proofs.
 
 Observe that each (logical) rule application in $\lproof$ is considered once and adds a single logical rule application in the target proof. Furthermore, each application of \lem~\ref{lem:ld-equivalence-relative-to-labels} adds only polynomially many inferences to the target proof and only a constant number of display and structural rules are added to the target proof beyond these inferences, meaning for some polynomial $p'$, $\qty{f(\lproof)} = \mathcal{O}(p'(\qty{\lproof}))$. Also, each display sequent in the target proof is polynomial in $\wdth{\lproof}$ by \rmk~\ref{rmk:dis-seq-polynomial-in-lpt}. Therefore, the translation takes place in $\ptime$ and $\size{f(\lproof)} = \mathcal{O}(p(\size{\lproof}))$, for some polynomial $p$.
\end{proof}

We now show how to extend the above translation to cover \emph{strict} primitive tense structural rules. To carry out this translation, we suppose that we have a strict instance of $\ptsrl$ under a substitution $\sub$, and show that this rule instance can be translated via $\ld_{w}$ to an instance of $\ptsrd$ under a substitution $\sub^{\ld}$. Therefore, let the following be a strict instance of $\ptsrl$:
\begin{center}
\AxiomC{$(\phi_{\lw}(A) \seqcomp \phi_{\lw}(B_{1}) \seqcomp \vL)\sub$}
\AxiomC{$\ldots$}
\AxiomC{$(\phi_{\lw}(A) \seqcomp \phi_{\lw}(B_{m}) \seqcomp \vL)\sub$}
\RightLabel{$\ptsrl$}
\TrinaryInfC{$(\phi_{\lw}(A) \seqcomp \vL)\sub$}
\DisplayProof
\end{center}
 where the substitution $\sub$ is defined accordingly:
$$
\sub := [\lseq_{1} / \ulseq_{p_{1}}^{\lu_{1}}, \ldots, \lseq_{k} / \ulseq_{p_{1}}^{\lu_{k}},
 \ldots,
\lseq_{m} / \ulseq_{p_{n}}^{\lu_{m}}, \ldots, \lseq_{\ell} / \ulseq_{p_{n}}^{\lu_{\ell}},
\lseq / \ulseq, u_{1} / \lu_{1}, \ldots, u_{n} / \lu_{n}]
$$ 
 To translate the above instance of $\ptsrl$ to an instance of $\ptsrd$ under $\ld_{w}$, we define the substitution $\sub^{\ld}$ accordingly:
$$
\sub^{\ld} := [\ld_{1}^{u_{1}}(\lseq_{1}) / \ux_{p_{1}}, \ldots, \ld_{1}^{u_{m}}(\lseq_{m}) / \ux_{p_{n}}, \ld_{2}^{w}(\lseq) / \ux]
$$
 We note that multiple labeled sequent variables $\varlseq_{p_{i}}^{\lv_{1}}, \ldots, \varlseq_{p_{i}}^{\lv_{k}}$ annotated with the same propositional atom $p_{i}$ may occur in $\sub$, while in $\sub^{\ld}$ only a single structure variable $\vX_{p_{i}}$ annotated with $p_{i}$ will occur. By condition $\conii$ 
 (see \dfn~\ref{def:primitive-tense-rule-labeled}), we know that $\varlseq_{p_{i}}^{\lv_{1}}\sub \iso  \cdots \iso \varlseq_{p_{i}}^{\lv_{k}}\sub$. 
 Therefore, $\ld_{1}^{v_{1}}(\varlseq_{p_{i}}^{\lv_{1}}\sub) = \cdots = \ld_{1}^{v_{k}}(\varlseq_{p_{i}}^{\lv_{k}}\sub)$, meaning we can arbitrarily choose any labeled sequent $\varlseq_{p_{i}}^{\lv_{j}}\sub$ with $1 \leq j \leq k$ to use to substitute $\ld_{1}^{v_{j}}(\varlseq_{p_{i}}^{\lv_{j}}\sub)$ for $\vX_{p_{i}}$ in $\sub^{\ld}$. As shown above, we simply choose the first labeled sequent mentioned in $\sub$ in defining the substitution $\sub^{\ld}$. The interested reader should consult \eg~\ref{eg:ptsrl-ptsrd} in the previous section to see how an instance of $\ptsrd$ is obtained from an instance of $\ptsrl$ by reversing the example and viewing $\sub$ in the example as $(\sub^{\dl})^{\ld}$. 

\begin{lemma}\label{lem:translating-antecedents-ii} Let us consider a strict instance of a primitive tense structural rule as shown below.
\begin{center}
\AxiomC{$(\phi_{\lw}(A) \seqcomp \phi_{\lw}(B_{1}) \seqcomp \vL)\sub$}
\AxiomC{$\ldots$}
\AxiomC{$(\phi_{\lw}(A) \seqcomp \phi_{\lw}(B_{m}) \seqcomp \vL)\sub$}
\RightLabel{$\ptsrl$}
\TrinaryInfC{$(\phi_{\lw}(A) \seqcomp \vL)\sub$}
\DisplayProof
\end{center}
and let $C \in \{A, B_{1}, \ldots, B_{m}\}$. Then, 
\begin{enumerate}

\item $\ld^{w}_{1}(\phi_{\lw}(C)\sub) = \psi(C)\subld$

\item $\ld_{w}((\phi_{\lw}(C) \seqcomp \ulseq)\sub) = (\psi(C) \dar \ux)\subld$

\end{enumerate}
\end{lemma}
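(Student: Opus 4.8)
The plan is to prove this lemma in close analogy with \lem~\ref{lem:translating-antecedents}, the two new wrinkles being that $\ld^{w}_{1}$ and $\ld^{w}_{2}$ are the (more involved) labeled-to-display translations $\ldl^{w}$ and $\ldr^{w}$ of \dfn~\ref{def:translation-lab-to-dis}, and that a single structure variable $\ux_{p_{i}}$ in the target $\ptsrd$ rule now receives the translation of one among possibly several pairwise isomorphic labeled polytree sequents $\varlseq_{p_{i}}^{\lv_{j}}\sub$. First I would establish claim~1 by induction on the structure of $C$, which is legitimate since $A$ and each $B_{i}$ lie in $\langatoms$ by \prp~\ref{prop:primitive-tense-normal-form}, so $C$ is built from atoms, $\top$, $\land$, $\f$, and $\p$. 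For the base cases: if $C = \top$ then $\phi_{\lw}(\top)\sub = (\empdata \sar \empdata)$ and $\ld^{w}_{1}(\empdata \sar \empdata) = I = \psi(\top)\subld$; if $C = p_{i}$ then $\phi_{\lw}(p_{i})\sub = \varlseq_{p_{i}}^{\lw}\sub$, which by condition $\conii$ of \dfn~\ref{def:primitive-tense-rule-labeled} is isomorphic to the labeled polytree sequent selected in the definition of $\subld$, the isomorphism matching the two distinguished labels by $\coniii$; since $\ld^{w}_{1}$ returns a structure in which labels do not appear, isomorphic sequents under a root-respecting isomorphism translate identically, hence $\ld^{w}_{1}(\varlseq_{p_{i}}^{\lw}\sub) = \ux_{p_{i}}\subld = \psi(p_{i})\subld$. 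In the inductive step I would treat $C = D \land E$ via $\phi_{\lw}(D \land E) = \phi_{\lw}(D) \seqcomp \phi_{\lw}(E)$ and $\psi(D \land E) = \psi(D) \circ \psi(E)$, noting that $\ld^{w}_{1}$ sends a composition of two labeled polytree sequents sharing only the root $\lw\sub$ (which is the situation here, by the strictness conditions $\conv$--$\convi$) to the $\circ$ of their translations; and $C = \f D$, $C = \p D$ via $\phi_{\lw}(\f D)\sub = (R(\lw\sub)(\lu\sub) \sar \empdata) \seqcomp \phi_{\lu}(D)\sub$, unfolding the clause of $\ld^{w}_{1}$ that handles an outgoing (resp.\ incoming) edge at $\lw\sub$, using $\psi(\f D) = \dneg \bull \dneg \psi(D)$ (resp.\ $\psi(\p D) = \bull \psi(D)$), and invoking the induction hypothesis on $\phi_{\lu}(D)\sub$.

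With claim~1 available, claim~2 is short. By \lem~\ref{lem:strict-ptsrl-is-polytree} the sequent $(\phi_{\lw}(C) \seqcomp \ulseq)\sub$ is a labeled polytree sequent, and by the strictness conditions $\convi$ and $\convii$ of \dfn~\ref{def:strictness} the decomposition $\phi_{\lw}(C)\sub \ptcomp{\lw\sub} \ulseq\sub$ is a genuine $w$-partition (both parts are labeled polytree sequents and they meet exactly at $\lw\sub$). Applying \dfn~\ref{def:translation-lab-to-dis} to this partition and then claim~1 gives
\[
\ld_{w}\bigl((\phi_{\lw}(C) \seqcomp \ulseq)\sub\bigr) = \ld^{w}_{1}(\phi_{\lw}(C)\sub) \dar \ld^{w}_{2}(\ulseq\sub) = \psi(C)\subld \dar \ux\subld = (\psi(C) \dar \ux)\subld,
\]
where the penultimate equality uses claim~1 on the antecedent translation and the definition of $\subld$ (which sets $\ux\subld = \ld^{w}_{2}(\ulseq\sub)$) on the consequent translation.

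The main obstacle I expect is the label-bookkeeping in claim~1 rather than anything conceptual: one must argue carefully that $\ld^{w}_{1}$ is insensitive to which of the isomorphic copies $\varlseq_{p_{i}}^{\lv_{1}}, \ldots, \varlseq_{p_{i}}^{\lv_{k}}$ is used (this is exactly where $\conii$ together with the label-independence of $\ldl^{w}$ come in), and that the fresh-label discipline built into $\subld$ lines up with the choice of fresh labels inside $\ld^{w}_{1}$ at each $\f$/$\p$ step. A secondary, purely cosmetic point is that $\ld^{w}_{1}$ literally inserts a term $(\Gamma \rest w) \circ \dneg (\Delta \rest w)$ at every node, so at nodes carrying no formulae it emits a unit $I \circ \dneg I$; accordingly the displayed equalities are to be read modulo the (reversible, polynomially bounded) deductive equivalences that absorb such units, exactly as in the example following \dfn~\ref{def:translation-lab-to-dis}. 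Neither point obstructs the subsequent use of the lemma in the main translation theorem.
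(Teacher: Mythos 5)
Your proposal is correct and follows essentially the same route as the paper's own proof: induction on the structure of $C$ for claim~1 (with the atom case resolved via condition $\conii$ and the label-insensitivity of $\ldl^{w}$, and the $\land$ case resolved via the strictness conditions), followed by claim~2 as a direct consequence, exactly as in the proof of \lem~\ref{lem:translating-antecedents-ii} and its display-to-labeled counterpart \lem~\ref{lem:translating-antecedents}. Your closing remarks about isomorphic copies and the $I \circ \dneg I$ units correctly identify the only points of bookkeeping, and they match how the paper handles (or glosses over) them.
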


\begin{proof} We prove claim 1 by induction on the complexity of $C$. Claim 2 follows from claim 1. 

\textit{Base case.} If $A$ is an atom $p_{i}$, so that $\phi_{\lw}(p_{i}) = \ulseq_{p_{i}}^{\lw}$ and $\sub(\ulseq_{p_{i}}^{\lw}) = \lseq_{i}$, then the proof shown below right proves the case. If $A$ is the logical constant $\top$, so that $\phi_{\lw}(A) = \phi_{\lw}(\top) = (\empdata \sar \empdata)$, then the proof shown below right establishes the case.

\begin{minipage}{.45\textwidth}
\begin{eqnarray*}
& & \ld^{w}_{1}(\phi_{\lw}(p_{i})\sub) \\
& & \ld^{w}_{1}(\ulseq_{p_{i}}^{\lw}\sub)\\
& & \ld^{w}_{1}(\lseq_{i})\\
& & \ux_{p_{i}}\subld\\
& & \psi(p_{i})\subld
\end{eqnarray*}
\end{minipage}
\begin{minipage}{.45\textwidth}
\begin{eqnarray*}
& & \ld^{w}_{1}(\phi_{\lw}(\top)\sigma) \\
& & \ld^{w}_{1}(\empdata \sar \empdata)\\
& & I \\
& & I\subld\\
& & \psi(\top)\subld
\end{eqnarray*}
\end{minipage}\\

\textit{Inductive step.} We consider the cases where $C = D \land E$, $C = \f D$, and $C = \p D$ in turn. In the first case, we have that $\phi_{\lw}(C) = \phi_{\lw}(D \land E) = \phi_{\lw}(D) \scomp \phi_{\lw}(E)$, and the proof below establishes the claim in this case. The third to fourth step below follows from the fact that $\ptsrl$ is strict; in particular, $\sub$ satisfies conditions $\coniv$, $\conv$, and $\convii$ (see \dfn~\ref{def:strictness}).\\
\begin{minipage}{.95\textwidth}
\begin{eqnarray*}
& & \ld^{w}_{1}(\phi_{\lw}(D \land E)\sigma)\\
& & \ld^{w}_{1}((\phi_{\lw}(D) \seqcomp \phi_{\lw}(E))\sigma)\\
& & \ld^{w}_{1}(\phi_{\lw}(D)\sub \seqcomp \phi_{\lw}(E)\sub)\\
& & \ld^{w}_{1}(\phi_{\lw}(D)\sigma) \circ \ld^{w}_{1}(\phi_{\lw}(E)\sigma)\\
& & \psi(D)\subld \circ \psi(E)\subld\\
& & (\psi(D) \circ \psi(E))\subld\\
& & \psi(D \land E)\subld
\end{eqnarray*}
\end{minipage}\\

The proof of the second case is shown below left, where $\phi_{\lw}(A) = \phi_{\lw}(\f D) = (R\lw\lu \sar \empdata) \scomp \phi_{\lu}(D)$, and the proof of the final case is shown below right, where $\phi_{\lw}(A) = \phi_{\lw}(\p D) = (R\lu\lw \sar \empdata) \scomp \phi_{\lu}(D)$.

\begin{minipage}{.45\textwidth}
\begin{eqnarray*}
& & \ld_{w}(\phi_{\lw}(\f D)\sigma)\\
& & \ld_{w}(((R\lw\lu \sar \empdata) \scomp \phi_{\lu}(D))\sigma)\\
& & \ld_{w}((R\lw\lu \sar \empdata) \scomp \phi_{\lu}(D)\sigma)\\
& & \dneg \bull  \dneg  \ld_{u}(\phi_{\lu}(D)\sigma)\\
& & \dneg \bull  \dneg  \psi(D)\subld\\
& & (\dneg \bull  \dneg  \psi(D))\subld\\
& & \psi(\f D)\subld
\end{eqnarray*}
\end{minipage}
\begin{minipage}{.45\textwidth}
\begin{eqnarray*}
& & \ld_{w}(\phi_{w}(\p D)\sigma)\\
& & \ld_{w}(((Ruw \sar \empdata) \scomp \phi_{u}(D))\sub)\\
& & \ld_{w}((Ruw \sar \empdata) \scomp \phi_{u}(D)\sub)\\
& & \bull   \ld_{u}(\phi_{u}(D)\sub)\\
& & \bull   \psi(D)\subld\\
& & (\bull   \psi(D))\subld\\
& & \psi(\p D)\subld
\end{eqnarray*}
\end{minipage}\\
\end{proof}


\begin{theorem}\label{thm:gtktp-to-dktp} If $\proves{\gtktp}{\lproof}{\lseq}$ and $\lproof$ is a strict labeled polytree proof, then there exists a $\ptime$ function $f$ in $\size{\lproof}$ and polynomial $p$ such that $\proves{\dktp}{f(\lproof)}{\ld_{w}(\lseq)}$ and $\size{f(\lproof)} = \mathcal{O}(p(\size{\lproof}))$.
\end{theorem}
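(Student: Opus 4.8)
The plan is to extend the induction in the proof of Theorem~\ref{thm:gtkt-to-dkt} with a single new case: the translation of a strict instance of a primitive tense structural rule $\ptsrl$ under a substitution $\sub$ (of the form displayed immediately before Lemma~\ref{lem:translating-antecedents-ii}) into an application of the corresponding display rule $\ptsrd$ under the substitution $\subld$. The one subtlety is that, unlike $\ptsrd$, each premise of $\ptsrl$ retains a copy of $\phi_{\lw}(A)$, and this copy must be absorbed into the context of $\ptsrd$ rather than deleted, since weakening runs the wrong way.

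For each premise $(\phi_{\lw}(A) \seqcomp \phi_{\lw}(B_{j}) \seqcomp \vL)\sub$ I would work with the $w$-partition (for $w := \lw\sub$) that places $(\phi_{\lw}(A) \seqcomp \phi_{\lw}(B_{j}))\sub$ on the antecedent side and $\vL\sub$ on the consequent side. This is a legitimate partition since, by the strictness conditions of Definition~\ref{def:strictness}, Lemma~\ref{lem:phi-gives-polytree}, and the argument of Lemma~\ref{lem:strict-ptsrl-is-polytree}, both halves are labeled polytree sequents sharing only the label $w$, and Lemma~\ref{lem:ld-partition-invariance} allows any $w$-partition to be chosen. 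Unfolding $\ld_{w}$ on this partition, using that $\ldl^{w}$ distributes over $\ptcomp{w}$ modulo a bounded, size-independent derivation in $\dkt$ (a single associativity/De~Morgan rearrangement at the shared root, since only the flat sequent at $w$ is split between the two halves), and invoking claim~1 of Lemma~\ref{lem:translating-antecedents-ii} twice shows that each translated premise is deductively equivalent to $\psi(A)\subld \circ \psi(B_{j})\subld \dar \vX\subld$; one application of the display rule $\drii$ rewrites this as $\psi(B_{j})\subld \dar \dneg \psi(A)\subld \circ \vX\subld$. Thus the inductive hypothesis, together with Lemma~\ref{lem:ld-equivalence-relative-to-labels} (to pass from the label supplied by the hypothesis to $w$) and the constant overhead above, yields display proofs of $\psi(B_{j})\subld \dar \dneg \psi(A)\subld \circ \vX\subld$ for every $j$.

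I would then apply $\ptsrd$ with its context variable $\vX$ instantiated to $\dneg \psi(A)\subld \circ \vX\subld$ and its remaining structure variables instantiated as in $\subld$, obtaining $\psi(A)\subld \dar \dneg \psi(A)\subld \circ \vX\subld$; a fixed sequence of display rules followed by the contraction rule $\dcl$ then produces $\psi(A)\subld \dar \vX\subld$, which by claim~2 of Lemma~\ref{lem:translating-antecedents-ii} is exactly $\ld_{w}$ of the conclusion of the $\ptsrl$ instance (converted to the label required by the surrounding proof via Lemma~\ref{lem:ld-equivalence-relative-to-labels}). The complexity accounting of Theorem~\ref{thm:gtkt-to-dkt} then carries over unchanged: each $\ptsrl$ instance contributes one $\ptsrd$ instance plus a constant number of display and structural rules beyond the polynomially many reversible inferences introduced by Lemma~\ref{lem:ld-equivalence-relative-to-labels}, and every display sequent produced has length polynomial in $\wdth{\lproof}$ by Remark~\ref{rmk:dis-seq-polynomial-in-lpt}; hence $f$ is computable in $\ptime$ and $\size{f(\lproof)} = \mathcal{O}(p(\size{\lproof}))$ for a polynomial $p$.

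The main obstacle is precisely this format mismatch between $\ptsrl$ and $\ptsrd$: one cannot map the $\ptsrl$ premises directly onto the $\ptsrd$ premises because of the retained copy of $\phi_{\lw}(A)$. The remedy---folding the translation of $\phi_{\lw}(A)$ into the consequent context of $\ptsrd$ and then discharging the resulting duplicate antecedent copy with $\dcl$---is the crux, and the two supporting facts that need careful verification against Definitions~\ref{def:translation-lab-to-dis} and~\ref{def:strictness} are that $\ldl^{w}$ distributes over $\ptcomp{w}$ up to display equivalence with overhead bounded independently of sequent size, and that the required $w$-partitions are always available under strictness.
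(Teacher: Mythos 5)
Your proposal is correct and follows essentially the same route as the paper's proof: both translate each premise via Lemma~\ref{lem:translating-antecedents-ii} to $(\psi(A) \circ \psi(B_{j}) \dar \vX)\subld$, use $\drii$ to fold $\dneg\psi(A)\subld$ into the consequent context before applying $\ptsrd$, and then discharge the resulting duplicate antecedent copy of $\psi(A)\subld$ with $\drii$ and $\dcl$, with the same complexity accounting. The "crux" you identify---absorbing the retained copy of $\phi_{\lw}(A)$ into the context rather than weakening it away---is precisely the mechanism the paper uses.
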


\begin{proof} We prove the result by extending the inductive step of \thm~\ref{thm:gtkt-to-dkt} with the $\ptsrl$ case. Let us suppose we are translating a strict labeled polytree proof $\lproof$, and let us consider a strict instance of $\ptsrl$ as shown below top. The desired translation is obtained as shown below bottom, where the $\iso$ steps follow from \lem~\ref{lem:ld-equivalence-relative-to-labels} and the equality steps are obtained by \lem~\ref{lem:translating-antecedents-ii}.
\begin{flushleft}
\begin{tabular}{c c}
\AxiomC{$(\phi_{\lw}(A) \seqcomp \phi_{w}(B_{1}) \seqcomp \ulseq)\sub$}
\AxiomC{$\ldots$}
\AxiomC{$(\phi_{\lw}(A) \seqcomp \phi_{w}(B_{m}) \seqcomp \ulseq)\sub$}
\RightLabel{$\ptsrl$}
\TrinaryInfC{$(\phi_{\lw}(A) \seqcomp \ulseq)\sub$}
\DisplayProof

&

$\leadsto$
\end{tabular}
\end{flushleft}
\begin{flushright}
\AxiomC{$\ld_{u}((\phi_{\lw}(A) \seqcomp \phi_{\lw}(B_{1}) \seqcomp \ulseq)\sub)$}
\RightLabel{$\iso$}
\dottedLine
\UnaryInfC{$\ld_{w}((\phi_{\lw}(A) \seqcomp \phi_{\lw}(B_{1}) \seqcomp \ulseq)\sub)$}
\RightLabel{=}
\dottedLine
\UnaryInfC{$(\psi(A) \circ \psi(B_{1}) \dar \vX)\subld$}
\RightLabel{$\drii$}
\UnaryInfC{$(\psi(B_{1}) \dar \dneg \psi(A)  \circ \vX)\subld$}

\AxiomC{$\ldots$}

\AxiomC{$\ld_{u}((\phi_{\lw}(A) \seqcomp \phi_{w}(B_{m}) \seqcomp \ulseq)\sub)$}
\RightLabel{$\iso$}
\dottedLine
\UnaryInfC{$\ld_{w}((\phi_{\lw}(A) \seqcomp \phi_{w}(B_{m}) \seqcomp \ulseq)\sub)$}
\RightLabel{=}
\dottedLine
\UnaryInfC{$(\psi(A) \circ \psi(B_{m}) \dar \vX)\subld$}
\RightLabel{$\drii$}
\UnaryInfC{$(\psi(B_{m}) \dar \dneg \psi(A)  \circ \vX)\subld$}

\RightLabel{$\ptsrd$}
\TrinaryInfC{$(\psi(A) \dar \dneg \psi(A) \circ \vX)\subld$}
\RightLabel{$\drii$}
\UnaryInfC{$(\psi(A) \circ \psi(A)  \dar \vX)\subld$}
\RightLabel{$\dcl$}
\UnaryInfC{$(\psi(A) \dar \vX)\subld$}
\RightLabel{=}
\dottedLine
\UnaryInfC{$\ld_{w}((\psi(A) \dar \uy)\sub)$}
\DisplayProof
\end{flushright}
 By an argument similar to the one at the end of \thm~\ref{thm:gtkt-to-dkt}, it follows that $f(\lproof)$ is computable in $\ptime$ and $\size{f(\lproof)} = \mathcal{O}(p(\size{\lproof}))$, for some polynomial $p$.
\end{proof}

\section{Concluding Remarks}\label{conclusion}

We have established a $\ptime$ correspondence between the maximal analytic class of display calculi for tense logics (namely, all display calculi for primitive tense logics) and labeled sequent systems, thus solving an open problem posed in~\cite{CiaLyoRamTiu21,Res06}. Our translations show that the space of cut-free display proofs is polynomially equivalent to the space of strict labeled proofs. We also find that translating from display to labeled preserves (or decreases) the quantity of the proof, while potentially increasing the width; the reverse translation brings about a polynomial increase in the size of the proof. Moreover, we made the interesting observation that display structures translate to labeled polytree sequents and that all display equivalent sequents translate to the same labeled sequent (up to isomorphism), showing that labeled sequents are a canonical representation of display sequents that enjoy less bureaucracy.

A fascinating avenue of future research would be to study translations and the computational relationship between strict labeled proofs and non-strict labeled proofs (which might include labeled sequents \emph{not} in a polytree form). Composing such translations with the translations in this paper would confirm if it is possible or not to translate any analytic display proof into \emph{any} analytic labeled proof, or vice versa, in $\ptime$. Transforming a strict labeled proof into a non-strict one, where labeled sequents may include disconnectivity or (un)directed cycles, is simple to obtain by applying (1) the hp-admissibility of $\wk$ to add disjoint regions to labeled sequents in the proof, and (2) by applying the hp-admissibility of $\lsub$ to identify labels, bringing about the formation of cycles. Showing the converse, i.e. that any non-strict labeled proof can be obtained from a strict proof via applications of $\wk$ and $\lsub$ (up to admissible permutations of rules in a proof), appears far more challenging and may require new proof-theoretic techniques (or, operations beyond weakenings, label substitutions, and permutations).

Nevertheless, the above suggested correspondence between strict and non-strict labeled proofs would imply that every non-strict labeled proof is essentially a homomorphic image of a strict proof. This would tell us that display calculi are a special fragment of labeled calculi that generate proofs homomorphically mappable  into any other proof of the same conclusion (up to admissible permutations of rules). It is not clear if transforming non-strict proofs into strict proofs is possible in $\ptime$ however (assuming $\ptime \neq \np$) as finding the correct `homomorphic mapping' from a strict to a non-strict proof may require an $\np$ algorithm (cf.~\cite{HelBes04}).



\bibliographystyle{abbrv}
\bibliography{bibliography}

\end{document}